\numberwithin{equation}{section}
\crefname{section}{Section}{Sections}
\crefname{subsection}{\S}{\S\S}
\theoremstyle{plain}
\newtheorem{lemma}{Lemma}[section]
\newtheorem{proposition}[lemma]{Proposition}
\newtheorem{corollary}[lemma]{Corollary}
\newtheorem{theorem}[lemma]{Theorem}
\newtheorem{conjecture}[lemma]{Conjecture}
\theoremstyle{nonumberplain}
\theoremstyle{plain}
\newtheorem{definition}[lemma]{Definition}
\newtheorem{remark}[lemma]{Remark}
\crefname{definition}{definition}{definitions}
\crefname{ex}{example}{examples}
\crefname{remark}{remark}{remarks}
\crefname{convention}{convention}{conventions}
\crefname{lemma}{lemma}{lemmas}
\crefname{proposition}{proposition}{propositions}
\crefname{corollary}{corollary}{corollaries}
\crefname{theorem}{theorem}{theorems}
\crefname{assumption}{assumption}{Assumptions}
\crefname{equation}{}{}
\theoremstyle{nonumberplain}
\newtheorem{proof}{Proof}
\newcommand\pf[1]{\newtheorem{#1}{Proof of \Cref{#1}}}
\newcommand\bA{{\mathbb A}}
\newcommand\bC{{\mathbb C}}
\newcommand\bP{{\mathbb P}}
\newcommand\bQ{{\mathbb Q}}
\newcommand\bZ{{\mathbb Z}}
\newcommand\cA{{\mathcal A}}
\newcommand\cC{{\mathcal C}}
\newcommand\cS{{\mathcal S}}
\newcommand{\qedhere}{\mbox{}\hfill\ensuremath{\blacksquare}}
\title{Symbolic dynamics and rotation symmetric Boolean functions}
\author{Alexandru Chirvasitu and Thomas Cusick}
\begin{document}

\date{}

\newcommand{\Addresses}{{
  \bigskip
  \footnotesize
  
  \textsc{Department of Mathematics, University at Buffalo, Buffalo,
    NY 14260-2900, USA}\par\nopagebreak \textit{E-mail address}:
  \texttt{achirvas@buffalo.edu}

  \medskip
  
  \textsc{Department of Mathematics, University at Buffalo, Buffalo,
    NY 14260-2900, USA}\par\nopagebreak \textit{E-mail address}:
  \texttt{cusick@buffalo.edu}

}}

\maketitle

\begin{abstract}
  We identify the weights $wt(f_n)$ of a family $\{f_n\}$ of rotation symmetric Boolean functions with the cardinalities of the sets of $n$-periodic points of a finite-type shift, recovering the second author's result that said weights satisfy a linear recurrence. Similarly, the weights of idempotent functions $f_n$ defined on finite fields can be recovered as the cardinalities of curves over those fields and hence satisfy a linear recurrence as a consequence of the rationality of curves' zeta functions. Weil's Riemann hypothesis for curves then provides additional information about $wt(f_n)$. We apply our results to the case of quadratic functions and considerably extend the results in an earlier paper of ours.
\end{abstract}

\noindent {\em Key words: shift, subshift, finite type, Weil conjectures, Riemann hypothesis for algebraic varieties, Boolean function, weight, finite field, cyclotomic polynomial, cyclotomic field, Galois group}

\vspace{.5cm}

\noindent{MSC 2010: 06E30; 37B50; 11G20}


\section*{Introduction}

Boolean functions $f:\{0,1\}^n\to \{0,1\}$ have long held the interest of the cryptographic community due to their many applications to that field; see e.g. \cite{osw,for-av,ny-perf-zrE,ny-perf-pxx,sz-non,ad-sym} (to give just a few examples that could not possibly do the subject justice) or the discussion and numerous references cited in \cite{cs-bk}.

Among all Boolean functions, the ones with the best cryptographic properties tend to be {\it balanced}, i.e. take the values $0,1$ equally many times. For that reason, one is more generally interested in the {\it weight} $wt(f)$ of a Boolean function, meaning the cardinality of the preimage $f^{-1}(1)$.

We are concerned here with Boolean functions that are {\it rotation symmetric} in the sense of \cite{pq-rot}: those $f$ as above that are invariant under permuting the $n$ variables cyclically. Such a function is expressible as
\begin{equation}\label{eq:38}
  f_n(x_0,\cdots,x_{n-1})=\sum_{i\;\mathrm{mod}\; n}x_i x_{i+a_1}\cdots x_{i+a_{d-1}},\ x_i\in \{0,1\}. 
\end{equation}
In fact, having fixed the $a_j$, the formula \Cref{eq:38} gives rise to a {\it family} of rotation symmetric functions $f_n$ in $n$ variables respectively (see \Cref{subse.bool} below). The starting point for the current paper is the phenomenon constituting the main theorem of \cite{csk-rec2} (which in turn builds on earlier work in the same direction \cite{cs-fast,bc-rec,bcp}) whereby the weights $wt(f_n)$ attached to a family of rotation symmetric Boolean functions satisfy a linear recurrence of the form
\begin{equation*}
  wt(f_{n+N}) = a_{N-1} wt(f_{n+N-1})+\cdots+a_1 wt(f_n),\ \forall \text{ sufficiently large } n. 
\end{equation*}

The motivation here was a desire to understand that recurrence phenomenon in light of other analogous results in the literature to the effect that sequences tracking the sizes of various meaningful sets are linearly recurrent. The general paradigm is that said sequences $N_n$ are collected into a single mathematical object
\begin{equation*}
\zeta(s):=\exp\left(\sum_{n\ge 1} \frac{N_n}n s^n\right).  
\end{equation*}
called a {\it zeta function} and the desired recurrence follows from the rationality of that power series since said rationality, in fact, will say even more:
\begin{equation*}
  N_n = \sum_{i}\alpha_i^n - \sum_j \beta_j^n
\end{equation*}
for algebraic integers $\alpha_i$ and $\beta_j$ known as the {\it characteristic values} of the zeta function.

We consider two instances of this setup, both shedding light on Boolean functions in slightly different ways:
\begin{itemize}
\item zeta functions of dynamical systems \cite{bl}, where $N_n$ is the number of $n$-periodic points under the iterations of a continuous self-map of a compact space, and
\item zeta functions of algebraic varieties \cite{dwork}, with $N_n$ being the number of points of a fixed algebraic variety over the field $GF(q^n)$ with $q^n$ elements for a fixed prime power $q$.  
\end{itemize}

\Cref{se.prel} gathers the needed background material on Boolean functions, symbolic dynamics and algebraic geometry. We also describe the irreducible factors of polynomials of the form $x^{2t}-2^t$ (\Cref{pr.theta-irred}) for later use in \Cref{se.quad}. 

In \Cref{se.rot} we recast rotation symmetric Boolean functions as particular instances of well behaved dynamical systems known as {\it finite-type shifts}: closed subsets of a Cartesian power $\cA^{\bZ}$ of a finite alphabet, invariant under the leftward shift of bi-infinite sequences. These are well studied objects with a rich theory, and in particular it is a fact that their zeta functions are rational. Our main result in that section (\Cref{th.is-fin}) can be paraphrased as

\begin{theorem}\label{th.int-rot}
  For every family of rotation symmetric Boolean functions $f_n$ as in \Cref{eq:38} there is a finite-type shift with $2^{n+1}-2wt(f_n)$ $n$-periodic points for each $n$.

  In particular, $wt(f_n)$ satisfies a linear recurrence.
  \qedhere  
\end{theorem}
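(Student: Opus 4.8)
The plan is to reinterpret the target quantity and then build the shift by hand. Writing the Boolean sum in \Cref{eq:38} additively over $\bF_2$, we have $f_n(x)\equiv\sum_{i\bmod n} g_i(x)\pmod 2$, where $g_i(x)=x_i x_{i+a_1}\cdots x_{i+a_{d-1}}$. Since $2^{n+1}-2wt(f_n)=2\bigl(2^n-wt(f_n)\bigr)=2\,\lvert f_n^{-1}(0)\rvert$, it suffices to produce a finite-type shift whose set of $n$-periodic points maps two-to-one onto the zero set of $f_n$. I would extract the factor of $2$ from a single ``accumulator'' coordinate whose initial value is free, and the zero locus from the requirement that this accumulator close up around the cycle.

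Concretely, set $w=\max_j a_j$ and take the alphabet $\cA=\{0,1\}^{w}\times\{0,1\}$, a state recording a window $(x_i,\dots,x_{i+w-1})$ of $w$ consecutive coordinates together with a running parity $p_i\in\{0,1\}$. Declare a transition from $(x_i,\dots,x_{i+w-1};p_i)$ to $(x_{i+1},\dots,x_{i+w};p_{i+1})$ legal precisely when the two windows overlap consistently (the usual higher-block edge-shift condition) and $p_{i+1}=p_i+g_i(x)\bmod 2$; note $g_i(x)$ is computable from the window $(x_i,\dots,x_{i+w})$ that these two states jointly determine. This is a nearest-neighbor constraint on the finite alphabet $\cA$, so it defines a finite-type shift $X$, and for $n>w$ an $n$-periodic point of $X$ is exactly a cyclic bit string $(x_0,\dots,x_{n-1})$ equipped with a consistent parity track.

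For such a periodic point the accumulator must satisfy $p_n=p_0$, i.e. $\sum_{i\bmod n}g_i(x)\equiv 0\pmod 2$, which is the condition $f_n(x)=0$; conversely, whenever $f_n(x)=0$ both choices $p_0\in\{0,1\}$ extend uniquely to a legal periodic parity track, while if $f_n(x)=1$ neither does. Hence $X$ has exactly $2\lvert f_n^{-1}(0)\rvert=2^{n+1}-2wt(f_n)$ points of period $n$, as claimed.

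The one genuine obstacle is the sign bookkeeping: the naive transfer matrix attached to $(-1)^{f_n}$ has entries $\pm1$, and a signed trace is not on its face a periodic-point count of a shift. Folding the parity into an extra accumulator coordinate is precisely what converts the mod-$2$ constraint into an honest $\{0,1\}$ adjacency matrix, at the cost of the harmless factor of $2$. With $X$ in hand the recurrence is immediate from the general theory: the number of $n$-periodic points of a finite-type shift equals $\mathrm{tr}(A^n)$ for the adjacency matrix $A$ of its defining graph, the associated zeta function $1/\det(I-sA)$ is rational, and so $\mathrm{tr}(A^n)$ obeys the linear recurrence given by the characteristic polynomial of $A$. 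Since $2^{n+1}$ trivially satisfies a linear recurrence and such sequences are closed under linear combination, $wt(f_n)=2^n-\tfrac12\,\mathrm{tr}(A^n)$ does as well (for all sufficiently large $n$, namely $n>w$).
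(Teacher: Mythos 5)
Your proof is correct, and at its core it is the same construction as the paper's: your parity accumulator is precisely the auxiliary sequence $\mathbf{y}$ in the paper's defining equation $P_f(\mathbf{x})=\mathbf{y}-\sigma\mathbf{y}$ (the additive Hilbert 90 device of \Cref{le.easy-ht90}), your telescoped closing condition $p_n=p_0$ is exactly the solvability of that equation around the cycle, and your factor of $2$ from the free choice of $p_0$ is the paper's observation that $\mathbf{y}$ is determined up to adding the all-ones vector. Where you genuinely differ is in how finite-typeness is exhibited: the paper keeps the alphabet $GF(2)\times GF(2)$ and argues abstractly that the defining equation is a local, shift-equivariant constraint, so the pair-shift avoids finitely many forbidden words; you instead recode by bundling the length-$w$ window into the state, obtaining a nearest-neighbor vertex shift. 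Your version buys an explicit adjacency matrix $A$ with $N_n=\mathrm{tr}(A^n)$, making the recurrence and its order effective; the paper's version buys a small alphabet and a formulation that runs exactly parallel to the trace-function picture of \Cref{se.tr}, where $y-y^2$ plays the role of $\mathbf{y}-\sigma\mathbf{y}$, which is the structural analogy the paper is after. One correction: your restriction to $n>w$ is unnecessary, and dropping it matters because the theorem asserts the count for \emph{each} $n$. An $n$-periodic point of your shift has an $n$-periodic window track, and reading off the first coordinate of each window shows the underlying bit string is $n$-periodic for any $n\ge 1$ (no matter how $n$ compares to $w$); your counting of parity tracks then applies verbatim, so your shift already realizes the stated count for every $n$.
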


\Cref{se.tr} revolves around close cousins of rotation symmetric Boolean functions, definable in Galois-theoretic terms: having fixed a polynomial $P$ with coefficients in the field $GF(2)$, one can consider the family of functions
\begin{equation}\label{eq:39}
  f_n:GF(2^n)\to GF(2),\ f_n(x) = \mathrm{Tr}(P(x)).
\end{equation}
These are introduced in \cite{cgl} and studied there as Boolean functions (which is what they are, having identified $GF(2^n)$ with $GF(2)^n$). The analogue of \Cref{th.int-rot} in this case is almost immediate (\Cref{cor.nrsol-gf}):

\begin{theorem}\label{th.int-tr}
  For a family \Cref{eq:39} of trace functions there is a plane curve $X$ defined over $GF(2)$ such that $2^{n+1}-2wt(f_n)$ is the number of points of $X$ over $GF(2^n)$.

  In particular, since zeta functions of algebraic varieties are rational \cite{dwork}, $wt(f_n)$ again satisfies a linear recurrence.   
  \qedhere
\end{theorem}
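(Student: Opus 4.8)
The plan is to realize $wt(f_n)$ through an Artin--Schreier construction, exactly paralleling the dynamical picture behind \Cref{th.int-rot}. Given the fixed polynomial $P$ with coefficients in $GF(2)$, I would introduce the affine plane curve
\begin{equation*}
  X:\quad y^2+y = P(x)
\end{equation*}
over $GF(2)$. This is a single polynomial equation in the two variables $x,y$, hence a plane curve in the required sense, and its defining data involves only the $GF(2)$-coefficients of $P$, so $X$ is indeed defined over $GF(2)$.

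The key step is the additive form of Hilbert 90 (Artin--Schreier theory) in characteristic $2$: for any $c\in GF(2^n)$ the equation $y^2+y=c$ has a root in $GF(2^n)$ if and only if $\mathrm{Tr}(c)=0$, where $\mathrm{Tr}=\mathrm{Tr}_{GF(2^n)/GF(2)}$, and whenever a root exists there are exactly two of them (namely $y$ and $y+1$, differing by the nonzero element of $GF(2)$). Consequently, for each fixed $x\in GF(2^n)$ the fiber of $X$ above $x$ has cardinality $2$ when $\mathrm{Tr}(P(x))=0$ and cardinality $0$ when $\mathrm{Tr}(P(x))=1$.

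Summing over all $x$ then yields
\begin{equation*}
  \#X(GF(2^n)) = 2\cdot\#\{x\in GF(2^n):\mathrm{Tr}(P(x))=0\} = 2\bigl(2^n-wt(f_n)\bigr) = 2^{n+1}-2wt(f_n),
\end{equation*}
which is precisely the asserted identity, using that $f_n(x)=\mathrm{Tr}(P(x))$ and that $wt(f_n)$ counts the fiber where the trace equals $1$. Since $X$ is an algebraic variety over the finite field $GF(2)$, the rationality of its zeta function \cite{dwork} gives $\#X(GF(2^n))=\sum_i\alpha_i^n-\sum_j\beta_j^n$ for algebraic integers $\alpha_i,\beta_j$. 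Thus $2^{n+1}-2wt(f_n)$ is linearly recurrent, and since the geometric term $2^{n+1}$ is itself linearly recurrent and such sequences are closed under linear combinations, $wt(f_n)=2^n-\tfrac12\#X(GF(2^n))$ satisfies a linear recurrence as well.

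I expect no serious obstacle here, consistent with the claim being ``almost immediate'': the only point demanding care is the bookkeeping in the trace criterion, i.e.\ confirming that solvability is governed by the trace vanishing rather than equaling $1$, and that the solvable fibers have size exactly $2$, so that the factor of $2$ in $2^{n+1}-2wt(f_n)$ is reproduced correctly. One should also note that $X$ need be neither smooth nor irreducible (for instance $y^2+y=Q(x)^2+Q(x)$ splits into two components), but this is immaterial: Dwork's theorem applies to arbitrary varieties, so rationality of the zeta function---and hence the recurrence---holds regardless of the geometry of $X$.
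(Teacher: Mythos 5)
Your proposal is correct and is essentially the paper's own argument: the paper works with the same Artin--Schreier curve $X_f=\{(x,y)\ |\ P_f(x)=y-y^2\}$ (identical to your $y^2+y=P(x)$ in characteristic $2$), uses the additive Hilbert 90 to show each fiber over a zero of $f_n$ has exactly two points and is empty otherwise, obtains $2^{n+1}-2wt(f_n)=N_n(X_f)$, and invokes Dwork's rationality theorem for the recurrence. The only cosmetic difference is that the paper packages the fiber count as a lemma plus corollary (\Cref{le.eqn-gf}, \Cref{cor.nrsol-gf}) mirroring its rotation-symmetric treatment, while you carry it out directly.
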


We give more precise information on the weights $wt(f_n)$ and the general shape of the recurrence they satisfy in \Cref{cor.no-beta,cor.g} by computing the genus of the curve $X$ of \Cref{th.int-tr} through successive blowups.

Finally, in \Cref{se.quad} we go back to the quadratic case analyzed closely in \cite{us}. In that setup \Cref{th.quad} gives a close connection between the Boolean and trace sides of the picture. Furthermore, for {\it monomial} (quadratic, rotation symmetric) functions
\begin{equation*}
  f_{n,t}(x_i) = \sum_{i\;\mathrm{mod}\; n}x_i x_{i+t}
\end{equation*}
we show in \Cref{th.char} that the characteristic values resulting as in \Cref{se.rot} from the general theory of finite-type shifts precisely coincide with the eigenvalues (including multiplicities) of the recurrence matrix $R(t)$ for $wt(f_{n,t})$ constructed in \cite{csk-rec2}. This is a curious instance of consilience, given how different the methods of \cite{csk-rec2} and \Cref{se.rot} are. 

We hope that the methods of the present paper will not only provide a conceptual explanation for the weight recurrence phenomena so prevalent in the Boolean function literature, but also highlight connections to different areas (symbolic dynamics, algebraic geometry) by bringing to bear tools specific to those fields.

\subsection*{Acknowledgements}

A.C. was partially supported by NSF grant DMS-1801011.

\section{Preliminaries}\label{se.prel}

Throughout, $GF(q)$ denotes the finite field with $q$ elements. We focus primarily on characteristic-two fields, i.e. $q=2^n$. 

\subsection{Boolean functions}\label{subse.bool}

We will work with functions defined on either
\begin{itemize}
\item tuples of Boolean variables. i.e. elements of $V_n=GF(2)^n$, or
\item single finite fields $GF(2^n)$. 
\end{itemize}

We will see that there are strong analogies between these two setups. Specifically, we will construct (following \cite{cgl}) infinite families of functions $f_n$ of the two types (indexed by the respective $n$).

To that end, consider a finite collection $\cC$ of tuples
\begin{equation}\label{eq:20}
  0<a_1<\cdots<a_{d-1}. 
\end{equation}
of positive integers for various $d$. 

We then write $f_{\cC}$ as a collective label for the functions $f_{\cC,n}$ defined in either of the two following ways (to be distinguished contextually in the sequel):

\begin{definition}\label{def.ctxt}
  In {\it rotation symmetric (or RS) context} $f_{\cC,n}$ is the rotation symmetric Boolean function $f_{\cC,n}:V_n\to GF(2)$ obtained as the sum of the {\it monomial RS (or MRS) functions}
  \begin{equation}\label{eq:21}
    (0,a_1,\cdots,a_{d-1}):=\sum_{i\;\mathrm{mod}\; n}x_i x_{i+a_1}\cdots x_{i+a_{d-1}}
  \end{equation}
  as the tuples \Cref{eq:20} range over $\cC$.

  Similarly, in {\it trace context} $f_{\cC,n}$ is the function $f_{\cC,n}:GF(2^n)\to GF(2)$ obtained as the sum of the {\it monomial trace functions}
  \begin{equation*}
    GF(2^n)\ni x\mapsto \mathrm{Tr}\left(x^{1+2^{a_1}+\cdots+2^{a_{d-1}}}\right)
  \end{equation*}
where once more the tuples \Cref{eq:20} range over $\cC$ and $\mathrm{Tr}$ denotes the trace $\mathrm{Tr}_n:GF(2^n)\to GF(2)$.   
\end{definition}

Having fixed $\cC$, there is a close relationship between $f_{\cC}$ in RS and trace context: if $f_n=f_{\cC,n}$ in RS context then the trace context counterpart $g_n=g_{\cC,n}$ is denoted in \cite[Definition 4.1]{cgl} by $f'_n$ and can be obtained from $f$ by
\begin{equation*}
  GF(2^n)\ni x\mapsto f_n(x,x^2,\cdots,x^{2^{n-1}})\in GF(2).
\end{equation*}

\subsection{Symbolic dynamics}\label{subse.symb}

For background on the topic we refer to \cite[Chapters 1-3,6]{lm}. The central notion is

\begin{definition}\label{def.symb}
  Let $\cA$ be a finite set (the {\it alphabet}) and equip the space $\cA^{\bZ}$ of bi-infinite $\cA$-valued sequences
  \begin{equation*}
    \cdots,\ x_{-1},\ x_0,\ x_1,\ \cdots \in \cA
  \end{equation*}
  with its compact Hausdorff product topology. A {\it shift over $\cA$} is a closed subset $X\subseteq \cA^{\bZ}$ preserved by the shift operator
  \begin{equation*}
    \sigma:\cA^{\bZ}\to \cA^{\bZ} 
  \end{equation*}
  defined by $\sigma({\bf x})_i = x_{i+1}$, where
  \begin{equation*}
    {\bf x} = (\cdots,x_{-1},x_0,x_1,\cdots)\in \cA^{\bZ}. 
  \end{equation*}

  We often write $(X,\sigma)$ for a shift, to indicate that we are equipping $X$ with the restriction of the shift map $\sigma$. 

  A {\it subshift} $(Y,\sigma)\subseteq (X,\sigma)$ is a closed subset $Y\subseteq X$ invariant under $\sigma$.  
\end{definition}
This is equivalent to \cite[Definition 1.2.1]{lm}. One particular class of shifts we will be interested in is described in \cite[Introduction]{bl} or \cite[Definition 2.1.1]{lm}.

Before recalling the definition we introduce the following piece of notation: for a finite word
\begin{equation*}
  w\in \cA^*:=\text{possibly-empty words with letters in }\cA
\end{equation*}
we write $X_w$ for the set of elements in $\cA^{\bZ}$ that do not contain $w$ as a subword. More generally, for a set $\cS$ of words we write
\begin{equation*}
  X_{\cS}:=\bigcap_{w\in \cS}X_w=\text{sequences containing no element of }\cS\text{ as a subword}. 
\end{equation*}

It is clear that $X_{\cS}$ is invariant under $\sigma$ and is thus the underlying space of a shift. With this in hand we have 

\begin{definition}\label{def.fin}
  A shift $(X,\sigma)$ over $\cA$ is {\it of finite type} if there is a finite set $\cS$ such that $X=X_{\cS}$. 
\end{definition}
In other words, the finite-type shifts are those describable by requiring that the sequences in question avoid finitely many patterns (or words) over $\cA$.

We also need the following concept (see \cite[Introduction]{bl} or \cite[Definition 6.4.1]{lm}).

\begin{definition}\label{def.zeta}
  Let $(X,\sigma)$ be a shift over the alphabet $\cA$. For each $n\ge 1$ denote by
  \begin{equation*}
    N_n=N_n(X,\sigma)
  \end{equation*}
  the number of elements of $X$ left invariant by $\sigma^{n}$ (i.e. the sequences in $X$ that are $n$-periodic).

  The {\it zeta function} of $(X,\sigma)$ is
  \begin{equation*}
    \zeta(s)=\zeta_{X,\sigma}(s):=\exp\left(\sum_{n\ge 1} \frac{N_n}n s^n\right).
  \end{equation*}  
\end{definition}

One of the important results on zeta functions is \cite[Theorem 1]{bl} (see also \cite[Theorem 6.4.6]{lm}):

\begin{theorem}\label{th.is-rat}
  The zeta function of a finite-type shift is of the form
  \begin{equation*}
    \zeta(s)=\frac 1{\det(1-sA)}
  \end{equation*}
  for some square integer-entry matrix $A$. 
\end{theorem}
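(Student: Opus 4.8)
The plan is to reduce the statement to the special case of a \emph{vertex shift}, for which the number of periodic points is literally the trace of a power of a $0$–$1$ matrix, and then to finish with a short generating-function manipulation. First I would arrange, with no loss of generality, that the finite forbidden set $\cS$ of \Cref{def.fin} consist entirely of words of a single length $M\ge 2$: a forbidden word of length $<M$ may be replaced by the finitely many length-$M$ words that contain it, and a degenerate case with all words of length $\le 1$ is handled by padding. Next I would pass to the \emph{$M$-block presentation}. Let $\cB$ be the (finite) set of length-$(M-1)$ words over $\cA$ that do not already contain an element of $\cS$, and define a $0$–$1$ transition matrix $A$ indexed by $\cB$ by setting $A_{u,v}=1$ exactly when $u$ and $v$ overlap in their common length-$(M-2)$ block and the length-$M$ word obtained from this overlap avoids $\cS$. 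The map carrying $\mathbf x\in X=X_{\cS}$ to the sequence whose $i$-th entry is the block $(x_i,\dots,x_{i+M-2})$ is a topological conjugacy of $(X,\sigma)$ onto the vertex shift of bi-infinite $A$-admissible walks; in particular it matches $n$-periodic points bijectively, so the integers $N_n$ of \Cref{def.zeta} coincide for the two systems.

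The key combinatorial identity is then $N_n=\operatorname{tr}(A^n)$. Indeed, an $n$-periodic point of the vertex shift is precisely a cyclic admissible walk $(u^{(0)},\dots,u^{(n-1)})$, and the number of these is
\begin{equation*}
  \sum_{u^{(0)},\dots,u^{(n-1)}\in\cB} A_{u^{(0)}u^{(1)}}A_{u^{(1)}u^{(2)}}\cdots A_{u^{(n-1)}u^{(0)}}=\operatorname{tr}(A^n).
\end{equation*}
Substituting this into the definition of $\zeta$ and using the entrywise power-series identity $\log(1-sA)=-\sum_{n\ge1}\frac{(sA)^n}{n}$ together with $\operatorname{tr}\log(1-sA)=\log\det(1-sA)$, I would compute
\begin{equation*}
  \sum_{n\ge1}\frac{N_n}{n}s^n=\sum_{n\ge1}\frac{\operatorname{tr}(A^n)}{n}s^n=-\operatorname{tr}\log(1-sA)=-\log\det(1-sA),
\end{equation*}
so that $\zeta(s)=\exp\bigl(-\log\det(1-sA)\bigr)=\dfrac{1}{\det(1-sA)}$. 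Integrality of $A$ is automatic, since its entries are $0$ or $1$.

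I expect the main obstacle to be the recoding step rather than the formal computation: one must check carefully that the block map is a genuine conjugacy \emph{onto} the full vertex shift of $A$, i.e.\ that it is surjective and that admissibility of walks corresponds exactly to avoidance of $\cS$, so that no periodic points are created or destroyed in the passage. Once the correspondence $N_n=\operatorname{tr}(A^n)$ is secured and one notes that $\zeta$ depends only on the $N_n$ and is therefore a conjugacy invariant, the remainder is the purely formal identity above.
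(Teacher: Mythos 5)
Your proof is correct, and it is essentially the argument behind the result as the paper invokes it: the paper gives no proof of \Cref{th.is-rat}, quoting it as Theorem 1 of \cite{bl} (see also Theorem 6.4.6 of \cite{lm}), and your route---padding the forbidden set to a single word length, recoding via the higher-block presentation to a vertex shift, counting $n$-periodic points as $\operatorname{tr}(A^n)$, and finishing with the formal identity $\sum_{n\ge 1}\operatorname{tr}(A^n)s^n/n=-\log\det(1-sA)$---is precisely the standard proof in those references. The one step you rightly flag, that the block map is a conjugacy onto the full vertex shift and hence matches periodic points exactly, is routine for bi-infinite sequences (any occurrence of a short forbidden word extends to a forbidden length-$M$ window), so there is no gap.
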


\subsection{The Weil conjectures}\label{subse.weil}

A good introduction for this is \cite[Appendix C]{hart}.

Let $X$ be an algebraic variety (typically affine or projective) defined over a finite field $F=GF(q)$ for some prime power $q$. We write $N_n=N_n(X)$ for the number of points of $X$ defined over $GF(q^n)$. Recall (\cite[Appendix C.1]{hart}):

\begin{definition}\label{def.weil-zeta}
  The {\it zeta function of $X$} is
  \begin{equation*}
    \zeta(s)=\zeta_{X}(s):=\exp\left(\sum_{n\ge 1} \frac{N_n}n s^n\right).
  \end{equation*}
\end{definition}

Note the analogy to \Cref{def.zeta}. The {\it Weil conjectures} are a series of statements regarding $\zeta_X(s)$ for {\it smooth} projective varieties $X$ (which thus provide information about the numbers $N_n(X)$). The `conjecture' moniker is preserved for historical reasons: posed in \cite{we-cj} and resolved for curves in \cite{we-bk}, the most difficult of the statements was settled completely in \cite{de1}, so the ``conjectures'' are, in fact, theorems. We refer to \cite[Appendix C.2]{hart} for a more complete historical account.

Since we are concerned primarily with possibly-singular curves $X$, we phrase the results in the more complete form covered in \cite{ap}. Moreover, we focus on the numbers $N_n(X)$ themselves (rather than the zeta function). With this in mind, the relevant statement is \cite[Corollary 2.4]{ap}:

\begin{theorem}\label{th.weil-cv}
  Let $X$ be a projective curve over a finite field $GF(q)$, $\widetilde X\to X$ the normalization of $X$, $g$ the genus of $\widetilde X$ and $\Delta$ the number
  \begin{equation*}
    \left|\widetilde{X}(\overline{GF(q)})-{X}(\overline{GF(q)})\right|.
  \end{equation*}
  Then, there are Galois-invariant multisets of algebraic integers
  \begin{itemize}
  \item $\alpha_i$, $1\le i\le 2g$ with $|\alpha_i|=\sqrt q$;
  \item $\beta_j$, $1\le j\le \Delta$ with $|\beta_j|=1$   
  \end{itemize}
  such that
  \begin{equation}\label{eq:12}
    N_n(X) = q^n+1 - \sum_{i=1}^{2g} \alpha_i^n - \sum_{j=1}^{\Delta} \beta_j^n.
  \end{equation}  
\end{theorem}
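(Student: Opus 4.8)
The plan is to reduce everything to the classical case of a \emph{smooth} curve---the Riemann hypothesis for curves, settled in \cite{we-bk}---applied to the normalization $\widetilde X$, and then to account for the singularities by a Frobenius-equivariant bookkeeping of how $\nu\colon\widetilde X\to X$ glues geometric points. Since $\widetilde X$ is a smooth projective curve of genus $g$ over $GF(q)$, Weil's theorem already supplies a multiset $\alpha_1,\dots,\alpha_{2g}$ of algebraic integers, stable under $\mathrm{Gal}(\overline{\bQ}/\bQ)$ and of absolute value $\sqrt q$, with
\begin{equation*}
  N_n(\widetilde X)=q^n+1-\sum_{i=1}^{2g}\alpha_i^n.
\end{equation*}
Thus it suffices to produce roots of unity $\beta_1,\dots,\beta_\Delta$ with $N_n(\widetilde X)-N_n(X)=\sum_{j=1}^{\Delta}\beta_j^n$.

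Next I would analyze $\nu$ on geometric points. It is finite, surjective and defined over $GF(q)$, hence commutes with the $q$-power Frobenius $F$ acting on $\widetilde X(\overline{GF(q)})$ and $X(\overline{GF(q)})$, and it restricts to a bijection away from the finite singular set $\Sigma\subseteq X(\overline{GF(q)})$. Writing $B=\nu^{-1}(\Sigma)$ for the $F$-stable finite set of branches lying over singular points, any $Q$ with $F^nQ=Q$ satisfies $F^n\nu(Q)=\nu(Q)$, so every fixed point upstairs sits over a fixed point downstairs. Counting $F^n$-fixed points fiber by fiber and discarding the smooth fibers (which contribute nothing) gives
\begin{equation*}
  N_n(\widetilde X)-N_n(X)=\big|\{Q\in B:\ F^nQ=Q\}\big|-\big|\{P\in\Sigma:\ F^nP=P\}\big|
  =\mathrm{tr}\big(F^n\,\big|\,\bC[B]\big)-\mathrm{tr}\big(F^n\,\big|\,\bC[\Sigma]\big),
\end{equation*}
the two right-hand terms being the numbers of $F^n$-fixed points in the permutation representations attached to the $F$-sets $B$ and $\Sigma$.

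Finally, the map of $F$-sets $B\to\Sigma$ (each branch to the singular point beneath it) induces an $F$-equivariant surjection $\bC[B]\to\bC[\Sigma]$, split by the section averaging a singular point over its branches, so $\bC[B]\cong\bC[\Sigma]\oplus W$ with $W$ the reduced permutation module. Here $\dim_{\bC}W=|B|-|\Sigma|=\sum_{P\in\Sigma}\big(|\nu^{-1}(P)|-1\big)=\Delta$, matching the defect defining $\Delta$ since $\nu$ is bijective off $\Sigma$. As $F$ permutes the finite set $B$, its action on $W$ has finite order; diagonalizing it yields eigenvalues $\beta_1,\dots,\beta_\Delta$ that are roots of unity---hence algebraic integers of absolute value $1$, Galois-invariant as a multiset---with $\mathrm{tr}(F^n\mid W)=\sum_j\beta_j^n$. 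Combining this with the previous display and with the smooth count gives \Cref{eq:12}.

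I expect the main obstacle to be conceptual rather than computational: identifying the abstractly defined $\Delta$ (a difference of geometric point sets) with $\dim W$, and checking carefully that $\nu$ is genuinely Frobenius-equivariant so that the $F$-sets $B$ and $\Sigma$ are well defined over $GF(q)$. Everything deep is imported from the smooth case, and once the geometry of the normalization is set up correctly the singular correction is elementary representation theory of a finite cyclic action.
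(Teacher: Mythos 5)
Your argument is correct in substance, but be aware that the paper contains no proof of \Cref{th.weil-cv} to compare against: the statement is quoted directly from \cite[Corollary 2.4]{ap}, so the paper's ``proof'' is a citation. What you have written is essentially a reconstruction of the Aubry--Perret argument. They phrase the singular correction multiplicatively, through zeta functions: with $\nu\colon\widetilde X\to X$ the normalization, $\Sigma$ the singular locus and $B=\nu^{-1}(\Sigma)$, one has
\begin{equation*}
  Z_X(t)\;=\;Z_{\widetilde X}(t)\cdot\prod_{P\in\Sigma}\frac{\prod_{Q\mapsto P}\bigl(1-t^{\deg Q}\bigr)}{1-t^{\deg P}},
\end{equation*}
where the correction factor is a polynomial of degree $\Delta$ whose inverse roots are roots of unity. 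Your computation with the permutation modules $\bC[B]$ and $\bC[\Sigma]$ is exactly the additive, fixed-point-count shadow of this identity: the eigenvalues of Frobenius on $\bC[B]$ are the $\deg Q$-th roots of unity as $Q$ runs over the closed points of $B$, and likewise for $\Sigma$, so your $\beta_j$ (eigenvalues on $W=\ker(\bC[B]\to\bC[\Sigma])$) coincide with their inverse roots, and taking logarithmic derivatives of the display recovers your relation $N_n(X)=N_n(\widetilde X)-\sum_j\beta_j^n$. The two routes are equivalent; yours makes the elementary representation-theoretic content explicit, theirs produces the $\beta_j$ together with the rational zeta function itself.

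Two points need repair. First, your justification of the Galois invariance of the multiset $\{\beta_j\}$ is wrong as stated: being roots of unity does not make a multiset stable under $\mathrm{Gal}(\overline{\bQ}/\bQ)$ (the singleton $\{i\}$ is not). The correct reason is that $W$ is the complexification of the rational representation $\ker\bigl(\bQ[B]\to\bQ[\Sigma]\bigr)$, which is Frobenius-stable, so the characteristic polynomial of $F|_W$ has integer coefficients and its root multiset is therefore closed under Galois conjugation; the same remark disposes of the $\alpha_i$, whose Galois stability comes from the integrality of the numerator of $Z_{\widetilde X}$. Second, the input you import from \cite{we-bk} requires $\widetilde X$ to be smooth, projective \emph{and geometrically irreducible}; otherwise the formula $N_n(\widetilde X)=q^n+1-\sum_i\alpha_i^n$ fails (already for a curve that splits into conjugate components over an extension). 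This hypothesis is implicit in the statement as quoted---Aubry--Perret assume absolute irreducibility---and should be flagged at the point where you invoke the smooth case, since it is a hypothesis on $X$, not a consequence of your construction.
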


\subsection{A remark on scaled roots of unity}\label{subse.rts}

In the discussion below we will need to analyze the spectrum of a unitary matrix with minimal polynomial $x^{2t}-2^t$ for $t\ge 2$. To that end, we have to understand the factorization of that polynomial over the integers. 

It will be convenient to work with the following polynomials: for a positive integer $d$, $\Theta_d(x)$ is obtained from the $d^{th}$ cyclotomic polynomial $\Phi_d$ by
\begin{itemize}
\item substituting $x^2$ for $x$: $\Phi_d(x)\mapsto \Phi_d(x^2)$;
\item scaling all of the resulting roots by $\sqrt 2$, i.e. applying the transformation
  \begin{equation*}
    P(x)\mapsto 2^{\frac{\deg P}2}P\left(\frac x{\sqrt 2}\right)
  \end{equation*}
  to $P(x)=\Phi_d(x^2)$. 
\end{itemize}
More generally, we denote the procedure applied here to $\Phi_d$ (i.e. the two steps above, in succession) by $\alpha$. In other words,
\begin{equation}\label{eq:36}
  (\alpha P)(x) = 2^{\deg P}P\left(\frac{x^2}{2}\right)
\end{equation}
and $\alpha\Phi_d=\Theta_d$. 

Since $x^{2t}-2^t$ is nothing but $\alpha(x^t-1)$, it decomposes as
\begin{equation*}
  x^{2t}-2^t = \prod_{d|t}\Theta_d(x).
\end{equation*}
This makes the following result relevant.

\begin{proposition}\label{pr.theta-irred}
  The polynomial $\Theta_d$ is irreducible except when the exact power of $2$ dividing $d$ is $4$, in which case its irreducible factor decomposition is 
  \begin{equation*}
    \Theta_d(x)=P(x)P(-x)
  \end{equation*}
  for some irreducible polynomial $P$. 
\end{proposition}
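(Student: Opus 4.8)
The plan is to reduce the statement to a question about squares in a cyclotomic field, settle that question with a local (ramification) argument together with one explicit computation, and finally read off the shape of the factorization from the fact that $\Theta_d$ is an even polynomial.

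First I would record that $\Theta_d$ is a polynomial in $x^2$: writing $\Psi_d(y):=2^{\varphi(d)}\Phi_d(y/2)$, one has $\Theta_d(x)=\Psi_d(x^2)$. The polynomial $\Psi_d$ is monic with integer coefficients, its roots are the numbers $2\zeta$ with $\zeta$ a primitive $d$-th root of unity, and it is irreducible over $\bQ$ because $\bQ(2\zeta)=\bQ(\zeta)=\bQ(\zeta_d)$ has degree $\varphi(d)=\deg\Psi_d$. I would then invoke the standard criterion: if $f$ is irreducible over $\bQ$ with root $\theta$, then $f(x^2)$ is irreducible over $\bQ$ precisely when $\theta$ is not a square in $\bQ(\theta)$ (for a root $\omega$ of $f(x^2)$ one has $\bQ(\theta)\subseteq\bQ(\omega)$ with $[\bQ(\omega):\bQ(\theta)]\le 2$, equal to $2$ exactly when $\theta\notin\bQ(\theta)^2$). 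Applied to $f=\Psi_d$ and $\theta=2\zeta_d$, this reduces the entire proposition to determining for which $d$ the element $2\zeta_d$ is a square in $\bQ(\zeta_d)$.

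Next I would answer that question, writing $d=2^a m$ with $m$ odd. The cheap direction is local: for $a\ge 1$ the prime $2$ ramifies in $\bQ(\zeta_d)$ with index $\varphi(2^a)=2^{a-1}$, so for any prime $\fp\mid 2$ one has $v_\fp(2\zeta_d)=2^{a-1}$ (as $\zeta_d$ is a unit), while for $a=0$ this valuation is $1$; since a square has even valuation everywhere, $2\zeta_d$ can be a square only when $a\ge 2$. For $a\ge 3$ (that is $8\mid d$) I would use that $\sqrt2=\zeta_8+\zeta_8^{-1}\in\bQ(\zeta_d)$, so $2$ is already a square and $2\zeta_d$ is a square iff $\zeta_d$ is; but $\zeta_d$ is not a square in $\bQ(\zeta_d)$ for $d$ even, since its square roots have order $2d$ and $\bQ(\zeta_{2d})\ne\bQ(\zeta_d)$. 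Finally, for $a=2$ (that is $4$ exactly divides $d$) I would exhibit the square explicitly: taking $\zeta_d=i\zeta_m$,
\[
  2\zeta_d=2i\,\zeta_m=(1+i)^2\zeta_m=\left((1+i)\,\zeta_m^{(m+1)/2}\right)^2,
\]
using $(1+i)^2=2i$ and the fact that the odd-order root $\zeta_m$ is a square in $\bQ(\zeta_m)\subseteq\bQ(\zeta_d)$. Hence $2\zeta_d$ is a square exactly when $4$ exactly divides $d$, which is precisely the reducible case; all other $d$ give irreducible $\Theta_d$.

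It remains to pin down the factorization when $4$ exactly divides $d$. Here $\Theta_d=\Psi_d(x^2)$ is reducible while $\Psi_d$ is irreducible of even degree $\varphi(d)$. For any irreducible factor $Q$ of $\Theta_d$ with root $\omega$, the square $\omega^2$ is a root of the irreducible $\Psi_d$, so $\bQ(\omega^2)=\bQ(\zeta_d)$ has degree $\varphi(d)$; since we are in the reducible case $\deg Q<2\varphi(d)$, this forces $\deg Q=\varphi(d)$ and hence exactly two irreducible factors. Because $\Theta_d(-x)=\Theta_d(x)$, the substitution $x\mapsto -x$ permutes the roots; a single factor cannot be stable under it, for otherwise its root set would be closed under negation and $Q(x)=R(x^2)$ with $R$ a proper divisor of the irreducible $\Psi_d$, which is impossible. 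Thus negation interchanges the two factors, giving $\Theta_d(x)=P(x)P(-x)$ with $P$ irreducible (the sign works out because $\varphi(d)$ is even). The main obstacle is the square-classification step, and within it the case $4\,\|\,d$: the valuation argument yields only the necessary condition $a\ge 2$, and separating $a=2$ from $a\ge 3$ requires both the explicit square above and the observation that the genuine presence of $\sqrt2$ for $8\mid d$ destroys it.
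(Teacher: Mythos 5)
Your proposal is correct, but it follows a genuinely different route from the paper's. The paper reads irreducibility as transitivity of the absolute Galois group on the root set $\sqrt 2\,\Delta_d^{-2}$ and runs the case analysis through linear disjointness: $\bQ(\sqrt 2)$ against $\bQ(\Delta_d\cup -\Delta_d)$ when $d$ is odd, cyclotomic fields of coprime conductors in general, a reduction to $d=2^u$, the explicit orbits $\{1\pm i\}$ and $\{-1\pm i\}$ when $u=2$, and, for $u\ge 3$, the observation that $\sqrt 2$ is a polynomial in $\zeta^2$, so that one of the two Galois elements $\zeta\mapsto\pm\zeta^b$ always carries $\sqrt 2\,\zeta$ to $\sqrt 2\,\zeta^b$. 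You instead reduce everything, via the standard criterion that $f(x^2)$ is irreducible iff a root $\theta$ of the irreducible $f$ is a non-square in $\bQ(\theta)$ (applied to $\Psi_d(y)=2^{\varphi(d)}\Phi_d(y/2)$ and $\theta=2\zeta_d$), to deciding when $2\zeta_d$ is a square in $\bQ(\zeta_d)$, and you decide that by a local argument (the ramification index $\varphi(2^a)$ of $2$ forces odd valuation, hence non-square, unless $4\mid d$) together with two global facts: $2\zeta_d=\bigl((1+i)\zeta_m^{(m+1)/2}\bigr)^2$ when $4$ exactly divides $d$, and, when $8\mid d$, $\sqrt 2\in\bQ(\zeta_d)$ while $\zeta_d$ itself is a non-square because its square roots generate the strictly larger field $\bQ(\zeta_{2d})$. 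Both routes are sound; yours trades the paper's linear-disjointness bookkeeping for a small amount of ramification theory and dispatches all $d$ with $4\nmid d$ in one stroke, and your endgame is more explicit than the paper's: the count of exactly two irreducible factors of degree $\varphi(d)$, necessarily interchanged by $x\mapsto -x$ since a negation-stable factor $Q$ would produce a proper factor $R$ of the irreducible $\Psi_d$ with $Q(x)=R(x^2)$, derives the shape $P(x)P(-x)$ directly, where the paper leaves this step implicit in its orbit count. The paper's argument, in exchange, stays entirely within elementary Galois theory of cyclotomic fields and needs no valuation-theoretic input.
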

\begin{proof}
Let $\Delta_d$ be the set of primitive $d^{th}$ roots of unity and $\Delta_d^{-2}$ its preimage through squaring (i.e. $\Delta_d^{-2}$ is the set of roots of $\Phi_d(x^2)$). Let also $G$ be the absolute Galois group $\mathrm{Gal}(\overline{\bQ}/\bQ)$. We have to argue that $\sqrt 2 \Delta_d^{-2}$
\begin{itemize}
\item breaks up into two $G$-orbits when $d=4(2e+1)$;
\item is a single $G$-orbit otherwise.  
\end{itemize}
The situation is qualitatively different depending on the parity of $d$:

Case 1: odd $d$. We then have the following disjoint unions
\begin{equation*}
  \Delta_d^{-2} = \Delta_d\sqcup \Delta_{2d} = \Delta_d\sqcup -\Delta_d
\end{equation*}
and the conclusion follows from the fact that the fields $\bQ(\sqrt 2)$ and $\bQ(\Delta_d\cup -\Delta_d)$ are linearly disjoint and hence the Galois group of their compositum is simply the product of their respective Galois groups. This affords us the choice to send a fixed primitive $d^{th}$ root of unity to any other such root and $\sqrt 2$ to $\pm\sqrt 2$. 

Case 2: even $d$. This time around
\begin{equation*}
  \Delta_d^{-2} = \Delta_{2d}. 
\end{equation*}
We have $\sqrt 2\in \bQ(\Delta_8)$. Since $\bQ(\Delta_e)$ and $\bQ(\Delta_f)$ are linearly disjoint when $e$ and $f$ are coprime (e.g. \cite[\S IV.1, Theorem 2]{lang-bk}), we have $\sqrt 2\not\in \bQ(\Delta_{2d})$ unless $d$ is divisible by $4$ and we can then repeat the argument used in Case 1.

It thus remains to treat the case $4|d$. The linear disjointness of cyclotomic fields generated by roots of unity of coprime orders allows us to restrict our attention to the case when $d=2^u$ for some $u\ge 2$ (and hence the field whose Galois group we are interested in is $\bQ(\Delta_{2^{u+1}})$). 

When $u=2$ (corresponding to the case when the exact power of $2$ dividing the original $d$ was $4$) one checks immediately that the four-element set $\sqrt 2\Delta_8$ decomposes into two Galois orbits, namely
\begin{equation*}
  \{1\pm i\}\text{ and }\{-1\pm i\}. 
\end{equation*}

When $u\ge 3$ (and hence $8$ divides $d=2^u$) the difference from the preceding discussion is that now $\sqrt 2$ is a sum of {\it even} powers of a fixed primitive $d^{th}$ root of unity $\zeta$, so the image of $\sqrt 2$ through a Galois group element $\zeta\mapsto \pm \zeta^b$ depends only on $b$ (and not on the sign). If $\zeta\mapsto \zeta^b$ fixes $\sqrt 2$ then that same Galois group element maps $\sqrt 2 \zeta\mapsto \sqrt 2 \zeta^b$. If, on the other hand, we have
\begin{equation*}
  \zeta\mapsto \zeta^b,\ \sqrt 2\mapsto -\sqrt 2
\end{equation*}
then the {\it other} Galois group element $\zeta\mapsto -\zeta^b$ will map
\begin{equation}\label{eq:37}
  \sqrt 2\zeta\mapsto \sqrt 2\zeta^b. 
\end{equation}
Either way, the two arbitrary elements of $\sqrt 2\Delta_{2d}$ appearing in \Cref{eq:37} are in the same Galois orbit.
\end{proof}

\section{Rotation-symmetric functions as dynamical systems}\label{se.rot}

Let $f=f_{\cC}$ be a family of Boolean RS functions associated to a collection $\cC$ of tuples \Cref{eq:20} as in \Cref{def.ctxt}. As before, we write $f_n$ for $f$ specialized to the $n$-dimensional vector space $V_n\cong GF(2)^n$ over $GF(2)$.

To $f$ we can also associate polynomial functions $P_{f,n}:V_n\to V_n$ defined for $f=(0,a_1,\cdots,a_{d-1})$ by
\begin{equation}\label{eq:5}
V_n\ni (x_0,\cdots,x_{n-1})\mapsto (x_0x_{a_1}\cdots x_{a_{d-1}},\ x_1x_{a_1+1}\cdots x_{a_{d-1}+1},\ \cdots)\in V_n
\end{equation}
and in general by extending this definition additively. If we now denote the coordinate-sum map
\begin{equation*}
  V_n\ni (x_0,\cdots,x_{n-1})\mapsto \sum x_i \in GF(2)
\end{equation*}
by $\mathrm{Tr}=\mathrm{Tr}_n$ then we have
\begin{equation}\label{eq:2}
  f_n = \mathrm{Tr}_n\circ P_{f,n}. 
\end{equation}

We make note of the following elementary linear algebra fact whose proof we omit.

\begin{lemma}\label{le.easy-ht90}
  Let $k$ be a field and $n$ a positive integer. A vector ${\bf x}\in k^n$ has vanishing sum of coordinates if and only if $x=y-\sigma y$ for some $y\in k^n$, where $\sigma$ is the rotation operator on $k^n$ defined by \Cref{eq:1}.
\end{lemma}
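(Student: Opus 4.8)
The plan is to prove the two implications separately, with essentially all the content residing in a single telescoping observation. The ``only if'' direction is immediate: if $\mathbf{x} = y - \sigma y$ then, reading indices modulo $n$, we have $x_i = y_i - y_{i+1}$, and summing over $i$ cyclically makes the right-hand side telescope to $0$. Thus the image of the operator $1 - \sigma$ is automatically contained in the hyperplane of vectors with vanishing coordinate sum, and only the converse inclusion requires an argument.

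For the converse I would offer either of two routes. The direct route solves the system $x_i = y_i - y_{i+1}$ by hand: fix $y_0$ arbitrarily and propagate via $y_{i+1} = y_i - x_i$, so that each $y_j$ is $y_0$ minus an initial partial sum of the $x_i$. Every coordinate equation then holds by construction except the wrap-around equation at $i = n-1$, and that last equation closes up precisely when the total sum $\sum_i x_i$ vanishes, which is exactly the hypothesis. The slicker route is a dimension count: $\ker(1-\sigma)$ consists of the $\sigma$-fixed vectors, i.e. the constant vectors, and so is one-dimensional; by rank--nullity the image of $1-\sigma$ has dimension $n-1$; since that image sits inside the $(n-1)$-dimensional trace-zero hyperplane by the forward direction, the two subspaces must coincide.

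There is no genuine obstacle here, as the authors' phrasing ``elementary linear algebra fact'' anticipates. The one point deserving a word of care is uniformity across characteristics: one might worry about the case $\mathrm{char}\,k \mid n$, but the dimension argument is insensitive to it, because the all-ones vector spanning $\ker(1-\sigma)$ is nonzero in every field regardless of whether $n$ vanishes in $k$. For exactly this reason I would favor the dimension-counting proof, optionally recording the explicit $y$ constructed above as a remark for readers who prefer the formula in hand.
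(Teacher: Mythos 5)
Your proof is correct. Note that the paper supplies no argument at all for this lemma---it is introduced as an ``elementary linear algebra fact whose proof we omit''---so there is no official proof to compare against; your write-up simply fills the gap the authors left to the reader. Both of your routes are sound: the explicit propagation $y_{i+1}=y_i-x_i$, whose single wrap-around equation closes up exactly when $\sum_i x_i=0$, and the rank--nullity count showing that the image of $1-\sigma$ is an $(n-1)$-dimensional subspace of the $(n-1)$-dimensional coordinate-sum hyperplane, hence equal to it. Your observation about characteristic is also the right one to record: when $\mathrm{char}\,k \mid n$ the all-ones vector lies \emph{inside} the hyperplane, but this is irrelevant to the dimension count, since that vector is nonzero over any field and so $\ker(1-\sigma)$ is always one-dimensional.
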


\begin{remark}\label{re.ht90}
  \Cref{le.easy-ht90} is an analogue of the celebrated {\it Hilbert theorem 90}, in its additive version: if $K\subset L$ is a Galois extension with cyclic Galois group $\langle\sigma\rangle$ then an element $x\in L$ has vanishing trace if and only if
  \begin{equation*}
    x=y-\sigma y
  \end{equation*}
  for some $y\in L$.
\end{remark}

\begin{lemma}\label{le.eqn}
  For ${\bf x}\in V_n$ we have $f_n({\bf x})=0$ if and only if
  \begin{equation}\label{eq:3}
    P_{f,n}({\bf x}) = {\bf y}-\sigma {\bf y}
  \end{equation}
  for some ${\bf y}\in V_n$. 
\end{lemma}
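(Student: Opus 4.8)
The plan is to derive this directly by chaining together the defining factorization of $f_n$ with the additive Hilbert~90 statement already recorded as \Cref{le.easy-ht90}. Concretely, I would start from \Cref{eq:2}, which expresses $f_n = \mathrm{Tr}_n\circ P_{f,n}$ with $\mathrm{Tr}_n$ the coordinate-sum map on $V_n$. Thus for a fixed ${\bf x}\in V_n$ the condition $f_n({\bf x})=0$ is, by definition, nothing but the assertion that the vector $P_{f,n}({\bf x})\in V_n$ has vanishing sum of coordinates.

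First I would make this reduction explicit: $f_n({\bf x})=0$ holds if and only if $\mathrm{Tr}_n\big(P_{f,n}({\bf x})\big)=0$, i.e. if and only if the coordinates of the single vector $P_{f,n}({\bf x})$ add up to $0$ in $GF(2)$. At this point no further structure of $P_{f,n}$ is needed; the only feature of $f_n$ that matters is that it is the trace of the polynomial map.

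Second, I would apply \Cref{le.easy-ht90} with $k=GF(2)$ to the vector $P_{f,n}({\bf x})$. That lemma asserts precisely that a vector in $k^n$ has vanishing coordinate sum exactly when it is of the form ${\bf y}-\sigma{\bf y}$ for some ${\bf y}$, where $\sigma$ is the rotation operator. Combining the two equivalences yields
\begin{equation*}
  f_n({\bf x})=0
  \quad\Longleftrightarrow\quad
  P_{f,n}({\bf x})={\bf y}-\sigma{\bf y}\ \text{ for some }{\bf y}\in V_n,
\end{equation*}
which is the claim.

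There is no real obstacle here: the substance lives entirely in \Cref{le.easy-ht90} and in the factorization \Cref{eq:2}, both of which are available. The only points requiring minor care are that $\mathrm{Tr}_n$ is literally the coordinate-sum functional, so that ``vanishing trace'' and ``vanishing coordinate sum'' coincide, and that \Cref{le.easy-ht90} is applied to the output vector $P_{f,n}({\bf x})$ rather than to ${\bf x}$ itself. Over $GF(2)$ one may freely read ${\bf y}-\sigma{\bf y}$ as ${\bf y}+\sigma{\bf y}$, but I would keep the subtraction so that the statement is uniform in the characteristic, matching \Cref{le.easy-ht90}.
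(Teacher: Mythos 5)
Your proof is correct and is exactly the paper's argument: the paper's own proof simply states that the lemma is immediate from \Cref{le.easy-ht90} together with the factorization \Cref{eq:2}, which is precisely the two-step reduction you spell out. Your added care about $\mathrm{Tr}_n$ being the coordinate-sum map and about applying \Cref{le.easy-ht90} to $P_{f,n}({\bf x})$ is a faithful (just more explicit) rendering of the same reasoning.
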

\begin{proof}
  This is immediate from \Cref{le.easy-ht90,eq:2}.
\end{proof}

\begin{corollary}\label{cor.nrsol}
  The number of zeros of $f_n$ (i.e. $2^n-wt(f_n)$) is half the number of solutions $({\bf x}, {\bf y})\in V_n^2$ to the equation \Cref{eq:3}.
\end{corollary}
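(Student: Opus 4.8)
The plan is to count the solution set of \Cref{eq:3} by fibering it over its first coordinate $\mathbf{x}$. By \Cref{le.eqn}, a fixed $\mathbf{x}\in V_n$ occurs as the first coordinate of some solution precisely when $f_n(\mathbf{x})=0$; hence the set of admissible $\mathbf{x}$ is exactly the zero set of $f_n$, of size $2^n-wt(f_n)$. It then suffices to show that each such $\mathbf{x}$ contributes the same number of companions $\mathbf{y}$, and that this common number is $2$.

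For the count of $\mathbf{y}$ I would exploit the linearity of the right-hand side: the assignment $\mathbf{y}\mapsto \mathbf{y}-\sigma\mathbf{y}$ is a $GF(2)$-linear endomorphism of $V_n$, so whenever \Cref{eq:3} has a single solution $\mathbf{y}_0$ its full solution set is the coset $\mathbf{y}_0+\ker(\id-\sigma)$, of size $|\ker(\id-\sigma)|$ independently of the chosen $\mathbf{x}$. Computing this kernel is the one substantive step: a vector is fixed by the rotation $\sigma$ exactly when all of its coordinates agree, and over $GF(2)$ the only such vectors are $\mathbf{0}$ and the all-ones vector, so $\ker(\id-\sigma)$ has precisely two elements.

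Assembling the pieces, the number of solution pairs equals $2(2^n-wt(f_n))$, i.e. twice the number of zeros of $f_n$, which is the assertion. I do not anticipate any genuine obstacle here; the argument is the orbit-counting bookkeeping that accompanies the linear-algebraic reformulation in \cref{le.eqn}, the only real ingredient being that the rotation-fixed vectors over $GF(2)$ form a two-element space, which is the constant-vector endpoint of the image/kernel dichotomy underlying \cref{le.easy-ht90}.
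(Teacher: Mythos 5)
Your proof is correct and takes essentially the same approach as the paper: both arguments fix $\mathbf{x}$, use \cref{le.eqn} to see that companions $\mathbf{y}$ exist exactly when $f_n(\mathbf{x})=0$, and then count exactly two such $\mathbf{y}$ per zero. The paper simply asserts that $\mathbf{y}$ is determined up to translation by the all-ones vector $\mathbf{1}$, which is precisely your kernel computation $\ker(\id-\sigma)=\{\mathbf{0},\mathbf{1}\}$ spelled out.
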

\begin{proof}
  This is immediate from \Cref{le.eqn}, since having fixed ${\bf x}$, a solution ${\bf y}$ to \Cref{eq:3} is uniquely determined up to translation by the all-$1$ vector ${\bf 1}=(1,1\cdots)$.
\end{proof}

Consider embeddings $V_n\to V_{dn}$ for all positive integers $d,n$ defined by
\begin{equation*}
  \iota=\iota_{n,dn}:V_n\ni (x_0,\cdots,x_{n-1})\mapsto (x_0,x_1,\cdots,x_{n-1},x_0,x_1,\cdots)\in V_{dn}. 
\end{equation*}
If we equip each $V_n$ with its rotation operator
\begin{equation}\label{eq:1}
  \sigma:(x_0,\cdots,x_{n-1})\mapsto (x_1,\cdots,x_{n-1},x_0)
\end{equation}
then the $\iota$ embeddings intertwine the respective rotations.

The spaces $V_n$ with connecting maps $\iota_{n,dn}$ form a diagram in the category of vector spaces whose colimit (i.e. union) we denote by $V_{\infty}$. The latter is nothing but the space of bi-infinite periodic sequences over $GF(2)$. Furthermore, the rotation operators $\sigma$ on the various $V_n$ lift precisely to the shift on $V_{\infty}$ in the sense of \Cref{subse.symb} (again denoted by $\sigma$).

\begin{remark}\label{re.like-fields}
  The analogy noted in \Cref{re.ht90} extends further: the $\iota:V_n\to V_{dn}$ parallel the inclusions $GF(2^n)\subset GF(2^{dn})$ of finite fields, the shift operator $\sigma$ on $V_{\infty}$ is similar in spirit to the Frobenius automorphism $x\mapsto x^2$ of the algebraic closure $\overline{GF(2)}$, etc.
\end{remark}

The polynomial functions $P_{f,n}:V_n\to V_n$ fit into commutative diagrams
\begin{equation*}
 \begin{tikzpicture}[auto,baseline=(current  bounding  box.center)]
  \path[anchor=base] (0,0) node (1) {$V_n$} +(2,.5) node (2) {$V_{dn}$} +(2,-.5) node (3) {$V_{n}$} +(4,0) node (4) {$V_{dn}$};
  \draw[->] (1) to[bend left=6] node[pos=.5,auto] {$\scriptstyle \iota$} (2);
  \draw[->] (1) to[bend right=6] node[pos=.5,auto,swap] {$\scriptstyle P_{f,n}$} (3);
  \draw[->] (2) to[bend left=6] node[pos=.5,auto] {$\scriptstyle P_{f,dn}$} (4);
  \draw[->] (3) to[bend right=6] node[pos=.5,auto,swap] {$\scriptstyle \iota$} (4);
 \end{tikzpicture}
\end{equation*}
and hence give rise to a map $P_f:V_{\infty}\to V_{\infty}$. This is significant because it will allow us, in a sense, to lift the zero-counting for $f_n$ from $V_n$ to $V_{\infty}$ via \Cref{le.eqn}: while the trace map $\mathrm{Tr}$ does not make sense on $V_{\infty}$, the equation
\begin{equation}\label{eq:4}
  P_f({\bf x}) = {\bf y}-\sigma {\bf y}
\end{equation}
does. In fact, we can do more: the definition of $P_f$ extends in the obvious fashion to a self-map of the entire sequence space $\Sigma:=GF(2)^{\bZ}$ by mimicking the definition in \Cref{eq:5} for monomials and then extending additively, as before. Note that $P_f$ will then be a shift intertwiner, in the sense that 
\begin{equation*}
 \begin{tikzpicture}[auto,baseline=(current  bounding  box.center)]
  \path[anchor=base] (0,0) node (1) {$\Sigma$} +(2,.5) node (2) {$\Sigma$} +(2,-.5) node (3) {$\Sigma$} +(4,0) node (4) {$\Sigma$};
  \draw[->] (1) to[bend left=6] node[pos=.5,auto] {$\scriptstyle \sigma$} (2);
  \draw[->] (1) to[bend right=6] node[pos=.5,auto,swap] {$\scriptstyle P_{f}$} (3);
  \draw[->] (2) to[bend left=6] node[pos=.5,auto] {$\scriptstyle P_{f}$} (4);
  \draw[->] (3) to[bend right=6] node[pos=.5,auto,swap] {$\scriptstyle \sigma$} (4);
 \end{tikzpicture}
\end{equation*}
commutes. 

We are now in a position to associate a shift $(X_f,\sigma)$ to each RS function $f$: define $X_f$ to be the subspace of
\begin{equation*}
(GF(2)\times GF(2))^{\bZ} \cong GF(2)^{\bZ}\times GF(2)^{\bZ} 
\end{equation*}
consisting of the $({\bf x},{\bf y})$ satisfying \Cref{eq:4}. The shift map $\sigma$ on $X_f$ will simply be the restriction of diagonal $(\sigma,\sigma)$ to
\begin{equation*}
  X_f\subset GF(2)^{\bZ}\times GF(2)^{\bZ}. 
\end{equation*}

The following remark captures the relationship between the zeros of $f_n$ and the shift $(X_f,\sigma)$.

\begin{lemma}\label{le.is-shift}
  With the notation above we have
  \begin{equation*}
    2^{n+1}-2wt(f_n) = N_n(X_f,\sigma)
  \end{equation*}
  with $N_n$ denoting number of fixed points of $\sigma^n$, as in \Cref{subse.symb}. 
\end{lemma}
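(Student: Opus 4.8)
The plan is to combine the finite-dimensional count from \Cref{cor.nrsol} with an identification of the $n$-periodic points of $(X_f,\sigma)$ with the solution set of \Cref{eq:3}. First I would invoke \Cref{cor.nrsol}: the number of zeros of $f_n$ equals half the number of pairs $({\bf x},{\bf y})\in V_n^2$ satisfying \Cref{eq:3}, so that solution set has cardinality $2(2^n-wt(f_n))=2^{n+1}-2wt(f_n)$. It therefore suffices to produce a bijection between those solutions and the set of $({\bf x},{\bf y})\in X_f$ fixed by $\sigma^n$.

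Next I would record the standard identification of the space of $n$-periodic bi-infinite $GF(2)$-sequences with $V_n$: a sequence ${\bf x}\in\Sigma=GF(2)^{\bZ}$ satisfies $\sigma^n{\bf x}={\bf x}$ precisely when it is the periodic extension of a unique element of $V_n$, and this identification carries the shift $\sigma$ to the rotation \Cref{eq:1}. Since $X_f\subseteq GF(2)^{\bZ}\times GF(2)^{\bZ}$ and $\sigma$ acts diagonally, a point $({\bf x},{\bf y})\in X_f$ is $\sigma^n$-fixed exactly when both ${\bf x}$ and ${\bf y}$ are $n$-periodic; hence such points correspond to pairs $({\bf x},{\bf y})\in V_n^2$ satisfying the infinite equation \Cref{eq:4}.

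The key remaining step is to check that, for $n$-periodic arguments, \Cref{eq:4} is literally \Cref{eq:3}. This comes down to two compatibilities. On one hand, $\sigma$ on $\Sigma$ restricts to the rotation on $V_n$ under the identification above, so ${\bf y}-\sigma{\bf y}$ is computed identically on either side. On the other hand, $P_f$ is defined on $\Sigma$ by the same shift-invariant local rule \Cref{eq:5} that defines $P_{f,n}$ on $V_n$ (each output coordinate is a sum of monomials in finitely many nearby input coordinates, with indices read modulo $n$ in the periodic case); consequently $P_f$ preserves $n$-periodicity and restricts to $P_{f,n}$ on the $\sigma^n$-fixed subspace, exactly as encoded in the commuting squares relating $P_{f,n}$, $P_{f,dn}$ and the maps $\iota$. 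Granting these, $({\bf x},{\bf y})\in V_n^2$ solves \Cref{eq:4} if and only if it solves \Cref{eq:3}, which establishes the bijection and hence the equality $N_n(X_f,\sigma)=2^{n+1}-2wt(f_n)$.

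I expect the only real subtlety to be the bookkeeping in the last paragraph, namely making precise that the restriction of $P_f$ to the $\sigma^n$-fixed subspace agrees with $P_{f,n}$; this is essentially automatic from the locality of the rule \Cref{eq:5}, but deserves an explicit sentence. Everything else is a direct translation between the finite picture on $V_n$ and the infinite picture on $\Sigma$.
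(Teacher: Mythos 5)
Your proposal is correct and follows essentially the same route as the paper's proof: invoke \Cref{cor.nrsol} and identify the $\sigma^n$-fixed points of $X_f$ with the solutions $({\bf x},{\bf y})\in V_n^2$ of \Cref{eq:3}, via the identification of $n$-periodic sequences with $V_n$. The paper compresses this into two bullet points, while you spell out the one genuinely checkable detail (that $P_f$ restricts to $P_{f,n}$ on $n$-periodic sequences by locality of the rule \Cref{eq:5}), which the paper leaves implicit.
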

\begin{proof}
  This follows from \Cref{cor.nrsol} after noting that
  \begin{itemize}
  \item $V_{\infty}\subset GF(2)^{\bZ}$ consists precisely of the periodic sequences, so the elements $({\bf x},{\bf y})$ contributing to $N_n$ belong to $V_{\infty}\times V_{\infty}$.
  \item $V_n$ is identifiable with the fixed-point set of $\sigma^n$ in $V^{\infty}$.   
  \end{itemize}
\end{proof}

We can now analyze the shift $(X_f,\sigma)$ for the purpose of extracting interesting properties for the function $n\mapsto wt(f_n)$ via \Cref{le.is-shift}.

\begin{theorem}\label{th.is-fin}
For any RS function $f$ the associated shift $(X_f,\sigma)$ is of finite type.   
\end{theorem}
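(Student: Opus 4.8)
The plan is to exhibit $X_f$ directly as the set of bi-infinite sequences over the four-letter alphabet $\cA = GF(2)\times GF(2)$ that avoid a finite list of forbidden words, which is exactly what \Cref{def.fin} demands. The whole point is that the defining equation \Cref{eq:4}, once read coordinate by coordinate, is a purely \emph{local} constraint on a bounded window.

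First I would rewrite \Cref{eq:4} at each index. Since $(\sigma\mathbf{y})_i = y_{i+1}$ and we work in characteristic $2$, the relation $P_f(\mathbf{x}) = \mathbf{y}-\sigma\mathbf{y}$ is equivalent to
\[
  P_f(\mathbf{x})_i = y_i + y_{i+1}\qquad\text{for every }i\in\bZ.
\]
Next I would record the locality of $P_f$. Because $\cC$ is a \emph{finite} collection of tuples \Cref{eq:20}, and each monomial \Cref{eq:5} only multiplies coordinates with indices in the range $[i,\,i+a_{d-1}]$, the quantity $P_f(\mathbf{x})_i$ is a fixed polynomial in the coordinates $x_i, x_{i+1}, \ldots, x_{i+M}$ alone, where $M$ is the largest index occurring in any tuple of $\cC$. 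Consequently the relation above at position $i$ depends only on the letters of $(\mathbf{x},\mathbf{y})$ sitting at the positions $i, i+1, \ldots, i+L$, where $L := \max(M,1)$ (the summand $y_i+y_{i+1}$ contributing only a two-letter window).

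With this in hand the conclusion is immediate. I would let $\cS$ be the set of all length-$(L+1)$ words $w = (u_0,v_0)(u_1,v_1)\cdots(u_L,v_L)$ over $\cA$ such that, evaluating $P_f$ at index $0$ on any sequence extending $(u_0,\ldots,u_L)$ (a value independent of the extension), one has $P_f(u)_0 \neq v_0 + v_1$. This $\cS$ is finite, of size at most $|\cA|^{L+1} = 4^{L+1}$. A sequence $(\mathbf{x},\mathbf{y})\in\cA^{\bZ}$ then contains no element of $\cS$ as a subword if and only if the displayed local relation holds at every $i$, that is, if and only if $(\mathbf{x},\mathbf{y})\in X_f$. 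Hence $X_f = X_{\cS}$ and $(X_f,\sigma)$ is of finite type.

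I do not expect a genuine obstacle here: the entire content lies in recognizing that \Cref{eq:4} is a bounded-window condition, and the only things to verify with care are that the window is truly finite (guaranteed by the finiteness of $\cC$ and of each of its tuples) and that the $\mathbf{y}$-contribution introduces nothing beyond a two-letter window. The remaining implication — ``definable by a bounded local rule'' $\Rightarrow$ ``of finite type'' — is precisely the defining property in \Cref{def.fin}, so no appeal to the broader theory of sliding-block codes is needed.
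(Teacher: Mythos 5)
Your proof is correct and takes essentially the same approach as the paper: both observe that the defining equation \Cref{eq:4} is a local, bounded-window constraint and realize $X_f$ as $X_{\cS}$ for the finite set $\cS$ of windows violating that constraint. The only difference is presentational—you make the window length $L$ and the forbidden words explicit, whereas the paper phrases the same idea via a shift-equivariant polynomial map $Q$ supported on a finite interval $I$.
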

\begin{proof}
  Recall that by definition,
  \begin{equation*}
    X_f\subset (GF(2)\times GF(2))^{\bZ}
  \end{equation*}
  consists of those pairs of elements ${\bf x}$, $\bf y$ in $GF(2)^{\bZ}$ satisfying \Cref{eq:4}, paraphrased here as
  \begin{equation}\label{eq:6}
    P_f({\bf x})-({\bf y}-\sigma{\bf y})={\bf 0}\in GF(2)^{\bZ}. 
  \end{equation}
  The left hand side of \Cref{eq:6} constitutes a shift-equivariant polynomial map
\begin{equation*}
  (GF(2)\times GF(2))^{\bZ}\ni ({\bf x},{\bf y})\mapsto Q({\bf x},{\bf y})\in GF(2)^{\bZ}, 
\end{equation*}
in the sense that
\begin{itemize}
\item there is some finite interval $I\subset \bZ$ such that
  \begin{equation*}
    Q({\bf x},{\bf y})_0 = \text{polynomial }R(x_i,y_j)\text{ for }i,j\in I
  \end{equation*}
  (that justifies the term ``polynomial'') and
\item $Q(\sigma{\bf x},\sigma{\bf y}) = \sigma Q({\bf x},{\bf y})$
  (i.e. ``shift-equivariant'').
\end{itemize}
In other words, $X_f$ consists precisely of those sequences of elements in $GF(2)\times GF(2)$ which do not contain, as subwords, the finitely many non-solutions to
\begin{equation*}
  R(x_i,y_j)=0,\ i,j\in I. 
\end{equation*}
This makes it clear that the shift is indeed of finite type. 
\end{proof}

In particular, \Cref{th.is-rat} and \Cref{le.is-shift} then proves

\begin{corollary}\label{cor.wtzeta}
  Let $f$ be an RS Boolean function and set
  \begin{equation*}
    N_n=2^{n+1}-2wt(f_n) = 2^n + W_f({\bf 0}).
  \end{equation*}
  Then, we have
  \begin{equation*}
    \exp\left(\sum_{n\ge 1} \frac{N_n}n s^n\right) = \frac 1{\det(1-sA)} 
  \end{equation*}
  for some integer square matrix $A$. 
\end{corollary}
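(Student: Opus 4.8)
The plan is to assemble the corollary directly from the three results immediately preceding it, since all the substantive work has already been carried out. First I would invoke \Cref{th.is-fin}, which guarantees that the shift $(X_f,\sigma)$ attached to the RS function $f$ is of finite type. This is exactly the hypothesis needed to apply \Cref{th.is-rat}, so the second step is to feed $(X_f,\sigma)$ into that theorem and conclude that its zeta function has the shape
\begin{equation*}
  \zeta_{X_f,\sigma}(s) = \exp\left(\sum_{n\ge 1}\frac{N_n(X_f,\sigma)}{n}s^n\right) = \frac{1}{\det(1-sA)}
\end{equation*}
for some integer square matrix $A$.

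Next I would rewrite the counting sequence. \Cref{le.is-shift} identifies $N_n(X_f,\sigma)$ with $2^{n+1}-2wt(f_n)$, which is precisely the quantity denoted $N_n$ in the statement. Substituting this equality into the exponent of the displayed zeta function turns the left-hand side into $\exp\left(\sum_{n\ge 1}\frac{N_n}{n}s^n\right)$, yielding the asserted identity with the same matrix $A$.

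The only remaining point is the alternative expression $N_n = 2^n + W_f({\bf 0})$ recorded in the statement. This is a bookkeeping identity rather than anything dynamical: the correlation $W_f({\bf 0})=\sum_{{\bf x}\in V_n}(-1)^{f_n({\bf x})}$ equals the number of zeros of $f_n$ minus the number of ones, i.e. $(2^n-wt(f_n))-wt(f_n)=2^n-2wt(f_n)$, whence $2^n+W_f({\bf 0})=2^{n+1}-2wt(f_n)$ and the two expressions for $N_n$ agree. I would dispatch this reconciliation in a single line.

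I expect essentially no obstacle at this stage, as the corollary is a formal consequence of what precedes it. The genuine difficulty lay upstream, in proving \Cref{th.is-fin}: namely that the defining condition \Cref{eq:4} cuts out a finite-type shift, which rests on $P_f$ being a shift-equivariant map depending on only finitely many coordinates. Since that step is already available, the work here reduces to correctly matching notation and chaining the implications in the right order.
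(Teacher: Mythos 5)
Your proposal is correct and matches the paper's own argument exactly: the paper derives \Cref{cor.wtzeta} by combining \Cref{th.is-fin}, \Cref{th.is-rat}, and \Cref{le.is-shift} in precisely the chain you describe. Your one-line reconciliation of $N_n=2^n+W_f({\bf 0})$ via the identity $W_f({\bf 0})=2^n-2wt(f_n)$ is also the intended (if unstated) reading of that notation.
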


Or again:

\begin{corollary}\label{cor.sum-pows}
  For any RS Boolean function $f$ there are algebraic integers $\alpha_i$, $1\le i\le r$ such that
  \begin{enumerate}[(a)]
  \item the multiset $\{\alpha_i\}$ is closed under Galois conjugation over $\bQ$, and
  \item we have
    \begin{equation*}
      wt(f_n) = 2^n - \frac{\alpha_1^n+\cdots+\alpha_r^n}2.
    \end{equation*}
  \end{enumerate}
\end{corollary}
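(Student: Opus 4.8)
The plan is to read off the $\alpha_i$ directly from the matrix $A$ produced by \Cref{cor.wtzeta}, taking them to be its eigenvalues. Recall that \Cref{cor.wtzeta} supplies an integer square matrix $A$ with
\[
  \exp\left(\sum_{n\ge 1}\frac{N_n}n s^n\right)=\frac 1{\det(1-sA)},\qquad N_n=2^{n+1}-2wt(f_n).
\]
First I would pass to logarithms on both sides. Writing $\det(1-sA)=\prod_i(1-s\lambda_i)$ for the eigenvalues $\lambda_i$ of $A$ (with multiplicity), the right-hand side becomes $-\log\det(1-sA)=\sum_{n\ge 1}\frac{\mathrm{tr}(A^n)}n s^n$, while the left-hand side is $\sum_{n\ge1}\frac{N_n}n s^n$ by definition. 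Comparing the coefficients of $s^n$ then yields the identity $N_n=\mathrm{tr}(A^n)=\sum_i\lambda_i^n$, valid for every $n\ge 1$.

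Next I would set $\alpha_1,\dots,\alpha_r$ (with $r$ the size of $A$) equal to these eigenvalues, so that $N_n=\sum_{i=1}^r\alpha_i^n$. The $\alpha_i$ are exactly the roots of the characteristic polynomial $\det(xI-A)$, which is monic with integer coefficients. This immediately gives (a): each $\alpha_i$ is an algebraic integer, and because the characteristic polynomial has rational coefficients, the absolute Galois group $\mathrm{Gal}(\overline{\bQ}/\bQ)$ permutes its roots while preserving their multiplicities, so the multiset $\{\alpha_i\}$ is closed under Galois conjugation.

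Finally, substituting $N_n=\sum_i\alpha_i^n$ into $N_n=2^{n+1}-2wt(f_n)$ and solving for $wt(f_n)$ produces
\[
  wt(f_n)=2^n-\frac{\alpha_1^n+\cdots+\alpha_r^n}2,
\]
which is (b). I do not expect a genuine obstacle here: the only step deserving a moment's care is the passage from the rational closed form of the zeta function to the trace identity $N_n=\mathrm{tr}(A^n)$, and this is the standard logarithmic-derivative computation. In effect the whole point of the finite-type argument behind \Cref{th.is-fin} and \Cref{cor.wtzeta} was to make that computation available, so the present corollary is a formal consequence.
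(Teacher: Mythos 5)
Your proposal is correct and matches the paper's own proof essentially verbatim: both take the $\alpha_i$ to be the eigenvalues (with multiplicity) of the matrix $A$ from \Cref{cor.wtzeta}, extract $N_n=\sum_i\alpha_i^n$ from the rational form of the zeta function via the standard logarithmic computation, and read off (b) by substitution, with (a) following because the $\alpha_i$ are the roots of the integer characteristic polynomial of $A$. The only difference is that you spell out the trace identity and the Galois-invariance argument, which the paper leaves implicit.
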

\begin{proof}
  Let $\alpha_i$, $1\le i\le r$ be the eigenvalues (with multiplicity) of the integer matrix $A$. We have
  \begin{equation*}
    \frac 1{\det(1-sA)} = \exp\left(\sum_{n\ge 1}\frac{\alpha_1^n+\cdots+\alpha_r^n}n s^n\right),
  \end{equation*}
  so
  \begin{equation*}
    2^{n+1} - 2wt(f_n) = N_n = \alpha_1^n+\cdots+\alpha_r^n
  \end{equation*}
  for $N_n$ as in \Cref{cor.wtzeta}. This completes the proof.
\end{proof}

Finally, as an immediate consequence of \Cref{cor.sum-pows} we obtain
\begin{corollary}
  The weights $wt(f_n)$ of an RS Boolean function $f$ satisfy a linear recurrence with integer coefficients. 
\end{corollary}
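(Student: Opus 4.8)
The plan is to extract from \Cref{cor.sum-pows} a monic integer polynomial annihilating the power sums $N_n=\alpha_1^n+\cdots+\alpha_r^n$, and then combine it with the elementary characteristic polynomial of the geometric sequence $2^n$.

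First I would recall from \Cref{cor.sum-pows} that $wt(f_n)=2^n-\tfrac12 N_n$, where the multiset $\{\alpha_i\}$ of algebraic integers is closed under Galois conjugation over $\bQ$. I then form $p(x)=\prod_{i=1}^r(x-\alpha_i)$. The key claim is that $p\in\bZ[x]$: its coefficients are, up to sign, the elementary symmetric functions of the $\alpha_i$, which are fixed by $\mathrm{Gal}(\ol{\bQ}/\bQ)$ by part (a) of \Cref{cor.sum-pows} and are therefore rational, while simultaneously being algebraic integers since each $\alpha_i$ is. A rational algebraic integer is an ordinary integer, so $p$ is monic with integer coefficients.

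Next I invoke the standard correspondence between linear recurrences and characteristic polynomials: a power sum $N_n=\sum_i\alpha_i^n$ lies in the solution space of the recurrence whose characteristic polynomial is $p(x)$, and scaling by $-\tfrac12$ preserves this, so $-\tfrac12 N_n$ is also annihilated by $p$. The sequence $2^n$ is annihilated by $(x-2)$. Since a sum of sequences obeying linear recurrences obeys the recurrence whose characteristic polynomial is the product (or least common multiple) of the individual ones, $wt(f_n)=2^n-\tfrac12 N_n$ is annihilated by $q(x):=(x-2)p(x)$. As a product of monic integer polynomials, $q$ is monic with integer coefficients, and being monic we may solve for the top term, yielding exactly a linear recurrence with integer coefficients.

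I expect no real obstacle; the result is genuinely immediate, and the only step that merits comment is the integrality of $p$, where Galois-closure is precisely what upgrades ``rational symmetric function of algebraic integers'' to ``ordinary integer.'' As an alternative that bypasses \Cref{cor.sum-pows} altogether, one can read the recurrence directly off \Cref{cor.wtzeta}: there $N_n=\mathrm{tr}(A^n)$ for an integer matrix $A$, so Cayley--Hamilton applied to $A$ (whose characteristic polynomial already lies in $\bZ[x]$) produces the integer recurrence for $N_n$, hence for $wt(f_n)$, without any appeal to symmetric-function arguments.
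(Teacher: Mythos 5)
Your proof is correct and takes essentially the same route as the paper, which simply declares the corollary an immediate consequence of \Cref{cor.sum-pows}; your symmetric-function argument for why $\prod_i(x-\alpha_i)$ lies in $\bZ[x]$ is precisely the detail the paper leaves implicit. Your alternative via \Cref{cor.wtzeta} and Cayley--Hamilton is equally valid, since the $\alpha_i$ arise in the paper as eigenvalues of that integer matrix $A$ in the first place.
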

This provides a new proof for the existence of the linear recurrences, which can be computed using the results in \cite{csk-rec1, csk-rec2}.

\section{Trace representations and the Weil conjectures}\label{se.tr}

We now give a parallel treatment for trace-context functions $f_n=f_{\cC,n}:GF(2^n)\to GF(2)$ as in \Cref{def.ctxt}. One is again interested in the weights $wt(f_n)$, i.e. the cardinalities of the sets $f_n^{-1}(1)\subset GF(2^n)$. 

The analogue of \Cref{le.easy-ht90} in the present setting is precisely the Hilbert theorem 90 recalled in \Cref{re.ht90}:

\begin{lemma}
  Let $n$ be a positive integer. An element $x\in GF(2^n)$ has vanishing trace if and only if $x=y-\sigma y$, where $\sigma$ is the Frobenius automorphism $y\mapsto y^2$ on $GF(2^n)$. 
\end{lemma}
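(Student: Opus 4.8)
The plan is to recognize this statement as exactly the additive Hilbert Theorem 90 recalled in \Cref{re.ht90}, specialized to the cyclic Galois extension $GF(2)\subset GF(2^n)$, whose Galois group is generated by the Frobenius $\sigma\colon y\mapsto y^2$ and has order $n$. Under the identification $GF(2^n)\cong GF(2)^n$ the coordinate-sum map is literally the field trace $\mathrm{Tr}(x)=\sum_{i=0}^{n-1}\sigma^i(x)=\sum_{i=0}^{n-1}x^{2^i}$, so the entire argument runs parallel to the proof of \Cref{le.easy-ht90}, with the rotation operator there replaced by Frobenius here.

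I would dispatch the easy direction ($\Leftarrow$) first by a telescoping computation. Writing $\mathrm{Tr}(\sigma y)=\sum_{i=0}^{n-1}\sigma^{i+1}(y)=\sum_{i=1}^{n}\sigma^{i}(y)$ and using $\sigma^n=\id$ on $GF(2^n)$, the index $i=n$ term equals the $i=0$ term, so $\mathrm{Tr}(\sigma y)=\mathrm{Tr}(y)$ and hence $\mathrm{Tr}(y-\sigma y)=\mathrm{Tr}(y)-\mathrm{Tr}(\sigma y)=0$. Thus every element of the form $y-\sigma y$ has vanishing trace.

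For the harder direction ($\Rightarrow$) I would argue by a dimension count over $GF(2)$ rather than exhibit $y$ by hand. Consider the $GF(2)$-linear endomorphism $\delta=\id-\sigma$ of $GF(2^n)$. Its kernel is the fixed field of $\sigma$, namely $GF(2)$, which is $1$-dimensional, so by rank--nullity $\dim\mathrm{im}(\delta)=n-1$. By the easy direction $\mathrm{im}(\delta)\subseteq\ker(\mathrm{Tr})$. On the other hand $\mathrm{Tr}$ is a $GF(2)$-linear map $GF(2^n)\to GF(2)$, so once one knows it is not identically zero its kernel has dimension exactly $n-1$. Two subspaces of dimension $n-1$ with one contained in the other must coincide, whence $\mathrm{im}(\delta)=\ker(\mathrm{Tr})$, which is precisely the assertion that every trace-zero element is of the form $y-\sigma y$.

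The one nontrivial input, and the step I expect to be the only real obstacle, is the nonvanishing (equivalently surjectivity) of $\mathrm{Tr}$. I would settle this by the standard observation that $\mathrm{Tr}(x)=x+x^2+\cdots+x^{2^{n-1}}$ is a nonzero polynomial of degree $2^{n-1}<2^n$, hence cannot vanish at all $2^n$ points of $GF(2^n)$; this is simply the separability of the extension made concrete. If one prefers a constructive proof avoiding the dimension count, one can instead fix $\theta$ with $\mathrm{Tr}(\theta)=1$ and set $y=\sum_{i=0}^{n-1}\left(\sum_{j=0}^{i}\sigma^j(x)\right)\sigma^i(\theta)$; a direct telescoping check (using $\sigma^n=\id$ and $\mathrm{Tr}(x)=0$) then shows $y-\sigma y=x$. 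Either route closes the proof.
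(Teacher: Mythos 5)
Your proof is correct, but note that the paper offers no proof of this lemma at all: it is stated as being precisely the additive Hilbert Theorem 90 recalled in \Cref{re.ht90}, specialized to the cyclic extension $GF(2)\subset GF(2^n)$, just as the parallel statement \Cref{le.easy-ht90} has its proof explicitly omitted. Your opening paragraph performs exactly that identification, so to that extent you match the paper; what you add is a complete, self-contained argument, and it holds up. The dimension count is clean: $\ker(\id-\sigma)$ is the fixed field of Frobenius, i.e.\ the roots of $x^2=x$, which is $GF(2)$, so $\mathrm{im}(\id-\sigma)$ has dimension $n-1$ by rank--nullity; it sits inside $\ker(\mathrm{Tr})$ by the telescoping computation; and $\mathrm{Tr}$ is not identically zero because $x+x^2+\cdots+x^{2^{n-1}}$ is a nonzero polynomial of degree $2^{n-1}<2^n$, so $\ker(\mathrm{Tr})$ also has dimension $n-1$ and the two subspaces coincide. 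Your constructive variant also checks out: with $\mathrm{Tr}(\theta)=1$ and $\mathrm{Tr}(x)=0$, the sum $y=\sum_{i=0}^{n-1}\bigl(\sum_{j=0}^{i}\sigma^j(x)\bigr)\sigma^i(\theta)$ telescopes to $y-\sigma y=x\,\mathrm{Tr}(\theta)=x$ (the $i=n-1$ term vanishes since its coefficient is $\mathrm{Tr}(x)$), which is the classical proof of additive Hilbert 90. One small imprecision in your setup: under an arbitrary identification $GF(2^n)\cong GF(2)^n$ the coordinate-sum map is \emph{not} literally the field trace; that requires a basis all of whose elements have trace $1$. Since this remark is never used in your argument, nothing breaks, but it is worth stating carefully given that the analogy between \Cref{le.easy-ht90} and this lemma is exactly the point the paper is making.
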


As in \Cref{se.rot}, we introduce the polynomials
\begin{equation*}
  P_{f,n}: GF(2^n)\to GF(2^n)
\end{equation*}
defined for monomials \Cref{eq:20} by
\begin{equation*}
  P_{f,n}(x) = x\cdot x^{2^{a_1}}\cdots x^{2^{a_{d-1}}}
\end{equation*}
and extended additively from this in general. These are restrictions to $GF(2^n)$ of a single polynomial $P_f$ defined on the entire algebraic closure $\overline{GF(2)}$. We now have $f_n=\mathrm{Tr}_n\circ P_{f,n}$ (as in the RS case), hence the following versions of \Cref{le.eqn} and \Cref{cor.nrsol}.

\begin{lemma}\label{le.eqn-gf}
  For $x\in GF(2^n)$ we have $f_n(x)=0$ if and only if
  \begin{equation}\label{eq:8}
    P_{f,n}(x) = y-y^2 
  \end{equation}
  for some $y\in GF(2^n)$. 
\end{lemma}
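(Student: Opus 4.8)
The plan is to follow the proof of \Cref{le.eqn} essentially verbatim, since both of its inputs have clean analogues in the trace setting. The first input is the factorization $f_n = \mathrm{Tr}_n \circ P_{f,n}$ recorded just above the statement; it reduces the condition $f_n(x) = 0$ to the assertion that the element $P_{f,n}(x) \in GF(2^n)$ has vanishing trace. So I would begin by rewriting $f_n(x) = 0$ as $\mathrm{Tr}_n(P_{f,n}(x)) = 0$.

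The second input is the additive Hilbert theorem 90 for the cyclic Galois extension $GF(2) \subset GF(2^n)$ --- precisely the lemma stated immediately before this one (compare \Cref{re.ht90}) --- which says that a $z \in GF(2^n)$ has $\mathrm{Tr}_n(z) = 0$ if and only if $z = y - \sigma y$ for some $y$, where $\sigma$ is the Frobenius $y \mapsto y^2$. Applying this with $z = P_{f,n}(x)$, and noting that $y - \sigma y = y - y^2$ (signs being immaterial in characteristic two), I would conclude that $\mathrm{Tr}_n(P_{f,n}(x)) = 0$ holds precisely when $P_{f,n}(x) = y - y^2$ for some $y \in GF(2^n)$, which is exactly \Cref{eq:8}. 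Chaining the two equivalences then yields $f_n(x) = 0 \iff P_{f,n}(x) = y - y^2$, as desired.

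I expect no real obstacle here. In fact this is the point in the paper where the trace side is \emph{easier} than the rotation-symmetric side: whereas \Cref{le.eqn} had to lean on \Cref{le.easy-ht90}, an ad hoc linear-algebra surrogate for Hilbert 90 on the rotation module $GF(2)^n$, here the extension $GF(2^n)/GF(2)$ is genuinely Galois with cyclic group generated by Frobenius, so the classical additive Hilbert theorem 90 applies directly and supplies the needed characterization of trace-zero elements with nothing further to check.
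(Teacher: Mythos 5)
Your proposal is correct and coincides with the paper's own (implicit) argument: the paper states this lemma without proof precisely because it follows, exactly as you say, from the factorization $f_n = \mathrm{Tr}_n\circ P_{f,n}$ together with the additive Hilbert theorem 90 stated immediately beforehand. Nothing further is needed.
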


\begin{corollary}\label{cor.nrsol-gf}
  The number of zeros of $f_n$ (i.e. $2^n-wt(f_n)$) is half the number of solutions $({x}, {y})\in GF(2^n)^2$ to the equation \Cref{eq:8}.  
\end{corollary}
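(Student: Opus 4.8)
The plan is to reproduce the argument for the RS-context \Cref{cor.nrsol} line for line, with the role of the all-ones vector $\mathbf 1$ taken over by the subfield $GF(2)$. The starting point is \Cref{le.eqn-gf}: an element $x\in GF(2^n)$ is a zero of $f_n$ exactly when $P_{f,n}(x)$ lies in the image of the Artin--Schreier map
\begin{equation*}
  \wp:GF(2^n)\to GF(2^n),\qquad \wp(y)=y-y^2=y+y^2,
\end{equation*}
the two expressions agreeing because we are in characteristic $2$. So the first step is simply to record that, for fixed $x$, solvability of \Cref{eq:8} in $y$ is equivalent to $f_n(x)=0$.

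The second step is to analyze the fibers of $\wp$. Since $(y+z)^2=y^2+z^2$ in characteristic $2$, the map $\wp$ is additive, hence $GF(2)$-linear, so each of its nonempty fibers is a coset of $\ker\wp$ and therefore has cardinality $|\ker\wp|$. But $\wp(y)=0$ says $y^2=y$, i.e. $y\in GF(2)=\{0,1\}$; thus $\ker\wp=\{0,1\}$ has exactly two elements. Consequently, for every $x$ with $f_n(x)=0$ there are precisely two values of $y$ solving \Cref{eq:8} (differing by the additive constant $1$), while for every $x$ with $f_n(x)=1$ there are none.

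The final step is to sum these fiber sizes over all $x\in GF(2^n)$: the total number of solution pairs $(x,y)$ is then $2$ times the number of zeros of $f_n$, which is exactly the asserted identity. I do not anticipate any real obstacle here; the only point requiring care is the computation $|\ker\wp|=2$, which is what produces the factor of $\tfrac12$. This is the precise counterpart, under the dictionary of \Cref{re.like-fields}, of the fact used in \Cref{cor.nrsol} that the solution $\mathbf y$ to \Cref{eq:3} is determined only up to translation by $\mathbf 1$—there the kernel of $\mathbf y\mapsto \mathbf y-\sigma\mathbf y$ is the two constant sequences, here the kernel of $y\mapsto y-y^2$ is the two elements of $GF(2)$.
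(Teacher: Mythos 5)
Your proof is correct and is essentially the argument the paper has in mind: just as the proof of \Cref{cor.nrsol} notes that $\mathbf y$ is determined up to translation by $\mathbf 1$, here the solution $y$ to \Cref{eq:8} is determined up to translation by an element of $\ker(y\mapsto y-y^2)=GF(2)$, which is exactly your fiber computation $|\ker\wp|=2$. Your identification of this as the additive Hilbert 90 / Artin--Schreier counterpart of the RS argument matches the paper's intent precisely.
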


We will now repurpose the notation from \Cref{se.rot}: $X_f$ will denote the affine plane algebraic curve
\begin{equation}\label{eq:9}
  X_f=\{(x,y)\in \overline{GF(2)}^2\ |\ P_f(x) = y-y^2\}. 
\end{equation}

With this notation, \Cref{cor.nrsol-gf} says that we have
\begin{equation}\label{eq:11}
  2^{n+1}-2wt(f_n) = N_n(X_f). 
\end{equation}

We would now like to apply the point count in \Cref{th.weil-cv} to the curve $X_f$ with $q=2$. The only slight obstacle is that theorem applies to {\it projective} curves, whereas $X_f$ is affine. Its closure $X'_f$ in the projective plane $\bP^2$ over the algebraic closure $\overline{GF(2)}$ is given by the {\it homogenization} of the defining equation 
\begin{equation*}
  P_f(x) = y-y^2
\end{equation*}
in \Cref{eq:9}:
\begin{equation}\label{eq:10}
  X'_f=\{[x:y:z]\in \bP^2\ |\ \overline{P}_f(x,z) = yz^{e-1}-y^2z^{e-2}\}
\end{equation}
where
\begin{itemize}
\item $e$ is the largest degree of a monomial in $P_f$, and  
\item $\overline{P}_f$ is the homogeneous degree-$e$ polynomial in $x,z$ obtained by multiplying each monomial of $P_f(x)$ by the appropriate power of $z$.  
\end{itemize}

\begin{remark}\label{re.e}
  $e$ is of the form
  \begin{equation*}
    1+2^{a_0}+\cdots+2^{a_{d-1}} 
  \end{equation*}
for a tuple \Cref{eq:20} and is thus odd and $\ge 3$.   
\end{remark}

Now, note that the original affine curve $X_f$ consists precisely of those points in its projective completion \Cref{eq:10} with $z=0$. Since exactly one of the monomials in $\overline{P}_f(x,z)$ is a power of $x$, we have
\begin{equation*}
  [x:y:z]\in X'_f,\ z=0\Rightarrow x=0\Rightarrow [x:y:z] = [0,1,0]=:p_0. 
\end{equation*}
In other words, the affine curve is missing exactly one point of its completion:
\begin{equation*}
  |X'_f(GF(2^n))| - |X_f(GF(2^n))|=1,\ \forall n\ge 1. 
\end{equation*}
In other words, the version of \Cref{th.weil-cv} applicable to $X_f$ simply omits the `$+1$' summand in that statement:

\begin{theorem}\label{th.alpha-beta}
  Let $f_n$, $n\ge 1$ be a family of trace functions $GF(2^n)\to GF(2)$ attached to a finite set of tuples \Cref{eq:20}. Then, there are Galois-invariant multisets of algebraic integers
  \begin{itemize}
  \item $\alpha_i$, $1\le i\le 2g$ with $|\alpha_i|=\sqrt 2$;
  \item $\beta_j$, $1\le j\le \Delta$ with $|\beta_j|=1$      
  \end{itemize}
  such that
  \begin{equation}\label{eq:17}
    wt(f_n) = 2^{n-1} + \frac{\sum_{i=1}^{2g} \alpha_i^n}2 + \frac{\sum_{j=1}^{\Delta} \beta_j^n}2.
  \end{equation}
\end{theorem}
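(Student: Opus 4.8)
The plan is to apply \Cref{th.weil-cv} directly to the projective closure $X'_f$ of \Cref{eq:10} and then transport the resulting point count back to the weight $wt(f_n)$ using the identities already assembled above. Since the polynomial $P_f$ and the expression $y-y^2$ both have coefficients in $GF(2)$, the curve $X'_f$ is a projective curve defined over $GF(2)$, so \Cref{th.weil-cv} applies with $q=2$; in particular the resulting $\alpha_i$ satisfy $|\alpha_i|=\sqrt 2$ and the $\beta_j$ satisfy $|\beta_j|=1$, exactly as the statement requires. Here $g$ and $\Delta$ are, by definition, the genus of the normalization of $X'_f$ and its number of missing points.

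The key steps, in order, are as follows. First, \Cref{th.weil-cv} with $q=2$ produces Galois-invariant multisets $\{\alpha_i\}$ and $\{\beta_j\}$ with
\[
  N_n(X'_f) = 2^n + 1 - \sum_{i=1}^{2g}\alpha_i^n - \sum_{j=1}^{\Delta}\beta_j^n.
\]
Second, I would invoke the one-point discrepancy established just before the statement: $X'_f$ and its affine part $X_f$ differ over each $GF(2^n)$ only by the single point $p_0=[0:1:0]$, so that $N_n(X'_f)=N_n(X_f)+1$. Third, combining these cancels the two constant $+1$ terms and leaves $N_n(X_f)=2^n-\sum_{i}\alpha_i^n-\sum_{j}\beta_j^n$; substituting the relation \Cref{eq:11}, namely $N_n(X_f)=2^{n+1}-2\,wt(f_n)$, then gives
\[
  2^{n+1}-2\,wt(f_n) = 2^n - \sum_{i=1}^{2g}\alpha_i^n - \sum_{j=1}^{\Delta}\beta_j^n.
\]
Solving this last identity for $wt(f_n)$ reproduces \Cref{eq:17}.

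I do not expect a serious obstacle: once \Cref{th.weil-cv} is available, the argument is essentially bookkeeping. The one point deserving care is the sign and scaling in the final step. Because $wt(f_n)$ enters with the factor $-2$, dividing through flips the signs of the two power sums to $+$ and halves them, which is exactly what produces the shape of \Cref{eq:17}; it would be easy to slip in a spurious sign here. The Galois-invariance of $\{\alpha_i\}$ and $\{\beta_j\}$ is inherited verbatim from \Cref{th.weil-cv} and needs no separate argument, and the genuinely substantive work — pinning down $g$ and $\Delta$ for these particular curves — lies outside this statement and is taken up later through the successive-blowup genus computations.
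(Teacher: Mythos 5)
Your proposal is correct and follows exactly the paper's own argument: apply \Cref{th.weil-cv} with $q=2$ to the projective closure $X'_f$, cancel the $+1$ against the single missing point $p_0=[0:1:0]$, and substitute \Cref{eq:11} to solve for $wt(f_n)$. The arithmetic in your final step checks out, so nothing further is needed.
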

\begin{proof}
  Simply apply \Cref{th.weil-cv} to the projective curve $X'_f$, omit the `$+1$' term in \Cref{eq:12} as explained above, and use \Cref{eq:11} to identify $N_n(X_f)$ with $2^{n+1}-2wt(f_n)$. The rest is simple arithmetic.
\end{proof}

\Cref{th.weil-cv} makes it clear that the size $\Delta$ of the set of $\beta_j$ depends on ``how singular'' the projective curve in question is. For that reason, it will be of interest to understand the singularities of our curve $X'_f$ defined in \Cref{eq:10}. Writing
\begin{equation*}
  Q(x,y,z)=Q_f(x,y,z) := \overline{P}_f(x,z)-yz^{e-1}+y^2z^{e-2}
\end{equation*}
for the homogeneous degree-$e$ polynomial whose vanishing defines $X'_f$. The singularities of the latter are the points where
\begin{equation*}
  \frac{\partial Q}{\partial x} = \frac{\partial Q}{\partial y} = \frac{\partial Q}{\partial z}=0.
\end{equation*}
The partial derivative $\frac{\partial Q}{\partial y}$ is nothing but $z^{e-1}$ (because we are in characteristic $2$ and hence the derivative of $y\mapsto y^2$ vanishes), so the singular set of $X'_f$ is either empty or precisely 
\begin{equation*}
  \{p_0\}=\{[0:1:0]\} = X'_f\setminus X_f. 
\end{equation*}
As for whether or not $p_0$ is indeed singular, we first observe that the $x$ and $y$ partial derivatives do indeed vanish, leaving the question of whether $\frac{\partial Q}{\partial z}$ does. Recall from \Cref{re.e} that $e$ is odd and hence all powers of $z$ appearing in $\overline{P}_f(x,z)$ are even. It follows that the $z$-partial derivative of $\overline{P}_f(x,z)$ vanishes, so
\begin{equation*}
  \frac{\partial Q}{\partial z}(p_0) = y^2z^{e-3}. 
\end{equation*}
This is zero (and hence the point is singular) when $e>3$ and non-zero when $e=3$. We thus have two possibilities:
\begin{enumerate}[(a)]
\item\label{item:1} $e=3$, in which case $X'_f$ is an elliptic curve;
\item\label{item:2} $e>3$, in which case $X'_f$ is a projective plane curve with a single singularity at $[0:1:0]$.  
\end{enumerate}

We now focus on case \labelcref{item:2}, seeking to determine the discrepancy between $X'=X'_f$ and its desingularization. First, we focus attention on the affine portion $C$ of $X'$ corresponding to $y\ne 0$. Making the variable change
\begin{equation*}
  u=\frac xy,\ v=\frac zy,
\end{equation*}
we can describe $C$ as the curve in the $u,v$ plane defined by the equation
\begin{equation}\label{eq:13}
  \overline{P}_f(u,v)+v^{e-1}+v^{e-2}=0 
\end{equation}
(where we have dropped minus signs, since we are working in characteristic two). We now proceed to resolve the singularity $p_0=(0,0)$ (in $u,v$ coordinates) of $C$ by the procedure described in \cite[Theorem V.3.9 and surrounding discussion]{hart}, of successive blowup. 

The initial blowup of the curve $C\in \bA^2$ (the affine plane) defined by \Cref{eq:13} centered at the singularity $(0,0)$ is achieved as described on \cite[pp.29-30]{hart}:
\begin{itemize}
\item Introduce coordinates $\alpha,\beta$ for the projective line $\bP^1$ over $\overline{GF(2)}$. 
\item Consider the subvariety $V$ of $\bA^2\times \bP^1$ cut out by \Cref{eq:13} and the equation
  \begin{equation*}
    u\beta=v\alpha. 
  \end{equation*}
  $V$ is the union of the distinguished projective line $E:=\{(0,0)\}\times \bP^1$ and the blowup $C_1$ of the original curve $C=C_0$. 
\item As in \cite[Example I.4.9.1]{hart}, we now cover $E$ with the open affine patches $\alpha\ne 0$ and $\beta\ne 0$ and determine the intersection of $C_1$ with each open patch in order to determine the preimage of the singularity $(0,0)$ through the rational map $C_1\to C$.  
\end{itemize}

In this last step, assume first that $\alpha\ne 0$. By rescaling we can thus assume $v=u\beta$. Making this substitution in \Cref{eq:13} we obtain
\begin{equation}\label{eq:14}
  u^eQ(\beta)=u^{e-1}\beta^{e-1}+u^{e-2}\beta^{e-2}
\end{equation}
for some polynomial in $\beta$ with free term $1$, so that $Q=1+R$ with $R(0)=0$. 

When $u=0$ we have $v=0$ as well, and the equations describe $E$. In order to determine its intersection with $C_1$ assume $u\ne 0$ in \Cref{eq:14} and divide through by $u^{e-2}$ to obtain
\begin{equation}\label{eq:15}
 u^2 (1+R(\beta)) = u\beta^{e-1} + \beta^{e-2} = \beta^{e-2}(1+u\beta). 
\end{equation}
The only solution to this equation with $u=0$ is the point $[\alpha:\beta]=[1:0]$ on $E\cong \bP^1$.

A similarly simple calculation shows that $C_1\cap E$ contains {\it no} points in the open patch $\beta\ne 0$. In conclusion, the partial desingularization $C_1\to C_0$ of $(0,0)$ provides a single singular point, obtained as $(0,0)$ on curve defined by \Cref{eq:15} in the $u,\beta$ plane.

Let $A$ be the localization of the ring
\begin{equation*}
  \overline{GF(2)}[u,\beta]/(\text{equation }\Cref{eq:15})
\end{equation*}
at the ideal $(u,\beta)$ and $\widehat{A}$ its completion with respect to its maximal ideal. In other words, $\widehat{A}$ is the formal power series ring
\begin{equation*}
  \overline{GF(2)}[[u,\beta]]
\end{equation*}
modulo the equation \Cref{eq:15}. 

Since we are in characteristic $2$ and $e-2$ is odd, $1+u\beta$ and $1+R(\beta)$ are both (invertible) $(e-2)^{nd}$ powers in $\widehat{A}$. This means that we can make a change of variables
\begin{equation*}
  u\mapsto u,\ \beta\mapsto \gamma=g(\beta)
\end{equation*}
in $\widehat{A}$ so as to transform \Cref{eq:15} into
\begin{equation}\label{eq:16}
  u^2=\gamma^{e-2}.
\end{equation}
In the language of \cite[\S I.5]{hart}, the $(0,0)$ singularity of \Cref{eq:15} is {\it analytically isomorphic} to the $(0,0)$ singularity of \Cref{eq:16}. But the singularities of the form \Cref{eq:16} are analyzed in \cite[Example V.3.9.5]{hart}: they are resolved through a sequence of blowups
\begin{equation}\label{eq:19}
  C_{\frac{e-1}2}\to \cdots\to C_2\to C_1,
\end{equation}
with each $C_i$ having a single singular point.

This analysis will allow us to sharpen \Cref{th.alpha-beta} in two ways. First, since we have just established that the desingularization $\widetilde{X'_f}\to X'_f$ has a {\it unique} point mapping to the singularity of $X'_f$, \Cref{th.weil-cv} says that in fact $\Delta=0$, i.e. there are no $\beta$s in \Cref{eq:17}:

\begin{corollary}\label{cor.no-beta}
  Let $f_n$, $n\ge 1$ be a family of trace functions $GF(2^n)\to GF(2)$ attached to a finite set of tuples \Cref{eq:20}. Then, there is a Galois-invariant multiset of algebraic integers
  \begin{equation*}
    \alpha_i,\ 1\le i\le 2g \text{ with } |\alpha_i|=\sqrt 2
  \end{equation*}
  such that
  \begin{equation}\label{eq:18}
    wt(f_n) = 2^{n-1} + \frac{\sum_{i=1}^{2g} \alpha_i^n}2.
  \end{equation}  
\end{corollary}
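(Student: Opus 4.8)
The plan is to obtain \Cref{cor.no-beta} as a direct consequence of \Cref{th.alpha-beta} by showing that the integer $\Delta$ appearing there is zero. Recall from \Cref{th.weil-cv} that $\Delta$ records the discrepancy $\bigl|\widetilde{X}(\overline{GF(q)})-X(\overline{GF(q)})\bigr|$ between the normalization and the curve, and that the normalization map is an isomorphism away from the singular locus. Since the singular set of $X'_f$ is at most the single point $p_0=[0:1:0]$, the quantity $\Delta$ equals the number of points of $\widetilde{X'_f}$ lying over $p_0$, minus one. So the whole task reduces to proving that the fiber of $\widetilde{X'_f}\to X'_f$ over $p_0$ is a \emph{single} point, i.e. that $p_0$ is unibranch.

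First I would dispose of case \labelcref{item:1}: when $e=3$ the curve $X'_f$ is already smooth, its normalization is itself, $\Delta=0$ trivially, and substituting into \Cref{eq:17} gives \Cref{eq:18} immediately. For case \labelcref{item:2} ($e>3$) I would invoke the local analysis already carried out above. After the first blowup the singularity at $p_0$ is analytically isomorphic to the cusp $u^2=\gamma^{e-2}$ of \Cref{eq:16}, and the key point is that $e-2$ is \emph{odd} (\Cref{re.e}). A singularity $u^2=\gamma^{m}$ with $\gcd(2,m)=1$ is analytically irreducible: the parametrization $t\mapsto(\gamma,u)=(t^2,t^{m})$ realizes its normalization with exactly one point over the origin. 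Equivalently, the resolving chain of blowups \Cref{eq:19} passes through curves $C_i$ each carrying a \emph{single} singular point, so no branching occurs and the fiber over $p_0$ in the final smooth model is a single point. Hence $\Delta=0$, and \Cref{th.alpha-beta} yields \Cref{eq:18}.

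The main (indeed only) substantive obstacle is confirming that the successive blowups do not split $p_0$ into several branches. This is precisely what the preceding computation establishes: at the first stage the strict transform $C_1$ meets the exceptional divisor $E$ in the single point $[\alpha:\beta]=[1:0]$ and in no point of the patch $\beta\neq0$, and the same unibranch behavior persists at each later stage of \Cref{eq:19}. Once this is in hand, the rest is bookkeeping: with $\Delta=0$ the $\beta_j$ summand in \Cref{eq:17} disappears, leaving $wt(f_n)=2^{n-1}+\tfrac12\sum_{i=1}^{2g}\alpha_i^n$ with the $\alpha_i$ a Galois-invariant multiset of algebraic integers of absolute value $\sqrt2$, exactly as asserted.
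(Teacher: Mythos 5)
Your proof is correct and takes essentially the same route as the paper: the corollary is deduced from \Cref{th.alpha-beta} by showing $\Delta=0$, which follows (case \labelcref{item:1} being trivial) from the preceding blowup analysis establishing that the desingularization $\widetilde{X'_f}\to X'_f$ has a unique point over the singularity $p_0$. Your direct justification that the cusp $u^2=\gamma^{e-2}$ with $e-2$ odd is analytically irreducible via the parametrization $t\mapsto(t^2,t^{e-2})$ is a nice self-contained substitute for the paper's appeal to \cite[Example V.3.9.5]{hart}, but it plays the same role and the overall argument coincides.
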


Secondly, we can determine the genus $g$ of the desingularization $\widetilde{X'_f}\to X'_f$ (i.e. the $g$ appearing in \Cref{eq:18}). This will require stepping through the desingularization procedure by successive blowup sketched above, using the numerical information provided by \cite[Example V.3.9.2]{hart}.

The latter says that the genus $g$ of the smooth curve $\widetilde{X'_f}$ is obtained from the arithmetic genus $p_a(X'_f)$ by subtracting
\begin{equation*}
  \sum_p\frac{r_p(r_p-1)}2
\end{equation*}
for all singular points appearing during the successive blowups, where $r_p$ is the {\it multiplicity} of the singular point $p$.

We now assemble the ingredients:
\begin{itemize}
\item The arithmetic genus $p_a(X'_f)$ is
  \begin{equation*}
    \frac{(e-1)(e-2)}2,
  \end{equation*}
  since $e$ is the degree of the plane curve $X'_f\subset \bP^2$ (\cite[Exercise I.7.2]{hart}).
\item The multiplicity of the singularity $(0,0)$ on a plane curve is the smallest degree appearing in an expansion of its defining equation. It is thus $e-2$ for the initial singularity \Cref{eq:13} and $2$ for each of the subsequent $\frac{e-3}2$ desingularization steps in \Cref{eq:19}.  
\item In conclusion, the genus $g$ is
  \begin{equation*}
    \frac{(e-1)(e-2)}2 - \frac{(e-2)(e-3)}2 - \frac{e-3}2 = \frac{e-1}2.
  \end{equation*}
\end{itemize}

In short:

\begin{corollary}\label{cor.g}
  The number $2g$ of summands in \Cref{eq:18} is $e-1$, where $e$ is the degree
  \begin{equation*}
    1+2^{a_0}+\cdots+2^{a_{d-1}}
  \end{equation*}
  of $P_f$. 
\end{corollary}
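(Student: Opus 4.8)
The plan is to read $g$ off the ``genus $=$ arithmetic genus $-$ singularity correction'' formula recalled immediately above, supplying the three numerical ingredients that the preceding analysis has already isolated: the arithmetic genus of the plane curve $X'_f$, the multiplicity of its unique singular point $p_0=[0:1:0]$, and the multiplicities of the infinitely near points produced along the resolution $\widetilde{X'_f}\to X'_f$. Because \Cref{cor.no-beta} identifies the number of summands in \Cref{eq:18} with $2g$, all that is left is to pin down $g$.

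First I would record the arithmetic genus. Since $X'_f\subset\bP^2$ is cut out by the degree-$e$ equation \Cref{eq:10}, the genus-degree formula (\cite[Exercise I.7.2]{hart}) gives $p_a(X'_f)=\tfrac{(e-1)(e-2)}2$. Next I would assemble the multiplicities entering $\sum_p\tfrac{r_p(r_p-1)}2$. The multiplicity of a plane-curve singularity is the least degree occurring in its local expansion, so $p_0$ contributes $r_{p_0}=e-2$, read off from \Cref{eq:13}. The blowup analysis then reduced $p_0$ to the normal form $u^2=\gamma^{e-2}$ of \Cref{eq:16}, whose resolution \Cref{eq:19} consists of $\tfrac{e-3}2$ successive desingularization steps, each contributing a single infinitely near point of multiplicity $2$.

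Finally I would substitute these into $g=p_a(X'_f)-\sum_p\tfrac{r_p(r_p-1)}2$ and simplify:
\[
  g=\frac{(e-1)(e-2)}2-\frac{(e-2)(e-3)}2-\frac{e-3}2=\frac{e-1}2,
\]
so that $2g=e-1$; here the first two terms collapse to $e-2$, and subtracting $\tfrac{e-3}2$ leaves $\tfrac{e-1}2$.

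The one genuinely load-bearing step --- the only place anything beyond the genus-degree formula and elementary arithmetic is used --- is the bookkeeping of the resolution: confirming that every intermediate curve in \Cref{eq:19} carries exactly one singular point and that its multiplicity stays equal to $2$ throughout. This is precisely what the reduction to the normal form \Cref{eq:16} and the resolution of the cusps $u^2=\gamma^{e-2}$ in \cite[Example V.3.9.5]{hart} provide; once the multiplicity sequence $(e-2,2,\dots,2)$ with $\tfrac{e-3}2$ twos is in hand, the corollary follows by arithmetic.
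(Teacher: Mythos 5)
Your proof is correct and follows essentially the same route as the paper: the genus-minus-corrections formula from \cite[Example V.3.9.2]{hart}, the arithmetic genus $\tfrac{(e-1)(e-2)}2$ from the genus--degree formula, the multiplicity $e-2$ at $p_0$ read off \Cref{eq:13}, and the $\tfrac{e-3}2$ subsequent multiplicity-$2$ points coming from the normal form \Cref{eq:16} and its resolution \Cref{eq:19}. The arithmetic and the identification of the summand count with $2g$ via \Cref{cor.no-beta} also match the paper exactly.
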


\section{Quadratic functions}\label{se.quad}

By `quadratic' we mean functions $f_{\cC,n}$ (in either the trace or RS setup) associated to collections $\cC$ of tuples \Cref{eq:20} with $d=2$. In that case we refer to $\cC$ itself (or to its members) as being quadratic. The functions $f_{\cC,n}$ for quadratic $\cC$ form the focus of the present section.

First, it is well known that quadratic Boolean functions are {\it plateaued}: the weight of a quadratic Boolean function $f_n$ either vanishes or is of the form $2^{n-1}\pm 2^{\frac{n+v}2-1}$ for some integer $v$ of the same parity as $n$. the same applies in trace context, since trace functions as in \Cref{def.ctxt} can always be regarded as quadratic Boolean functions after choosing an appropriate basis for $GF(2^n)$ (see \cite[Remark 3.2]{cgl}).

Now let $\cC$ be a finite collection of quadratic tuples \Cref{eq:20} and $f_{\cC}$, $g_{\cC}$ the RS and trace function families attached to $\cC$ respectively. With this in place, \cite[Theorem 5.1]{cgl} implies that $f_n$ and $g_n$ are plateaued for the same parameter $v=v(n)$.

We can in fact say more \cite[Theorem 2.1]{us}:

\begin{theorem}\label{th.rs-tr}
  Let $\cC$ be a quadratic family of tuples \Cref{eq:20} and $f_{\cC,n}$, $g_{\cC,n}$ the RS and respectively trace functions attached to it. Then, we either have $wt(f_{\cC,n})=0=wt(g_{\cC,n})$ or
  \begin{align*}
    wt(f_{\cC,n}) &= 2^{n-1}\pm 2^{\frac{n+v}2 -1}\\
    wt(g_{\cC,n}) &= 2^{n-1}\pm 2^{\frac{n+v}2 -1}
  \end{align*}
  for the same $v=v(n)$ (but perhaps not the same sign). 
\end{theorem}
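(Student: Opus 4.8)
The plan is to put $f_{\cC,n}$ and $g_{\cC,n}$ on the same footing as quadratic Boolean functions and to read their weights off the single Walsh coefficient at the origin, namely
\[
  \sum_{x}(-1)^{f_n(x)} = 2^n - 2\,wt(f_n),
\]
which for a plateaued function is sharply constrained. First I would recall (\cite[Remark 3.2]{cgl}) that a trace-context function becomes a genuine quadratic Boolean function once $GF(2^n)$ is identified with $V_n$ through a choice of basis, so that the plateaued machinery applies verbatim to both $f_n$ and $g_n$. Being quadratic, each is plateaued, so its Walsh coefficient at $\mathbf 0$ lies in $\{0,\pm 2^{(n+v)/2}\}$ for the relevant linearity parameter $v$ (of the same parity as $n$); feeding the two nonzero options back into the displayed identity yields precisely the announced shape $2^{n-1}\pm 2^{(n+v)/2-1}$, the value $0$ corresponding to a balanced function.

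The input that forces the two parameters to agree is the already-cited \cite[Theorem 5.1]{cgl}, which guarantees that $f_n$ and $g_n$ are plateaued with the \emph{same} amplitude $2^{(n+v)/2}$, hence with a common $v=v(n)$. Granting this, the only discrepancy the two weight formulas can exhibit is the sign of the respective Walsh coefficient at $\mathbf 0$; there is no mechanism coupling $\mathrm{sgn}\sum_x(-1)^{f_n(x)}$ to $\mathrm{sgn}\sum_x(-1)^{g_n(x)}$, which is exactly the ``perhaps not the same sign'' proviso in the statement.

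It remains to establish the vanishing dichotomy and to see that it is shared. Here I would argue that $wt(f_n)=0$ is equivalent to $f_n\equiv 0$, which in turn is equivalent to the extreme case $v(n)=n$: indeed $wt(f_n)=0$ forces the Walsh coefficient at $\mathbf 0$ to equal $2^n$, i.e. the amplitude $2^{(n+v)/2}$ to be maximal and $v=n$; conversely $v=n$ means the alternating form attached to $f_n$ vanishes, so $f_n$ is affine, and since $f_n$ is a sum of pure quadratic monomials with neither constant nor linear part it vanishes at the origin and must therefore be identically $0$. The same argument applies to $g_n$, and because $v(n)$ is common to both by the previous paragraph, the two functions vanish simultaneously; the ``no constant term'' observation also rules out $wt=2^n$, so the maximal-amplitude case can only deliver weight $0$. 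The genuine obstacle in all of this is the matching of the parameter $v$, resting on \cite[Theorem 5.1]{cgl}: its proof must compare the ranks of the alternating forms of $f_n$ and of $g_n$ across the substitution $x\mapsto(x,x^2,\cdots,x^{2^{n-1}})$ that links the two contexts. Everything downstream of that rank comparison is the elementary Walsh-coefficient bookkeeping described above.
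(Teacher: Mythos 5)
The first thing to note is that the paper contains no proof of this statement: it is quoted directly from \cite[Theorem 2.1]{us}, and the sentences leading up to it record exactly the two inputs you also invoke (trace functions become quadratic Boolean functions via \cite[Remark 3.2]{cgl}; \cite[Theorem 5.1]{cgl} matches the plateau parameters). So what a proof must supply is precisely the part you gloss over, and there is a genuine gap there: the balanced case. Your own Walsh-coefficient analysis puts three values on the table for each weight --- $2^{n-1}$ (coefficient $0$ at the origin) as well as $2^{n-1}\pm 2^{(n+v)/2-1}$ --- whereas the dichotomy you are trying to prove allows only $0$ and $2^{n-1}\pm 2^{(n+v)/2-1}$, and you never reconcile the two. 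You cannot reconcile them by discarding the balanced branch, because it genuinely occurs: for $\cC=\{(0,1)\}$ and $n=3$, $f_3=x_0x_1+x_1x_2+x_2x_0$ has weight $4=2^{n-1}$, and $g_3(x)=\mathrm{Tr}(x^3)$ on $GF(8)$ also has weight $4$ (cubing permutes $GF(8)^{\times}$ and the trace is balanced there); neither weight is $0$ or of the form $4\pm 2^{(3+v)/2-1}$. So any correct account must include the balanced branch and, crucially, must prove that $f_n$ and $g_n$ fall into the \emph{same} branch simultaneously. Your claim that ``the only discrepancy the two weight formulas can exhibit is the sign'' is exactly the unproved step: \cite[Theorem 5.1]{cgl} matches nonlinearities, i.e.\ the common plateau amplitude $\max_a|W(a)|=2^{(n+v)/2}$, and this says nothing about whether the particular coefficient $W(\mathbf 0)$ vanishes. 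Two quadratic functions can share $v$ with exactly one of them balanced (compare $x_1x_2$ and $x_1x_2+x_3$ in three variables, both with $v=1$). Establishing the simultaneity of balancedness is the actual content of \cite[Theorem 2.1]{us}, and it requires comparing the two quadratic forms themselves (their radicals and the restriction of each function to its radical) across the substitution $x\mapsto(x,x^2,\dots,x^{2^{n-1}})$, which your proposal defers to a citation that does not contain it.

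There is a second, more local, gap in your treatment of the identically-zero case. The implication ``$v=n$ forces $f_n\equiv 0$'' is fine on the RS side, where the surviving monomials of the algebraic normal form are purely quadratic, so an additive (degree $\le 1$) function with that ANF must be zero by uniqueness of the ANF --- note that this, and not ``vanishes at the origin,'' is what does the work; as written that clause is a non sequitur. But ``the same argument applies to $g_n$'' is false: after identifying $GF(2^n)$ with $V_n$ through a basis, $\mathrm{Tr}\bigl(x^{1+2^t}\bigr)$ acquires diagonal contributions $x_i^2=x_i$, i.e.\ genuine linear ANF terms, so the ``no constant or linear part'' premise fails on the trace side. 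There one needs field-theoretic identities instead: vanishing of the bilinear form $\sum_t \mathrm{Tr}\bigl(xy^{2^t}+x^{2^t}y\bigr)$ forces the exponents to pair up as $t\leftrightarrow n-t$ or satisfy $2t\equiv 0 \pmod n$, and then $g_n\equiv 0$ follows from $\mathrm{Tr}\bigl(x^{1+2^{n-t}}\bigr)=\mathrm{Tr}\bigl(x^{1+2^t}\bigr)$ and $\mathrm{Tr}\bigl(x^{1+2^{n/2}}\bigr)\equiv 0$. This second gap is repairable with the tools at hand; the balanced-branch gap above is not, and it is where the theorem's real difficulty lies.
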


We have
\begin{equation*}
  v(n) = \deg\mathrm{gcd}(x^n-1,A_n(x))
\end{equation*}
where
\begin{equation*}
  A_n(x)=\sum_{(0,t)\in \cC}(x^t+x^{n-t} )
\end{equation*}
and the greatest common divisor is taken in the polynomial ring $GF(2)[x]$. It follows that $v(n)$ is periodic in $n$, and reaches its maximal value once per period: precisely when $n$ is divisible by
\begin{equation}\label{eq:23}
  N=N_{\cC} = \min\{n\text{ such that } A(x)\ |\ x^n-1 \}
\end{equation}
where
\begin{equation*}
  A(x) = \sum_{(0,t)\in \cC}(x^t+x^{-t} )
\end{equation*}
and divisibility takes place in the Laurent polynomial ring $GF(2)[x^{\pm 1}]$. For all of this we refer to \cite[Theorem 5.2]{us}.

All of this additional information available in the quadratic case allows us to recast \Cref{cor.sum-pows,cor.no-beta} as follows.

\begin{theorem}\label{th.quad}
  Let $\cC$ be a finite collection of quadratic tuples \Cref{eq:20} and $f_{\cC,n}$, $g_{\cC,n}$ the RS and trace functions associated to $\cC$ respectively. Then, there are Galois-invariant multisets
  \begin{equation*}
    |\alpha_i|=\sqrt 2 = |\gamma_j|,\ 1\le i,j\le max_n 2^{\frac {v(n)}2}
  \end{equation*}
such that  
  \begin{equation*}
    wt(f_n) = 2^{n-1}-\frac{\sum\alpha_i^n}2
  \end{equation*}
  and
  \begin{equation*}
    wt(g_n) = 2^{n-1}+\frac{\sum\gamma_i^n}2. 
  \end{equation*}
  Furthermore, the degree of the group generated by the roots of unity $\frac{\alpha_i}{\sqrt 2}$ (resp. $\frac{\gamma_j}{\sqrt 2}$) is either the period $N=N_{\cC}$ of $v(n)$ or $2N$. 
\end{theorem}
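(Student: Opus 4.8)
The plan is to assemble this theorem from the pieces already in hand: \Cref{cor.sum-pows} on the RS side, \Cref{cor.no-beta} on the trace side, and the plateaued-weight description in \Cref{th.rs-tr} together with the formula for $v(n)$. First I would record the existence and size of the multisets. On the trace side \Cref{cor.no-beta} already produces a Galois-invariant multiset $\{\gamma_j\}$ with $|\gamma_j|=\sqrt 2$ and $wt(g_n)=2^{n-1}+\tfrac12\sum\gamma_j^n$, with cardinality $2g=e-1$ by \Cref{cor.g}; on the RS side \Cref{cor.sum-pows} gives Galois-invariant $\alpha_i$ with $wt(f_n)=2^{n-1}-\tfrac12\sum\alpha_i^n$, and one must check the $\alpha_i$ all have absolute value $\sqrt 2$, which follows because $\sum\alpha_i^n=2^{n+1}-2wt(f_n)-2^n=2^n-(2wt(f_n)-2^n)$ is controlled by the plateaued bound $2wt(f_n)-2^n=\pm 2^{\tfrac{n+v(n)}2}$, forcing each characteristic value to sit on the circle of radius $\sqrt 2$.

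The key quantitative step is to pin the count of the $\alpha_i$ (equivalently $\gamma_j$) to $\max_n 2^{v(n)/2}$. Here I would use the plateaued formula: from \Cref{th.rs-tr}, $2wt(f_n)-2^n=\pm 2^{(n+v(n))/2}$, so $\bigl|\sum\alpha_i^n\bigr|=2^{(n+v(n))/2}$. Writing $\alpha_i=\sqrt2\,\omega_i$ with $\omega_i$ a root of unity (this is where \Cref{pr.theta-irred} and the surrounding \Cref{subse.rts} discussion of scaled roots of unity become the conceptual input), we get $\bigl|\sum\omega_i^n\bigr|=2^{v(n)/2}$. Taking $n$ to run over multiples of the period $N$ where $v(n)$ is maximal, the character sum $\sum\omega_i^n$ achieves its largest possible modulus $r$ (the number of $\omega_i$) exactly when all $\omega_i^n$ align, which happens along the arithmetic progression dictated by the common order of the $\omega_i$; matching moduli then yields $r=\max_n 2^{v(n)/2}$. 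The same argument applies verbatim to the $\gamma_j$.

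For the final assertion on the degree of the group $\langle \omega_i\rangle$ (resp. $\langle\gamma_j/\sqrt2\rangle$), I would identify that group with the group of roots of unity appearing in the eigenvalues, and relate its order to the period $N$ of $v(n)$. The function $n\mapsto\sum\omega_i^n$ is periodic precisely with the order $M$ of the group generated by the $\omega_i$, while $v(n)$ is periodic with period $N$; since $2^{v(n)/2}=\bigl|\sum\omega_i^n\bigr|$ reads off the modulus of that periodic sequence, $N$ must divide $M$. The subtlety producing the dichotomy $M\in\{N,2N\}$ is exactly the $\sqrt2$-scaling phenomenon isolated in \Cref{pr.theta-irred}: the polynomial governing the eigenvalues is of the form $x^{2t}-2^t=\prod_{d\mid t}\Theta_d(x)$, and $\Theta_d$ stays irreducible (orbit order $N$) except in the anomalous case where the exact power of $2$ dividing $d$ is $4$, where the Galois orbit splits in two and the order of the root-of-unity group doubles to $2N$. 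I expect the main obstacle to be precisely this last bookkeeping step: cleanly translating the orbit-splitting of \Cref{pr.theta-irred} into the statement that the group order is either $N$ or $2N$, and verifying that no other multiple of $N$ can occur, which requires knowing that the $\omega_i$ are genuine $\Theta_d$-roots for the relevant $d\mid t=N$ and invoking linear disjointness of $\bQ(\sqrt2)$ from the cyclotomic field exactly as in Cases 1 and 2 of that proof.
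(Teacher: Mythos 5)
Your first two steps track the paper's own proof fairly closely: the paper also obtains the $\gamma_j$ from \Cref{cor.no-beta} and pins their number to $\max_n 2^{v(n)/2}$ by comparing the plateaued formula of \Cref{th.rs-tr} with the power-sum expression, essentially your ``matching moduli'' argument (though the paper gets the RS half by transferring from the trace half via \cite[Theorem 5.1]{cgl} or \Cref{th.rs-tr}, whereas your growth-bound argument only bounds the spectral radius by $\sqrt 2$ and does not by itself force \emph{every} $\alpha_i$ onto the circle $|z|=\sqrt 2$; also, your alternative appeal to \Cref{cor.g} for the count would additionally require proving $e-1=\max_n 2^{v(n)/2}$, which is nowhere established). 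The genuine gap is in your third paragraph. You derive both the root-of-unity property of the phases and the $N$-versus-$2N$ dichotomy from \Cref{pr.theta-irred} and the factorization $x^{2t}-2^t=\prod_{d\mid t}\Theta_d(x)$. But that polynomial is the minimal polynomial of the matrix $R(t)$ attached to the \emph{monomial} function $(0,t)_n$ only; for a general quadratic collection $\cC$ no such description exists (the paper states this explicitly after the Easy Coefficients Conjecture: there is no analogue of \Cref{eq:26} for general quadratic RS functions), and identifying the eigenvalues with roots of $x^{2t}-2^t$ is precisely the content of the later \Cref{th.char}, so invoking it here is unavailable at best and circular at worst. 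Even the bookkeeping is off: for $(0,t)_n$ one has $N=2t$, not $t=N$. Note also that root-of-unity phases are genuinely not automatic for Weil numbers: $\gamma=\tfrac{1+i\sqrt 7}{2}$ is an algebraic integer of modulus $\sqrt 2$ whose phase has infinite order, so some input beyond $|\gamma_j|=\sqrt 2$ is required.

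What the paper does instead, and what your proposal is missing, is a sign analysis. Since the multiset $\{\gamma_j\}$ is stable under complex conjugation, $\bigl|\sum_j\chi_j^n\bigr|$ (with $\chi_j=\gamma_j/\sqrt 2$) attains its maximal value exactly when all $\chi_j^n$ equal $+1$ or all equal $-1$; comparing with the locus $\{n:\ N\mid n\}$ where $v(n)$ is maximal then gives: order $=N$ when the deviation $wt(g_N)-2^{N-1}$ is maximal and positive, order $=2N$ when it is maximal in absolute value but negative (all $\chi_j^N=-1$), and in the remaining, balanced case $wt(g_N)=2^{N-1}$ the paper must quote \cite[Corollary 5.19]{us} (non-balancedness at $2N$) and rerun the argument at $2N$. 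Your proposal never addresses this balanced case at all --- the ``alignment'' step silently assumes the deviation is nonzero at multiples of $N$ --- and without the sign dichotomy (or something replacing it) your claim that no multiple of $N$ beyond $2N$ can occur has no proof.
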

\begin{proof}
  To fix ideas, we focus on the trace functions $g_n=g_{\cC,n}$. The RS half of the statement will follow from this and \cite[Theorem 5.1]{cgl} (which says that  $f_{\cC,n}$ and  $g_{\cC,n}$ have the same nonlinearity) or \Cref{th.rs-tr}.

  On the one hand, we know that
  \begin{equation}\label{eq:24}
    wt(g_n) = 2^{n-1}\pm 2^{\frac {n+v}2-1}\text{ or }0.
  \end{equation}
  On the other hand, by \Cref{cor.no-beta} we have
  \begin{equation}\label{eq:22}
    wt(g_n) = 2^{n-1}+\frac{\sum \gamma_j^n}2
  \end{equation}
  for a Galois-invariant multiset of algebraic integers $\gamma_j$ of absolute value $\sqrt 2$. We write
  \begin{equation*}
    \chi_j = \frac{\gamma_j}{\sqrt 2}
  \end{equation*}
  for the phases of $\gamma_j$. 

  The fact that the number of $\gamma_j$ is $\max 2^{\frac v2}$ follows by comparing \Cref{eq:24,eq:22}: the former says that the maximal absolute value of
  \begin{equation}\label{eq:7}
    \frac{wt(g_n)-2^{n-1}}{2^{\frac n2 -1}}
  \end{equation}
is $\max 2^{\frac v2}$ while the latter shows it is the size of the multiset $(\gamma_j)_j$.

  We now observe that
  \begin{itemize}
  \item By \Cref{eq:22}, the \Cref{eq:7} is maximized precisely when all $\gamma_j^n$ are positive, i.e. $n$ is divisible by
    \begin{equation*}
      \mathrm{ord}(\gamma_j,j):=\left|\text{group generated by the phases of }\gamma_j\right|. 
    \end{equation*}
  \item On the other hand, \Cref{eq:24} shows that \Cref{eq:7} is maximized in absolute value if and only if $wt(g_n)\ne 0$ and $v(n)$ is maximal, i.e. $n$ is divisible by the period $N=N_{\cC}$ defined in \Cref{eq:23}.
  \item \Cref{eq:7} can be maximized in absolute value but negative only there is some $n$ so that $\chi_j^n=-1$ for all $j$. 

  \end{itemize}
  We now consider several possibilities.  
  \begin{enumerate}[(a)]
  \item\label{item:3} {\bf $wt(g_N)\ne 2^{N-1}$ and \Cref{eq:7} is positive for $n=N$. } In this case the remarks above show that \Cref{eq:7} achieves its maximal value at $n=N$ and hence $\mathrm{ord}(\gamma_j,j)=N$. 
  \item\label{item:4} {\bf $wt(g_N)\ne 2^{N-1}$ and \Cref{eq:7} is negative for $n=N$. } This means that \Cref{eq:7} is maximal in absolute value but negative at $n=N$ and hence $\chi_j^N=-1$ for all $j$. But this then implies that $\mathrm{ord}(\gamma_j,j)=2N$. 
  \item {\bf $wt(g_N)=2^{N-1}$. } We know from \cite[Corollary 5.19]{us} that $g_{2N}$ is {\it not} balanced, i.e. $wt(g_{2N})\ne 2^{2N-1}$. We can now reiterate the arguments in cases \labelcref{item:3} and \labelcref{item:4} with $2N$ in place of $N$ to conclude that
    \begin{equation*}
      \mathrm{ord}(\gamma_j,j) = 2N\text{ or }4N.
    \end{equation*}
  \end{enumerate}
  This finishes the proof of the theorem.
\end{proof}

The following result is an offshoot of the proof of \Cref{th.quad}. 

\begin{theorem}\label{cor.rec-ord}
  Under the hypotheses of \Cref{th.quad} the weights $wt(f_n)$ and $wt(g_n)$ satisfy linear recurrences of orders $\le 2N+1$, where $N$ is the period of the sequence $v(n)$ of plateau parameters.
\end{theorem}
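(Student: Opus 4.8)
The plan is to read the recurrence straight off the closed form for the weights furnished by \Cref{th.quad}, treating each weight sequence as a fixed linear combination of $n$-th powers and then invoking the elementary dictionary between such combinations and linear recurrences. The whole statement is essentially a bookkeeping corollary of \Cref{th.quad}; the only work is counting \emph{distinct} characteristic values.

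Concretely, \Cref{th.quad} gives
\[
  wt(g_n) = 2^{n-1} + \frac12\sum_j \gamma_j^n = \frac12\cdot 2^n + \frac12\sum_j \gamma_j^n,
\]
so $wt(g_n)$ is a rational combination of the geometric sequences $n\mapsto 2^n$ and $n\mapsto\gamma_j^n$. First I would recall the standard fact that a sequence of the form $\sum_k c_k\lambda_k^n$ satisfies a linear recurrence whose characteristic polynomial is $\prod_\lambda (x-\lambda)$, the product running over the \emph{distinct} bases occurring (repeated bases are merged by adding coefficients, and a combined coefficient that happens to vanish can only lower the order). Hence the order of the minimal recurrence is at most the number of distinct elements of $\{2\}\cup\{\gamma_j\}_j$.

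Next I would bound that count. Since $|\gamma_j|=\sqrt2\neq 2$, the base $2$ is distinct from every $\gamma_j$ and contributes exactly one extra root. Writing $\gamma_j=\sqrt2\,\chi_j$ with $\chi_j$ a root of unity, \Cref{th.quad} tells us the group generated by the phases $\chi_j$ has order $N$ or $2N$; in either case every $\chi_j$ lies in a cyclic group of order $\le 2N$, so there are at most $2N$ distinct values among the $\gamma_j$, and therefore at most $2N+1$ distinct characteristic values in all, giving a recurrence of order $\le 2N+1$. The identical argument applies verbatim to $wt(f_n)=2^{n-1}-\tfrac12\sum_i\alpha_i^n$ with the $\alpha_i$ in place of the $\gamma_j$ (the sign change only alters a coefficient, not the set of bases). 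Finally, because the multiset $\{\gamma_j\}$ (resp.\ $\{\alpha_i\}$) is Galois-invariant, so is its underlying set of distinct elements, whence $(x-2)\prod_{\text{distinct }\gamma}(x-\gamma)$ has rational—indeed integer, as the $\gamma_j$ are algebraic integers—coefficients, so the recurrence may be taken with integer coefficients.

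There is no genuine obstacle, only a point requiring care: the improvement from the a priori count $\max_n 2^{v(n)/2}$ of the multiset size down to $2N+1$ rests entirely on passing from the \emph{multiset} $\{\gamma_j\}$ to its set of \emph{distinct} values, which is controlled by the group-order conclusion of \Cref{th.quad} rather than by the multiplicities. The remaining subtlety is simply verifying that the $2^{n-1}$ term contributes one new root not already present among the $\gamma_j$, which is guaranteed by the distinct moduli $2$ versus $\sqrt2$.
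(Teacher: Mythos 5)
Your proposal follows the same basic strategy as the paper---read the recurrence off the closed form supplied by \Cref{th.quad} and bound the number of distinct characteristic values---but it leans on precisely the part of that theorem's \emph{statement} that the paper's own proof does not trust, and this creates a genuine gap. You quote \Cref{th.quad} as saying that the group generated by the phases $\chi_j=\gamma_j/\sqrt 2$ has order $N$ or $2N$, and conclude there are at most $2N$ distinct $\gamma_j$. But the proof of \Cref{th.quad} establishes something weaker: in the balanced case $wt(g_N)=2^{N-1}$ (case (c) of that proof), the argument is rerun with $2N$ in place of $N$ and yields only $\mathrm{ord}(\gamma_j,j)=2N$ \emph{or} $4N$. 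So the ``$N$ or $2N$'' clause in the statement is inconsistent with its own proof, and the paper's proof of the present theorem explicitly confronts the possibility $\mathrm{ord}(\gamma_j,j)=4N$. In that case your counting argument only shows that the phases lie in a cyclic group of order $4N$, hence at most $4N$ distinct $\gamma_j$, hence a recurrence of order $\le 4N+1$---short of the claimed bound.

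The paper closes this hole with information extracted from the \emph{proof} (not the statement) of \Cref{th.quad}: the case $\mathrm{ord}(\gamma_j,j)=4N$ occurs only when $\chi_j^{2N}=-1$ for every $j$, i.e.\ $\gamma_j^{2N}=-2^N$ for all $j$, so that all $\gamma_j$ lie among the $2N$ roots of $x^{2N}+2^N$ and every characteristic value is a root of $(x-2)(x^{2N}+2^N)$, of degree $2N+1$; in the complementary case $\mathrm{ord}(\gamma_j,j)\le 2N$ one uses $(x-2)(x^{2N}-2^N)$ instead, which is exactly the bookkeeping you carried out. Your argument is correct as a formal deduction from \Cref{th.quad} as literally stated, and your observations about merging repeated bases, the distinctness of $2$ from the $\gamma_j$, and integrality via Galois invariance are all sound; but to be a complete proof compatible with what \Cref{th.quad}'s proof actually delivers, you must add the case analysis showing that when the phase group has order $4N$ the phases all satisfy $\chi_j^{2N}=-1$, confining the $\gamma_j$ to only $2N$ possible values.
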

\begin{proof}
  In the language of \Cref{th.quad}, consider the various possible values for $\mathrm{ord}(\gamma_j,j)$:

  If it is $\le 2N$ then we are done. Indeed, by \Cref{eq:22} the weight $w(g_n)$ is a linear combination of $n^{th}$ powers of algebraic integers satisfying the polynomial equation
  \begin{equation*}
    (x-2)(x^{2N}-2^N)=0.
  \end{equation*}

  On the other hand, it follows from the proof of \Cref{th.quad} that the case
  \begin{equation*}
    \mathrm{ord}(\gamma_j,j)=4N
  \end{equation*}
  occurs only when $\gamma_j^{2N}=-2^N$ for all $j$. This means we have recursion polynomial 
  \begin{equation*}
    (x-2)(x^{2N}+2^N)=0.
  \end{equation*}
  instead.

  Either way, the minimal recursion polynomial will have degree $\le 2N$.
\end{proof}

It turns out that, for the quadratic MRS function $(0,t)_n=h_{t,n},$ say, in the RS context, the algebraic integers $\alpha_1, \ldots, \alpha_{2^t}$ from \Cref{cor.sum-pows} and \Cref{th.alpha-beta}
can be taken to be the roots of the characteristic polynomial of the square matrix $R(t)$ associated with $h_{t,n}$ in \cite[Section 3]{us}. The next theorem proves this. Note that this count of the numbers   $\alpha_i$  agrees with the count given in \Cref{eq:18} and \Cref{cor.g}.  The matrix $R(t)$ has $2^t$ rows and is given explicitly in \cite[Theorem 3.1]{us}.

To describe the matrix we need the cyclic permutation $\mu$ which acts on vectors $(b_1, b_2, \ldots, b_k)$ of any length $k$ by placing the last entry to the front, e.g.
  \begin{equation*}
    \mu^2((0,1,0,1,0,0))=(0,0,0,1,0,1).
  \end{equation*}
  We use $0_j$ to stand for a string of $j$ consecutive entries equal to $0$ and similarly for $1_j.$ Now $R(t)$ is the square matrix whose rows are the $2^{t-1}$ pairs
$$\mu^i((1,0_{2^{t-1}-1},1,0_{2^{t-1}-1})),~\mu^i((1,0_{2^{t-1}-1},-1,0_{2^{t-1}-1}))$$
for $i=0, 1, \ldots, 2^{t-1}-1$ taken in order. 

The minimal polynomial for $R(t)$ has degree $2t$ and is \cite[Theorem 3.4]{us}
\begin{equation} \label{eq:26}
x^{2t} - 2^t.
\end{equation}
We let $\delta_1, \ldots, \delta_{2t}$ denote the roots of \Cref{eq:26}. These roots are obviously distinct.  The characteristic polynomial, say $c_t(x),$ for $R(t)$ has degree $2^t$ and has the same roots $\delta_i,$ but in general some of these roots will occur multiple times. Thus the multiset of roots of $c_t(x),$ counted with multiplicities, will have size $2^t$ but only $2t$ distinct elements.

\begin{theorem} \label{th.char}
The recursion for the weights of $(0,t)_n = h_{t,n},$ extended backwards from $n = 2t+1$ to $n=1,$ generates a sequence $w_1, w_2, \ldots$ (with $w_i = wt(h_{t,i})$ for 
$i \geq 2t+1$) such that
\begin{equation} \label{eq:27}
w_n = 2^{n-1} - \frac{1}{2}(\delta_1^n + \ldots + \delta_{2^t}^n),~n = 1, 2, \ldots.
\end{equation}
Here $\delta_1, \ldots, \delta_{2^t}$ is the list of the $2^t$ roots of the characteristic polynomial $c_t(x)$ for the matrix $R(t),$ with the distinct roots 
$\delta_1, \ldots, \delta_{2t}$ of the minimal polynomial $m_t(x) =x^{2t}-2^t$ for $R(t)$ listed first. The remaining roots are various duplicates of the first $2t$ roots.
\end{theorem}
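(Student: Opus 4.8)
The plan is to prove that the two size-$2^t$ multisets of algebraic integers attached to $h_{t,n}$ — the shift characteristic values $\alpha_i$ of \Cref{se.rot} and the eigenvalues $\delta_i$ of $R(t)$ — coincide, by writing the single sequence $2^n-2wt(h_{t,n})$ through each of them and comparing. On the shift side, \Cref{cor.sum-pows} together with \Cref{cor.g} (applied with $e=1+2^t$, so that $2g=e-1=2^t$) gives
\begin{equation*}
  2^n - 2wt(h_{t,n}) = \sum_{i=1}^{2^t}\alpha_i^n,\qquad n\ge 1,
\end{equation*}
with the $\alpha_i$ algebraic integers of absolute value $\sqrt 2$. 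On the matrix side I want the companion identity $2^n-2wt(h_{t,n})=\operatorname{tr}\bigl(R(t)^n\bigr)=\sum_{i=1}^{2^t}\delta_i^n$ for $n\ge 2t+1$, the $\delta_i$ being the roots of the characteristic polynomial $c_t$ of $R(t)$ (equivalently, its eigenvalues with algebraic multiplicity).

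This second identity is the crux and the main obstacle. What \cite[Theorem 3.1, 3.4]{us} supplies is the explicit matrix $R(t)$, its minimal polynomial $x^{2t}-2^t$, and the fact that the recursion it encodes reproduces the true weights $wt(h_{t,n})$ for $n\ge 2t+1$. The work is to verify that, once the main term $2^{n-1}$ is stripped off, the fluctuation is recovered as the \emph{full} trace of $R(t)^n$, i.e. that each eigenvalue enters with exactly its multiplicity in $c_t$ rather than with some smaller, projected coefficient coming from a single linear functional on the $R(t)$-orbit. Here one must use the concrete description of the rows of $R(t)$ as the $\mu^i$-orbit of the two vectors $(1,0_{2^{t-1}-1},\pm 1,0_{2^{t-1}-1})$: this makes $R(t)$ equivariant for the underlying order-$2^t$ cyclic symmetry, and one expects that symmetry to distribute the weight evenly over the eigenspaces, converting the a priori projected exponential sum into an honest trace. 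Unwinding the combinatorics of \cite[Section 3]{us} to confirm this is the genuinely new content.

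Granting the trace identity, the conclusion is quick and needs no further genus or period bookkeeping. Equating the two displays yields $\sum_i\alpha_i^n=\sum_i\delta_i^n$ for all $n\ge 2t+1$. Writing each side as $\sum_\beta m_\beta\,\beta^n$ over the finitely many distinct values $\beta$ occurring (with $m_\beta$ the multiplicity in the respective multiset), linear independence of the geometric sequences $(\beta^n)_n$ over distinct $\beta$ forces the coefficients to agree value by value. Hence the multisets $\{\alpha_i\}$ and $\{\delta_i\}$ are equal; and since the two sides now agree as exponential functions of $n$, the identity holds for every $n\ge 1$, including the indices $n=1,\dots,2t$ produced by running the recursion backwards from $n=2t+1$. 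This is exactly \Cref{eq:27}.

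Finally, the distinct values underlying either multiset are precisely the roots of $m_t(x)=x^{2t}-2^t$: this polynomial is separable — its $2t$ roots are the scaled roots of unity $\sqrt 2\,\zeta$ with $\zeta^{2t}=1$, and they are distinct — so all $2t$ of them occur as eigenvalues of $R(t)$, and the equality of multisets just established shows they are likewise the distinct $\alpha_i$. The partition of these roots into Galois orbits, and hence the expected pattern of multiplicities in $c_t$, is controlled by the factorization $x^{2t}-2^t=\prod_{d\mid t}\Theta_d$ and \Cref{pr.theta-irred}; this justifies listing the $2t$ distinct roots of $m_t$ first among the $\delta_i$.
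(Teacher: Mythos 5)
Your endgame is fine as far as it goes: the Vandermonde/linear-independence step is the paper's \Cref{le.same}, and extending \Cref{eq:27} to $1\le n\le 2t$ via the recursion is legitimate since both sides satisfy the same recurrence and the recursion is invertible. But the proposal leaves unproven exactly the statement that constitutes the theorem. All that \cite[Theorems 3.1 and 3.4]{us} give you is that the fluctuation $2^{n-1}-w_n$ is \emph{some} solution of the linear recurrence attached to $R(t)$, i.e.\ some combination $\sum_j c_j\delta_j^n$ over the $2t$ distinct eigenvalues, with coefficients $c_j$ fixed by initial conditions. The content of \Cref{th.char} is precisely that each $c_j$ is half the algebraic multiplicity of $\delta_j$ in $c_t(x)$, equivalently that $2^n-2w_n=\mathrm{tr}\;R(t)^n$, and you explicitly defer this (``the genuinely new content'') with only a heuristic about cyclic equivariance distributing weight evenly over eigenspaces. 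That heuristic is not a proof, and nothing you cite supplies it. The paper's argument for it is a concrete computation: \cite[Theorem 8]{kph} together with M\"obius inversion (\Cref{th.deltas}, \Cref{eq:32}) reduces \Cref{th.char} to the closed-form trace identity \Cref{eq:34}, which is then verified by describing $R(t)^n$ through Hadamard tensor powers (\Cref{le.desc-rt}), locating the relevant diagonal entries (\Cref{le.how-tr,le.how-tr-bis}), using central symmetry to kill odd $n$ (\Cref{le.centr}, \Cref{le.oddn}), using the conjugacy $R(t)\sim -R(t)$ together with \Cref{pr.theta-irred} to reduce to divisors $n\mid 2t$ (\Cref{cor.rminr}, \Cref{le.prod-theta}, \Cref{le.div2t}), and settling the two remaining cases (\Cref{pr.ndivt}, \Cref{le.all1}). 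Without some version of this computation there is no proof.

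A secondary problem is your opening step, which conflates the two contexts of \Cref{def.ctxt}. \Cref{cor.sum-pows} (the shift side, which is what applies to the RS function $h_{t,n}$) gives $2^{n+1}-2wt(h_{t,n})=\sum_{i=1}^r\alpha_i^n$ with no control on $r$, no bound $|\alpha_i|=\sqrt 2$ (there is no Riemann hypothesis for shifts), and with the main-term eigenvalue $2$ sitting inside the multiset; whereas \Cref{cor.no-beta} and \Cref{cor.g} concern the curve attached to the \emph{trace} function $g_n(x)=\mathrm{Tr}\left(x^{1+2^t}\right)$, not to $h_{t,n}$. Since by \Cref{th.rs-tr} the RS and trace weights agree only up to the sign of the fluctuation, trace-side data cannot simply be transplanted to the RS side; the legitimate bridge is \Cref{th.quad}, which for $(0,t)$ gives $\max_n v(n)=2t$ and hence a Galois-invariant multiset of size $2^t$ with absolute values $\sqrt 2$. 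This step is repairable inside the paper, but repairing it does nothing to close the main gap above.
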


For monomial functions $(0,t)_n$ we have the following consequence of \cite[Theorem 8]{kph}. Recall that the {\it M\"obius function} \cite[\S 16.3]{hw} is defined by
\begin{equation*}
  \mu(n)=
  \begin{cases}
    1&\text{ if }n=1\\
    (-1)^k &\text{ if }n\text{ is a product of }k\text{ distinct prime factors}\\
    0&\text{ otherwise}
  \end{cases}
\end{equation*}

\begin{theorem}\label{th.deltas}
  Let $t=2^{\nu}m$ for an odd number $m$ and $\nu\ge 0$. The weight $w_n$ of $(0,t)_n$ is expressible as
  \begin{equation*}
    w_n = 2^{n-1} - \frac{1}{2}(\delta_1^n + \ldots + \delta_{2^t}^n),~n = 1, 2, \ldots
  \end{equation*}
  where the multiset $(\delta_i)$ is the union $\sqrt 2\Delta$ of the multisets $\sqrt 2\Delta_d$ indexed by divisors $d|m$, where $\Delta_d$ is the collection of $2^{\nu+1}d^{th}$ roots of unity, each with multiplicity.
  \begin{equation}\label{eq:31}
    \frac
    { \sum_{d' | d} \mu\left(\frac{d}{d'}\right) 2^{2^{\nu}d'} }
    { 2^{\nu+1}d }
  \end{equation}
  where $\mu$ is the M\"obius function. 
\end{theorem}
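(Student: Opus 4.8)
The plan is to read \Cref{th.deltas} as a bookkeeping refinement of \cite[Theorem 8]{kph}: that reference supplies a closed form for $w_n = wt(h_{t,n})$, and the entire content here is to check that the explicitly proposed multiset $\sqrt 2\Delta$ reproduces it. Writing $S_n := 2^n - 2w_n$ so that $w_n = 2^{n-1}-\tfrac12 S_n$ (this is the shape forced by \Cref{cor.no-beta} and \Cref{th.quad}, with the minus sign of \Cref{eq:27}), the statement to verify is the single power-sum identity $\sum_i \delta_i^n = S_n$ for every $n\ge 1$, since a multiset of algebraic numbers is determined by its power sums. First I would restate the input of \cite[Theorem 8]{kph} in the form
\begin{equation*}
  S_n =
  \begin{cases}
    2^{\frac{n+\gcd(2t,\,n)}{2}} & \text{if } 2^{\nu+1}\mid n,\\
    0 & \text{otherwise,}
  \end{cases}
\end{equation*}
checking in passing that the nonvanishing locus is exactly $2^{\nu+1}\mid n$ (equivalently $v_2(n)\ge v_2(2t)$, where $v_2$ is the $2$-adic valuation) and that the sign is the one recorded in \Cref{th.quad}.

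Next I would compute $\sum_i\delta_i^n$ for the candidate multiset directly and match. Since $\Delta_d$ is the full group of $2^{\nu+1}d$-th roots of unity, each weighted by the integer $c_d := \big(\sum_{d'\mid d}\mu(d/d')2^{2^\nu d'}\big)/(2^{\nu+1}d)$ of \Cref{eq:31}, and since $\sum_{\omega^k=1}\omega^n = k\,\mathbbm 1[k\mid n]$, the contribution of $\sqrt 2\Delta_d$ is $2^{n/2}c_d(2^{\nu+1}d)\mathbbm 1[2^{\nu+1}d\mid n] = 2^{n/2}\big(\sum_{d'\mid d}\mu(d/d')2^{2^\nu d'}\big)\mathbbm 1[2^{\nu+1}d\mid n]$. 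Summing over $d\mid m$, the constraint $2^{\nu+1}d\mid n$ (with $d$ odd) forces $2^{\nu+1}\mid n$ and $d\mid g$, where $g:=\gcd(m,\,n/2^{\nu+1})$; interchanging the two divisor sums and telescoping the inner one via $\sum_{e\mid (g/d')}\mu(e)=\mathbbm 1[d'=g]$ collapses everything to the single term $2^{n/2}\cdot 2^{2^\nu g}=2^{n/2+2^\nu g}$ when $2^{\nu+1}\mid n$, and to $0$ otherwise. Finally, when $2^{\nu+1}\mid n$ one has $\gcd(2t,n)=2^{\nu+1}\gcd(m,\,n/2^{\nu+1})=2^{\nu+1}g$, so $n/2+2^\nu g=(n+\gcd(2t,n))/2$, and the computed power sum agrees with $S_n$ on the nose.

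Two subsidiary points complete the argument. The total multiplicity is $\sum_{d\mid m}c_d(2^{\nu+1}d)=\sum_{d\mid m}\sum_{d'\mid d}\mu(d/d')2^{2^\nu d'}=2^{2^\nu m}=2^t$ by the same telescoping with $g=m$, confirming the count of $2^t$ roots demanded by \Cref{th.char} and \Cref{cor.g}; and each $c_d$ is a genuine positive integer, since its numerator is divisible by $d$ (it equals $d$ times the number of monic irreducibles of degree $d$ over $GF(2^{2^\nu})$) and by $2^{\nu+1}$ (each summand $2^{2^\nu d'}$ is, as $2^\nu d'\ge \nu+1$), while $d$ is odd, so the coprime product $2^{\nu+1}d$ divides the numerator. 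I expect the main obstacle to be one of transcription rather than of ideas: pinning down the precise closed form of \cite[Theorem 8]{kph}---the nonvanishing condition $2^{\nu+1}\mid n$, the exponent $\gcd(2t,n)$, and the sign---after which $\sum_i\delta_i^n=S_n$ is a mechanical Möbius inversion. A secondary point to treat with care is that the multisets $\sqrt 2\Delta_d$ genuinely overlap (for $d_1\mid d_2$ every $2^{\nu+1}d_1$-th root of unity is a $2^{\nu+1}d_2$-th root of unity), so the union $\sqrt 2\Delta=\bigcup_{d\mid m}\sqrt 2\Delta_d$ must add multiplicities; the computation above is arranged so that this overlap is accounted for automatically.
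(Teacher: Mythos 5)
Your proposal is correct and takes essentially the same route as the paper's own proof: your reduction to a single power-sum identity is the content of \Cref{le.same}, your restatement of \cite[Theorem 8]{kph} is equivalent to the paper's \Cref{eq:32} (including the case split on $2^{\nu+1}\mid n$), and your telescoping of the double divisor sum is exactly the M\"obius inversion the paper invokes. Your two subsidiary checks (total multiplicity $2^t$ and integrality of \Cref{eq:31}) likewise mirror the paper's surrounding remarks, with the minor simplification that you apply the irreducible-polynomial count directly over the prime power $q=2^{2^{\nu}}$ rather than reducing to a prime via Dirichlet.
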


Implicit in the statement of \Cref{th.deltas} is the remark that the numerator of \Cref{eq:31} is divisible by its denominator. Since every summand $\pm 2^{2^{\nu}d'}$ of the numerator is a multiple of $2^{\nu+1}$, so is the numerator as a whole. On the other hand, divisibility by $d$ follows from

\begin{lemma}
  Let $q,d>1$ be coprime positive integers. Then,
  \begin{equation*}
    D(d,q):=\sum_{d' | d} \mu\left(\frac{d}{d'}\right) q^{d'}
  \end{equation*}
  is divisible by $d$. 
\end{lemma}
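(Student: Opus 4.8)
The plan is to read $D(d,q)$ combinatorially as the number of sequences of \emph{exact} period $d$ and then exploit the fact that such sequences fall into free orbits under cyclic rotation, so that their total number is automatically a multiple of $d$. This fits naturally into the symbolic-dynamics framework of \Cref{subse.symb}: take the full shift $\cA^{\bZ}$ on a $q$-letter alphabet $\cA$, whose number of $\sigma^n$-fixed points is $N_n=q^n$, since an $n$-periodic bi-infinite word is determined freely by its $n$ consecutive letters.

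First I would stratify the $\sigma^n$-periodic points by their least period under $\sigma$. Writing $P_p$ for the number of points of exact period $p$, every $\sigma^n$-fixed point has least period dividing $n$, so
\begin{equation*}
  q^n = N_n = \sum_{p\mid n} P_p.
\end{equation*}
M\"obius inversion then gives
\begin{equation*}
  P_d = \sum_{d'\mid d}\mu\!\left(\tfrac{d}{d'}\right)q^{d'} = D(d,q).
\end{equation*}

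Next I would observe that a point of exact period $d$ has trivial stabilizer inside the cyclic group $\langle\sigma\rangle$ acting on the (finite) set of period-$d$ points: any power of $\sigma$ fixing it must be a multiple of its least period, which is $d$ itself. Hence each $\langle\sigma\rangle$-orbit of exact-period-$d$ points has cardinality exactly $d$, and these orbits partition the set of $P_d$ such points. Therefore $d\mid P_d = D(d,q)$, which is the assertion.

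There is no serious obstacle here; the only step meriting a moment's care is the orbit-size claim, i.e.\ that exact period $d$ forces a free rotation action. I note in passing that the argument never uses the hypothesis $\gcd(d,q)=1$, so the divisibility in fact holds for every integer $q>1$. (When $q$ is a prime power one recognizes $D(d,q)/d$ as the classical count of monic irreducible polynomials of degree $d$ over $GF(q)$, but the orbit argument has the advantage of being uniform in $q$ and of staying within the dynamical language already set up.)
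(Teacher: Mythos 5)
Your proof is correct, and it is genuinely different from the one in the paper. You identify $D(d,q)$ with the number $P_d$ of points of exact least period $d$ in the full shift on $q$ letters (via M\"obius inversion of $q^n=\sum_{p\mid n}P_p$) and then observe that $\sigma$ acts freely on this set, so it decomposes into orbits of size exactly $d$; both steps are sound, including the key point that a power $\sigma^k$ fixing a point forces $k$ to be a multiple of the least period. The paper argues differently: for $q$ a prime power it cites the classical fact that $D(d,q)/d$ counts monic irreducible degree-$d$ polynomials over $GF(q)$, and for general $q$ coprime to $d$ it invokes Dirichlet's theorem on primes in arithmetic progressions to find a prime $p\equiv q \pmod d$, using that $D(d,q)\bmod d$ depends only on $q\bmod d$. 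Your necklace-orbit argument buys two things the paper's does not: it is elementary (no finite-field theory, no Dirichlet), and, as you correctly note, it dispenses with the coprimality hypothesis entirely, proving divisibility for every $q$. It also has the aesthetic merit of staying inside the symbolic-dynamics framework of \Cref{subse.symb} that the paper has already set up. What the paper's route buys instead is a thematic link to polynomial counting over $GF(q)$, which is in the spirit of its trace-function sections, at the cost of a heavier tool (Dirichlet) and a weaker statement.
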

\begin{proof}
  If $q$ is a prime power then $\frac{D(d,q)}{d}$ is known to be the number of monic irreducible degree-$d$ polynomials over the field $GF(q)$ with $q$ elements \cite[\S 4.13, Corollary 2]{jac-ba1}. In general, since $q$ and $d$ are assumed coprime Dirichlet's theorem on primes in arithmetic progressions (\cite[\S VI.4, Theorem 2]{se-arith}) ensures that there is some prime congruent to $q$ modulo $d$, reducing the problem to the prime-$q$ case.
\end{proof}

We will also need the following remark. 

\begin{lemma}\label{le.same}
 Let $\{\gamma_i\}$ and $\{\delta_j\}$ be two finite sets of complex numbers and $s_i,t_j\in \bC$. If
 \begin{equation}\label{eq:25}
   \sum_i s_i \gamma_i^n = \sum_j t_j \delta_j^n
 \end{equation}
 for all non-negative integers $n$ then the sets $\{\gamma_i\}$ and $\{\delta_j\}$ and, having identified their respective elements, the corresponding coefficients $s_i$ and $t_j$ also coincide. 
\end{lemma}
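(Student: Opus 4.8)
The plan is to reduce the two-sided identity to a single vanishing linear combination over the \emph{merged} list of distinct values and then invoke the nonvanishing of a Vandermonde determinant. First I would let $\lambda_1,\dots,\lambda_K$ be the pairwise-distinct complex numbers occurring among the $\gamma_i$ and $\delta_j$ taken together, and rewrite \Cref{eq:25} (after moving everything to one side) as
\begin{equation*}
  \sum_{k=1}^K c_k \lambda_k^n = 0 \quad \text{for all } n \ge 0,
\end{equation*}
where $c_k$ is the difference between the coefficient attached to $\lambda_k$ on the left-hand side (that is, $s_i$ if $\lambda_k=\gamma_i$ and $0$ otherwise) and the corresponding coefficient on the right-hand side. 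Proving the lemma amounts to showing that every $c_k$ vanishes.

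Next I would specialize to $n=0,1,\dots,K-1$, obtaining a homogeneous linear system in the unknowns $c_k$ whose coefficient matrix is the $K\times K$ Vandermonde matrix $\bigl(\lambda_k^{\,n}\bigr)_{0\le n\le K-1,\ 1\le k\le K}$. Because the $\lambda_k$ are pairwise distinct, the corresponding determinant $\prod_{k<l}(\lambda_l-\lambda_k)$ is nonzero, so the system admits only the trivial solution $c_k=0$ for all $k$. Note that only these $K$ consecutive exponents are needed; the full hypothesis for all nonnegative $n$ is more than enough.

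Tracing back, the vanishing of each $c_k$ says that for every value $\lambda_k$ the coefficient contributed by the $\gamma$-side equals the one contributed by the $\delta$-side. In particular a value can appear on one side only if it appears on the other with an equal coefficient, which forces $\{\gamma_i\}=\{\delta_j\}$ and, under the induced identification of elements, $s_i=t_j$. This completes the argument.

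There is no genuine obstacle here; the one point requiring care is the bookkeeping when $\{\gamma_i\}$ and $\{\delta_j\}$ overlap, and this is precisely what the passage to the combined distinct-value set resolves, since it assigns each $\lambda_k$ a single well-defined coefficient $c_k$. An equally short alternative, should one prefer it, is to form the generating function $\sum_{n\ge 0}\bigl(\sum_i s_i\gamma_i^n\bigr)z^n=\sum_i s_i/(1-\gamma_i z)$, observe that the hypothesis forces the two resulting rational functions to coincide, and then read off the equality of sets and of coefficients from the uniqueness of poles and residues in a partial-fraction decomposition.
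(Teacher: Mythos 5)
Your proposal is correct and follows essentially the same route as the paper: merge the $\gamma_i$ and $\delta_j$ into a single list of distinct values, reduce \Cref{eq:25} to a vanishing linear combination $\sum_k c_k\lambda_k^n=0$, and conclude $c_k=0$ from the invertibility of the Vandermonde matrix built on distinct nodes. The only cosmetic difference is that you argue directly while the paper argues by contradiction (and your partial-fractions remark is a nice alternative), but the key mechanism is identical.
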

\begin{proof}
Suppose the conclusion does {\it not} hold.  Rewriting \Cref{eq:25} as
 \begin{equation*}
   \sum_i s_i \gamma_i^n = \sum_j t_j \delta_j^n
 \end{equation*}
 and aggregating the terms where some $\gamma$ equals some $\delta$,  the failure of the conclusion means that we obtain equations
 \begin{equation*}
   \sum_k u_k \eta_k^n=0,\ \forall n
 \end{equation*}
 for some non-empty set of (distinct) $\eta_k$'s (and non-zero $u_k$). But this means that the vector with components $u_k$ is annihilated by the Vandermonde matrix with entries
 \begin{equation*}
   u_{kl}:=\eta_k^{\ell-1}.
 \end{equation*}
 This contradicts the fact that said matrix has non-zero determinant $\prod_{k>k'}(\eta_k-\eta_{k'})$ and is thus invertible.
\end{proof}

\pf{th.deltas}
\begin{th.deltas}
  According to \Cref{le.same}, it will be enough to show that
  \begin{equation*}
    w_n=2^{n-1}-2^{\frac n2-1}\sum_{\delta\in\Delta}\delta^n. 
  \end{equation*}
  Equivalently, by \cite[Theorem 8]{kph} this amounts to
  \begin{equation}\label{eq:32}
    \sum_{\delta\in \Delta}\delta^n=
    \begin{cases}
      2^{gcd(n,t)}&\text{ if }\frac{n}{gcd(n,t)}\text{ is even}\\
      0&\text{ otherwise}.
    \end{cases}
  \end{equation}
  The second branch is easily dispatched: $\frac{n}{gcd(n,t)}$ being odd is equivalent to $n$ {\it not} being divisible by $2^{\nu+1}$. Since each $\Delta_d$ consists of the $2^{\nu+1}d^{th}$ roots of unity all with the same multiplicity, we have
  \begin{equation*}
    \sum_{\delta\in \Delta_d}\delta^n=0,\ \forall d|m. 
  \end{equation*}
  It thus remains to treat the case when $2^{\nu+1}$ divides $n$, when the target equality \Cref{eq:32} becomes
  \begin{equation}\label{eq:33}
    \sum_{\delta\in \Delta}\delta^n = 2^{gcd(n,t)} = 2^{2^{\nu}gcd(n,m)}. 
  \end{equation}
  Set $D=gcd(n,m)$ for brevity. All
  \begin{equation*}
    \sum_{\delta\in \Delta_d}\delta^n,\ d\not |\; D
  \end{equation*}
  vanish, so we need only consider divisors $d|D$. Keeping this in mind \Cref{eq:33} reads
  \begin{equation*}
    \sum_{d'|d|D}\mu\left(\frac{d}{d'}\right) 2^{2^{\nu}d'}=2^{2^\nu D}.
  \end{equation*}
  This, however, is nothing but an instance of the M\"obius inversion formula \cite[\S 16.4]{hw}.
\end{th.deltas}

By \Cref{eq:32}, proving \Cref{th.char} amounts to showing that for every $n$, we have
\begin{equation}\label{eq:34}
  \mathrm{tr}\;R(t)^n=
  \begin{cases}
    2^{\frac n2+gcd(n,t)}&\text{ if }\frac{n}{gcd(n,t)}\text{ is even}\\
    0&\text{ otherwise}.
  \end{cases}
\end{equation}
Since by \Cref{eq:26} the eigenvalues of $R(t)$ are $2t^{th}$ roots of unity rescaled by $\sqrt 2$, it is enough to prove \Cref{eq:34} for $1\le n\le 2t-1$. It will thus be useful to describe $R(t)^n$ explicitly. To that end, we follow \cite[\S 3]{us} in denoting by $M(n)$ the $2^n\times 2^n$ matrix
\begin{equation*}
  \begin{pmatrix}
    1&\phantom{-}1\\
    1&-1
  \end{pmatrix}^{\otimes n}.
\end{equation*}

For $1\le n\le t$ we also write $M(n,t)$ for the $2^n\times 2^t$ matrix obtained by inserting $2^{t-n}-1$ zero columns after each original column of $M(n)$. For a matrix $M$ we write $\mu(M)$ (or $\mu M$) for the matrix obtained by rotating the rows of $M$ rightward (this extends the above definition of the cyclic permutation $\mu$ on the individual rows).  With all of this in place, the following is a simple computation achievable inductively by partitioning $R(t)$ into four $2^{t-1}\times 2^{t-1}$ block matrices.

\begin{lemma}\label{le.desc-rt}
  Let $1\le n\le 2t-1$. The power $R(t)^n$ can then be described as follows.
  \begin{enumerate}[(a)]
  \item\label{item:5} If $1\le n\le t$ then
    \begin{equation*}
      R(t)^n =
      \begin{pmatrix}
        M(n,t)\\
        \mu M(n,t)\\
        \vdots\\
        \mu^{2^{t-n}-1} M(n,t)
      \end{pmatrix}.
    \end{equation*}
  \item\label{item:6} On the other hand, if $t\le n\le 2t-1$ then
    \begin{equation*}
      R(t)^n = 2^{n-t} \left(R(t)^{2t-n}\right)^T,
    \end{equation*}
    where the $T$ superscript denotes transposition. \qedhere
  \end{enumerate}
\end{lemma}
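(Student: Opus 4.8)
The plan is to prove part (a) by induction on $n$ and then to deduce part (b) from it together with the near-orthogonality of the Hadamard matrix $M(t)$.

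For the base case $n=1$ of (a), observe that $M(1,t)$ is exactly the $2\times 2^t$ matrix whose two rows are $(1,0_{2^{t-1}-1},1,0_{2^{t-1}-1})$ and $(1,0_{2^{t-1}-1},-1,0_{2^{t-1}-1})$, so the stack $\left(\mu^j M(1,t)\right)_{j=0}^{2^{t-1}-1}$ is $R(t)$ itself by definition. The observation that drives the induction is an explicit description of the rows of $R(t)$: indexing rows and columns by $\{0,1,\dots,2^t-1\}$, its row $2i$ equals $e_i+e_{2^{t-1}+i}$ and its row $2i+1$ equals $e_i-e_{2^{t-1}+i}$, where $e_k$ is the standard basis row vector. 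Hence left multiplication by $R(t)$ sends a $2^t\times 2^t$ matrix $B$ to the matrix whose rows $2i$ and $2i+1$ are respectively the sum and the difference of rows $i$ and $2^{t-1}+i$ of $B$ (this is the promised block computation, the column support of $R(t)$ pairing coordinate $i$ with coordinate $2^{t-1}+i$).

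For the inductive step, assume $1\le n<t$ and that $R(t)^n$ has the stated block form, so that writing $i=j2^n+r$ with $0\le r<2^n$ its row $i$ is $\mu^j w_r$, where $w_r$ denotes row $r$ of $M(n,t)$. Since $2^n\mid 2^{t-1}$, the companion row $2^{t-1}+i$ lies in block $j+2^{t-n-1}$ and equals $\mu^{\,j+2^{t-n-1}}w_r=\mu^j\!\left(\mu^{2^{t-n-1}}w_r\right)$. Applying the observation above, rows $2i$ and $2i+1$ of $R(t)^{n+1}=R(t)\,R(t)^n$ are
\begin{equation*}
  \mu^j\!\left(w_r+\mu^{2^{t-n-1}}w_r\right),\qquad \mu^j\!\left(w_r-\mu^{2^{t-n-1}}w_r\right).
\end{equation*}
Now $w_r$ carries the entry $M(n)_{r,s}$ in column $s\,2^{t-n}$ and zeros elsewhere, while $\mu^{2^{t-n-1}}w_r$ carries the same entries shifted into the interleaved columns $s\,2^{t-n}+2^{t-n-1}$; comparing with the Kronecker factorization $M(n+1)=M(n)\otimes M(1)$ one checks that $w_r\pm\mu^{2^{t-n-1}}w_r$ is precisely row $2r$ (resp.\ $2r+1$) of $M(n+1,t)$. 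Re-indexing $a=2i+\varepsilon=j\,2^{n+1}+(2r+\varepsilon)$ then identifies $R(t)^{n+1}$ with the claimed stack of the $2^{t-n-1}$ blocks $\mu^j M(n+1,t)$. I expect this matching to be the one delicate point: one must line up the interleaved even/odd rows emitted by $R(t)$ with the standard Kronecker row order of $M(n+1,t)$, and confirm that the block count correctly drops from $2^{t-n}$ to $2^{t-n-1}$.

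Finally, (b) follows formally. Specializing (a) to $n=t$ gives $R(t)^t=M(t,t)=M(t)$, which is symmetric and satisfies $M(t)^2=(M(1)^2)^{\otimes t}=(2I)^{\otimes t}=2^t I$, whence $R(t)^{2t}=2^t I$. I then claim the scaled-isometry relation $\left(R(t)^m\right)^T R(t)^m=2^m I$ for $1\le m\le t$. Writing $\mu^j M(m,t)=M(m,t)\,C^{\,j}$ for the cyclic column shift $C$, the block form from (a) gives $\left(R(t)^m\right)^T R(t)^m=\sum_{j=0}^{2^{t-m}-1} C^{-j}\bigl(M(m,t)^T M(m,t)\bigr)C^{\,j}$; since the Hadamard columns are orthogonal of squared length $2^m$, we have $M(m,t)^T M(m,t)=2^m E$ with $E$ the diagonal projection onto coordinates divisible by $2^{t-m}$, and the $2^{t-m}$ cyclic shifts of $E$ partition all coordinates, so the sum is $2^m I$. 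Combining the two relations, for $t\le n\le 2t-1$ set $m=2t-n\in[1,t]$; then $\bigl(R(t)^{2t-n}\bigr)^T=2^m\,R(t)^{-m}=2^{m-t}R(t)^{2t-m}=2^{m-t}R(t)^n$, which rearranges to $R(t)^n=2^{n-t}\bigl(R(t)^{2t-n}\bigr)^T$, as claimed.
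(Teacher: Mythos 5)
Your proof is correct, and it supplies considerably more than the paper does: the paper offers no written proof of this lemma at all, only the remark that it is ``a simple computation achievable inductively by partitioning $R(t)$ into four $2^{t-1}\times 2^{t-1}$ block matrices.'' Your route differs in two ways. For part (a) you induct on the exponent $n$ rather than on a block decomposition of $R(t)$ itself, using the observation that rows $2i$ and $2i+1$ of $R(t)$ are $e_i\pm e_{2^{t-1}+i}$, so that left multiplication by $R(t)$ is a sum/difference operation on paired rows; the matching of $w_r\pm\mu^{2^{t-n-1}}w_r$ with rows $2r$, $2r+1$ of $M(n+1,t)$ is exactly right under the factorization $M(n+1)=M(n)\otimes M(1)$ (valid by associativity of the Kronecker product), and there is no wraparound issue since the shifted supports stay below $2^t$. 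For part (b), where the paper gives no hint at all, your derivation from the two identities $R(t)^{2t}=2^tI$ (via $R(t)^t=M(t)$, symmetric with square $2^tI$) and $\left(R(t)^m\right)^TR(t)^m=2^mI$ (via $M(m,t)^TM(m,t)=2^mE$ and the fact that the $2^{t-m}$ cyclic conjugates of the projection $E$ sum to the identity) is a clean, fully rigorous argument; the final rearrangement $\left(R(t)^{2t-n}\right)^T=2^{m-t}R(t)^n$ with $m=2t-n$ is exactly the claimed identity. What your approach buys is a complete proof where the paper has only an assertion; what the paper's hinted approach (a fourfold block partition of $R(t)$) would presumably buy is a more self-similar, recursive-in-$t$ picture, but since it is never carried out, yours is the argument of record.
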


In particular, part \labelcref{item:6} of \Cref{le.desc-rt} proves \Cref{eq:34} for $t\le n\le 2t-1$ provided it is known for $1\le n\le t$. Even more robustly, it recovers \Cref{eq:34} for a specific $t\le n\le 2t-1$ provided we know the analogue for the reflection $2t-n$ of $n$ across $t$. In conclusion, it suffices to focus on the range $1\le n\le t$. In turn, in those cases the trace of interest is computable as follows, numbering the rows and columns of all matrices starting at $0$:

\begin{lemma}\label{le.how-tr}
  For $1\le n\le t$ the trace $\mathrm{tr}\; R(t)^n$ is the sum of the following elements of $M(n)$:
  \begin{itemize}
  \item the lower right hand corner $M(n)_{2^n-1,2^n-1}$;
  \item for each $0\le k\le 2^n-2$ the entry with index $2^{t-n}k$ modulo $2^n-1$ in the $k^{th}$ column.   \qedhere
  \end{itemize}
\end{lemma}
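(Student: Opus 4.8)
The plan is to read the diagonal of $R(t)^n$ directly off the block description given in part \labelcref{item:5} of \Cref{le.desc-rt}. I would index the rows of $R(t)^n$ by a pair $(b,j)$ with $0\le b\le 2^{t-n}-1$ and $0\le j\le 2^n-1$, so that the global row index is $r=b\cdot 2^n+j$ and this row is $\mu^b$ applied to the $j$-th row of $M(n,t)$. By construction $M(n,t)$ places the entry $M(n)_{j,l}$ in column $l\cdot 2^{t-n}$ (for $0\le l\le 2^n-1$) and zeros elsewhere, and the rightward rotation $\mu^b$ moves that entry to column $l\cdot 2^{t-n}+b \bmod 2^t$. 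Hence a nonzero entry of row $(b,j)$ lands on the diagonal exactly when
\begin{equation*}
  l\cdot 2^{t-n}+b \equiv b\cdot 2^n+j \pmod{2^t}, \quad\text{i.e.}\quad l\cdot 2^{t-n}=b(2^n-1)+j,
\end{equation*}
the congruence collapsing to an honest equality because both sides lie in $[0,2^t)$. When it holds, the diagonal contribution is $M(n)_{j,l}$.

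The heart of the argument is then to show that, as $(b,j)$ ranges over the valid pairs, the column index $l$ ranges bijectively over $\{0,\dots,2^n-1\}$, so that $\operatorname{tr}R(t)^n=\sum_l M(n)_{j(l),l}$ with exactly one term per column. I would study the map $\phi(b,j)=b(2^n-1)+j$: the interval $[\,b(2^n-1),\,b(2^n-1)+(2^n-1)\,]$ swept out for fixed $b$ overlaps the next one in exactly the single integer $(b+1)(2^n-1)$, so $\phi$ is two-to-one precisely on the \emph{interior} multiples of $2^n-1$ and injective elsewhere. Since $2^n-1$ is odd we have $\gcd(2^{t-n},2^n-1)=1$, so the only values $l\cdot 2^{t-n}$ divisible by $2^n-1$ occur at $l=0$ and $l=2^n-1$, and these are the two \emph{non-interior} endpoints $0$ and $2^t-2^{t-n}$ of the range of $\phi$. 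Consequently every target $l\cdot 2^{t-n}$ has a unique preimage, which both gives the bijection and confirms the total count of $2^n$ diagonal contributions.

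It remains to identify the row $j=j(l)$ attached to each column. From $l\cdot 2^{t-n}=b(2^n-1)+j$ one reads $j\equiv l\cdot 2^{t-n}\pmod{2^n-1}$ with $0\le j\le 2^n-1$. For $l=k\le 2^n-2$ the residue $k\cdot 2^{t-n}\bmod(2^n-1)$ vanishes only when $k=0$ (again by coprimality), in which case the unique pair is $(0,0)$ and $j=0$; for $1\le k\le 2^n-2$ the residue is nonzero and is the unique in-range representative, so in all cases $j=k\cdot 2^{t-n}\bmod(2^n-1)$, which is exactly the column sum in the statement. The remaining column $l=2^n-1$ has unique valid pair $(b,j)=(2^{t-n}-1,\,2^n-1)$ and hence contributes the lower-right corner $M(n)_{2^n-1,2^n-1}$. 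Assembling the corner term with the sum over $k$ yields the claimed formula.

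The main obstacle is the boundary bookkeeping in the middle step: $\phi$ is genuinely non-injective, since consecutive $b$-blocks overlap in a single point, and it is only the coprimality $\gcd(2^{t-n},2^n-1)=1$ together with the fact that the two multiples of $2^n-1$ in play sit at the extreme ends of the range of $\phi$ that restores a clean one-to-one correspondence. This same overlap-by-one phenomenon, refracted through the residue-zero column $l=2^n-1$, is precisely what forces the corner entry to be split off as a separate summand rather than folded into the column sum, so getting this off-by-one accounting right is where I expect to spend the most care.
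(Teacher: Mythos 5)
Your proof is correct and follows exactly the route the paper intends: the paper states this lemma without proof, treating it as a direct reading of the diagonal of $R(t)^n$ from part (a) of \Cref{le.desc-rt}, and your computation carries this out faithfully, with the coprimality of $2^{t-n}$ and $2^n-1$ correctly handling the only delicate point (the overlap of consecutive $\mu$-blocks) and thereby explaining why the corner entry $M(n)_{2^n-1,2^n-1}$ appears as a separate summand. Nothing is missing; your unique-preimage analysis of $\phi(b,j)=b(2^n-1)+j$ at the targets $l\cdot 2^{t-n}$ is precisely the bookkeeping the paper suppresses.
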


Note that $2^{dn}-1$ is divisible by $2^n-1$ for all $d\ge 0$, so in \Cref{le.how-tr} is is enough to replace $2^{t-n}$ with the residue $t(\mathrm{mod}\; n)$. We can now rephrase \Cref{le.how-tr} as follows. 

\begin{lemma}\label{le.how-tr-bis}
  Let $1\le n<t$ and set $a=t(\mathrm{mod}\; n)$ and $b=n-a$. Then, the trace $\mathrm{tr}\; R(t)^n$ is the sum of the entries
  \begin{equation}\label{eq:35}
    (q+r2^a,\ q2^b+r)
  \end{equation}
where $0\le q\le 2^a-1$ and $0\le r\le 2^b-1$.   
  \qedhere
\end{lemma}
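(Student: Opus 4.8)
The plan is to recognize \Cref{le.how-tr-bis} as a pure change of summation variables applied to the trace formula of \Cref{le.how-tr}, using the simplification $2^{t-n}\mapsto 2^a$ with $a=t\,(\mathrm{mod}\; n)$ that is justified in the remark immediately preceding the statement. The starting point is therefore
\[
  \mathrm{tr}\,R(t)^n = M(n)_{2^n-1,\,2^n-1} + \sum_{k=0}^{2^n-2} M(n)_{\,2^ak\bmod(2^n-1),\ k},
\]
which I read as a sum of entries of $M(n)$ indexed by the $2^n$ column values $k\in\{0,\dots,2^n-1\}$, each paired with the row obtained by multiplying $k$ by $2^a$ modulo $2^n-1$; the separately-written corner term handles the exceptional all-ones column $k=2^n-1$. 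The goal is to match this collection of row--column pairs with the pairs $(q+r2^a,\ q2^b+r)$ on the right-hand side.

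First I would introduce the mixed-radix bijection $(q,r)\mapsto c:=q2^b+r$ from $\{0,\dots,2^a-1\}\times\{0,\dots,2^b-1\}$ onto $\{0,\dots,2^n-1\}$, which is a bijection precisely because $a+b=n$. Under it the right-hand column $q2^b+r$ sweeps through every column index of $M(n)$ exactly once, so both sums range over the same set of columns, and it only remains to check that the row attached to a given column agrees on the two sides.

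The key computation is the congruence
\[
  2^a c = 2^a\bigl(q2^b+r\bigr) = q2^n + r2^a \equiv q + r2^a \pmod{2^n-1},
\]
combined with the bound $0\le q+r2^a\le (2^a-1)+(2^b-1)2^a = 2^n-1$, in which equality holds exactly when $(q,r)=(2^a-1,2^b-1)$, i.e.\ when $c=2^n-1$. For $c<2^n-1$ this identifies $q+r2^a$ as the genuine residue of $2^ac$ in $\{0,\dots,2^n-2\}$, so $(q+r2^a,\ q2^b+r)$ is exactly the pair $\bigl(2^ak\bmod(2^n-1),\ k\bigr)$ with $k=c$; for $c=2^n-1$ it returns the row $2^n-1$ and reproduces the corner term. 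Hence the multisets of $M(n)$-entries summed on the two sides coincide, and the traces are equal.

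I expect the only genuine subtlety to be the bookkeeping around the all-ones index: one must confirm that the single range-endpoint case $q+r2^a=2^n-1$ aligns with the separately-written entry $M(n)_{2^n-1,2^n-1}$, so that no term is double-counted or lost. Everything else is a routine reindexing, and conceptually the cleanliness comes from the fact that multiplication by $2^a$ modulo $2^n-1$ is a cyclic rotation of $n$-bit strings, which the decomposition $c=q2^b+r$ into high and low digit blocks makes completely explicit.
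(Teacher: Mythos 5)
Your proof is correct and is exactly the verification the paper leaves implicit: the paper states \Cref{le.how-tr-bis} without proof, as a mere rephrasing of \Cref{le.how-tr} after replacing $2^{t-n}$ by $2^{a}$ modulo $2^n-1$, and your congruence $2^a\bigl(q2^b+r\bigr)\equiv q+r2^a \pmod{2^n-1}$, together with the bound showing that $q+r2^a=2^n-1$ occurs only for $(q,r)=(2^a-1,2^b-1)$ (matching the separately listed corner entry), is precisely the reindexing that makes that rephrasing rigorous. No gap; this is the same approach, just written out in full.
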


We make note of the following ``central symmetry'' property of the Hadamard matrix $M(n)$.

\begin{lemma}\label{le.centr}
  Let $a+b=n$ be positive integers and denote by $(k,\ell)$ the coordinates \Cref{eq:35} of an entry in $M(n)$ for some $0\le q\le 2^a-1$ and $0\le r\le 2^b-1$. Let also
  \begin{align*}
    k'=2^n-1-k &= q'+r'2^a\\
    \ell'=2^n-1-\ell &= q'2^b+r'
  \end{align*}
  be the coordinates of the reflection of $(k,\ell)$ across the center of the matrix $M(n)$, where
  \begin{equation*}
    q'+q=2^a-1,\quad r'+r=2^b-1.
  \end{equation*}
  Then,
  \begin{equation*}
    M(n)_{k',\ell'}=
    \begin{cases}
      \phantom{-}M(n)_{k,\ell}&\text{ if }n\text{ is even}\\
      -M(n)_{k,\ell}&\text{ if }n\text{ is odd}.
    \end{cases}
  \end{equation*}
\end{lemma}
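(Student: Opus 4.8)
The plan is to reduce the multiplicative claim to a parity computation via the closed form for the entries of a tensor-power Hadamard matrix. Writing the binary expansions $k=\sum_{i=0}^{n-1}k_i2^i$ and $\ell=\sum_{i=0}^{n-1}\ell_i2^i$ with $k_i,\ell_i\in\{0,1\}$, I would first record, by a one-line induction on $n$ that peels off the leading tensor factor of $M(n)$, that
\begin{equation*}
  M(n)_{k,\ell}=(-1)^{\langle k,\ell\rangle},\qquad \langle k,\ell\rangle:=\sum_{i=0}^{n-1}k_i\ell_i.
\end{equation*}
This converts the statement $M(n)_{k',\ell'}=\pm M(n)_{k,\ell}$ into a congruence between $\langle k',\ell'\rangle$ and $\langle k,\ell\rangle$ modulo $2$.

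Next I would observe that $k\mapsto k'=2^n-1-k$ (and likewise $\ell\mapsto\ell'$) is precisely bitwise complementation, since $2^n-1$ is the all-ones word: $k_i'=1-k_i$ and $\ell_i'=1-\ell_i$. Expanding $(1-k_i)(1-\ell_i)$ and summing over $i$ then yields
\begin{equation*}
  \langle k',\ell'\rangle=n-|k|-|\ell|+\langle k,\ell\rangle,
\end{equation*}
where $|k|:=\sum_ik_i$ denotes the Hamming weight (the number of nonzero binary digits). Consequently $M(n)_{k',\ell'}=(-1)^{n}(-1)^{|k|+|\ell|}M(n)_{k,\ell}$, and the whole lemma comes down to showing that $|k|+|\ell|$ is even for the indices at hand.

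This last point is the crux, and it is exactly where the special form \Cref{eq:35} is used. For $(k,\ell)=(q+r2^a,\ q2^b+r)$ with $0\le q\le 2^a-1$ and $0\le r\le 2^b-1$ (and $a+b=n$), the two summands making up $k$ occupy disjoint blocks of bit positions — the bottom $a$ bits carry $q$ and the top $b$ bits carry $r$ — and symmetrically for $\ell$, where now $r$ sits in the bottom $b$ bits and $q$ in the top $a$ bits. Hence the Hamming weight is additive across the two blocks and
\begin{equation*}
  |k|=|q|+|r|=|\ell|,
\end{equation*}
so that $|k|+|\ell|=2(|q|+|r|)$ is even. The offending factor $(-1)^{|k|+|\ell|}$ therefore disappears, leaving $M(n)_{k',\ell'}=(-1)^nM(n)_{k,\ell}$, which is the asserted central symmetry. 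I anticipate no serious obstacle beyond careful bookkeeping of the two bit-blocks; the one sanity check worth making is that the reflection sends the parameters to $(q',r')=(2^a-1-q,\ 2^b-1-r)$, so that $(k',\ell')$ is itself again of the form \Cref{eq:35} and the argument is internally consistent.
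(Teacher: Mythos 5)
Your proof is correct, and it takes a genuinely different route from the paper's. The paper proceeds by induction on $n$, using the block decomposition $M(n+1)=\left(\begin{smallmatrix} M(n) & M(n)\\ M(n) & -M(n)\end{smallmatrix}\right)$: central reflection swaps opposite quadrants (upper-left with lower-right, upper-right with lower-left), so the reflection in $M(n+1)$ reduces, up to one explicit sign, to the reflection in $M(n)$, and the inductive hypothesis closes the argument; the details of how the index structure \Cref{eq:35} passes through this reduction are left to the reader. You instead use the closed form $M(n)_{k,\ell}=(-1)^{\langle k,\ell\rangle}$ for the Sylvester--Hadamard matrix, under which bitwise complementation $k\mapsto 2^n-1-k$, $\ell\mapsto 2^n-1-\ell$ yields the exact identity $M(n)_{k',\ell'}=(-1)^{n+|k|+|\ell|}M(n)_{k,\ell}$, and then observe that the hypothesis \Cref{eq:35} forces $|k|=|q|+|r|=|\ell|$, killing the extra sign. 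Each approach has its merits: the paper's induction stays within the recursive structure of $M(n)$ that drives the rest of the section, while your computation is self-contained and makes completely explicit \emph{where} the special form \Cref{eq:35} is needed --- indeed your intermediate identity shows the statement is false for general indices (e.g.\ $(k,\ell)=(0,1)$ in $M(2)$, where the reflected entry is $M(2)_{3,2}=-1\neq M(2)_{0,1}$), a subtlety that is invisible in the quadrant induction and is arguably glossed over there. Your closing sanity check that the reflection preserves the form \Cref{eq:35}, via $(q',r')=(2^a-1-q,\,2^b-1-r)$, is also exactly the point one would need in order to make the paper's inductive bookkeeping airtight.
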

\begin{proof}
  We can prove this by induction on $n$, using the decomposition
  \begin{equation*}
    M(n+1)=
    \begin{pmatrix}
      M(n)&\phantom{-}M(n)\\
      M(n)&-M(n)
    \end{pmatrix}
  \end{equation*}
  and treating the separate possibilities for the placement of $(k,\ell)$ in one of the four quadrants. If, say, $M(n+1)_{k,\ell}$ is in the upper left hand $M(n)$ corner and hence
  \begin{equation*}
    M(n+1)_{k,\ell} = M(n)_{k,\ell}
  \end{equation*}
  then $M(n+1)_{k',\ell'}$ is in the lower right hand $-M(n)$ quadrant and hence is {\it minus} the reflection of $M(n)_{k,\ell}$ across the center of $M(n)$. The inductive hypothesis implies the conclusion.

  The argument is analogous in the other cases, and we leave it to the reader; in fact, there is only {\it one} other case: $(k,\ell)$ and $(k',\ell')$ play symmetric roles, so it is enough to assume $(k,\ell)$ is either in the top left or the top right quadrant.
\end{proof}

We can now tackle the following particular case of \Cref{eq:34}.

\begin{corollary}\label{le.oddn}
$\mathrm{tr}\; R(t)^n=0$ when $n$ is odd and hence \Cref{eq:34} holds in that case. 
\end{corollary}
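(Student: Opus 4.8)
The plan is to exploit the central symmetry of $M(n)$ recorded in \Cref{le.centr}, which for odd $n$ turns the explicit sum for $\mathrm{tr}\,R(t)^n$ into a collection of canceling pairs. By the reductions preceding \Cref{le.how-tr} it suffices to treat $1\le n\le t$; the boundary case $n=t$ is immediate from \Cref{le.desc-rt}\labelcref{item:5}, which gives $R(t)^t=M(t)=M(1)^{\otimes t}$ and hence $\mathrm{tr}\,R(t)^t=(\mathrm{tr}\,M(1))^t=0$ since $\mathrm{tr}\,M(1)=1+(-1)=0$. So the substance is the range $1\le n<t$, where \Cref{le.how-tr-bis} applies.

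For $1\le n<t$, set $a=t\ (\mathrm{mod}\ n)$ and $b=n-a$; note $b\ge 1$ since $a<n$. By \Cref{le.how-tr-bis} the trace is the sum over $0\le q\le 2^a-1$ and $0\le r\le 2^b-1$ of the entries $M(n)_{(q+r2^a,\,q2^b+r)}$ appearing in \Cref{eq:35}. I would introduce the involution
\[
\tau(q,r)=(2^a-1-q,\ 2^b-1-r)
\]
on this index rectangle and make two checks. First, $\tau$ is fixed-point free: a fixed point would force $2r=2^b-1$, which is impossible because $2^b-1$ is odd (this uses $b\ge 1$). Second, $\tau$ sends the entry position $(k,\ell)=(q+r2^a,\,q2^b+r)$ to its reflection across the center of $M(n)$; this is the short computation $(2^a-1-q)+(2^b-1-r)2^a=2^n-1-k$, and the analogous identity $(2^a-1-q)2^b+(2^b-1-r)=2^n-1-\ell$ for the column index.

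With these two facts in hand, \Cref{le.centr} finishes the argument: for odd $n$ the reflected entry equals the negative of the original, so each two-element $\tau$-orbit contributes $M(n)_{k,\ell}+M(n)_{k',\ell'}=0$ to the sum, whence $\mathrm{tr}\,R(t)^n=0$. Finally, since $\gcd(n,t)$ divides the odd number $n$ it is itself odd, so $n/\gcd(n,t)$ is a quotient of two odd numbers and is odd; thus the relevant branch of \Cref{eq:34} predicts the value $0$, matching the computation and establishing \Cref{eq:34} for odd $n$.

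I expect the only genuinely delicate point to be the bookkeeping in the second check, namely verifying that $\tau$ corresponds \emph{exactly} to central reflection of the index $(k,\ell)$—so that \Cref{le.centr} applies verbatim in the $(q,r)$ coordinatization of \Cref{eq:35}—rather than to some neighboring symmetry. Everything else (fixed-point freeness, the parity remark about $\gcd$, and the $n=t$ boundary case) is routine.
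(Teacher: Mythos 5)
Your proof is correct and takes essentially the same approach as the paper: the paper's own (one-sentence) argument likewise combines \Cref{le.how-tr-bis} with the central symmetry of \Cref{le.centr} to pair off entries of $M(n)$ that cancel when $n$ is odd. Your extra bookkeeping---the explicit involution $\tau$, its fixed-point freeness via $b\ge 1$, the check that $\tau$ is exactly central reflection, and the separate $n=t$ boundary case---just spells out what the paper leaves implicit.
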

\begin{proof}
  Indeed, \Cref{le.how-tr-bis,le.centr} show that the trace is a sum of pairs $\pm 1$ of entries of $M(n)$, each pair summing to zero.
\end{proof}

\begin{corollary}\label{cor.rminr}
  The matrix $R(t)$ is conjugate to $-R(t)$. 
\end{corollary}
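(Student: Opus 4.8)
The plan is to reduce the conjugacy statement to a comparison of eigenvalue multisets, and then to upgrade that comparison to genuine conjugacy using diagonalizability. First I would record that by \Cref{eq:26} the minimal polynomial of $R(t)$ is $x^{2t}-2^t$, which has $2t$ distinct roots; hence $R(t)$ has no nontrivial Jordan blocks and is diagonalizable. Since $x^{2t}-2^t$ is an even polynomial, it also annihilates $-R(t)$ (indeed $(-R(t))^{2t}=R(t)^{2t}=2^t I$), so $-R(t)$ is diagonalizable as well. Consequently it suffices to show that $R(t)$ and $-R(t)$ have the same eigenvalues with the same multiplicities: both are then conjugate to the same diagonal matrix, and therefore to one another.

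To match the eigenvalue multisets I would compare the traces of all powers. Writing the eigenvalues of $R(t)$ with multiplicity as $\delta_1,\dots,\delta_{2^t}$, one has $\mathrm{tr}\,(-R(t))^n=(-1)^n\,\mathrm{tr}\,R(t)^n$ because $(-R(t))^n=(-1)^n R(t)^n$. For even $n$ the two traces coincide automatically. For odd $n$ the sign factor is harmless, since both quantities vanish: \Cref{le.oddn} already gives $\mathrm{tr}\,R(t)^n=0$ for odd $n$. Thus $\mathrm{tr}\,R(t)^n=\mathrm{tr}\,(-R(t))^n$ for every $n\ge 0$ (the case $n=0$ reading $2^t$ on both sides). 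Grouping eigenvalues by their values, this equality says $\sum_i s_i\gamma_i^n=\sum_j t_j(\gamma_j')^n$ for all $n$, where $\gamma_i$ (resp.\ $\gamma_j'$) are the distinct eigenvalues of $R(t)$ (resp.\ $-R(t)$) and $s_i,t_j$ are their multiplicities. \Cref{le.same} then forces the two sets of distinct eigenvalues and the attached multiplicities to agree, which is exactly the spectral coincidence needed.

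I expect no serious obstacle once \Cref{le.oddn} is available; the argument is a short assembly of standard facts together with the two in-paper lemmas. The one step that must not be skipped is the diagonalizability remark, because equality of spectra (equivalently, of characteristic polynomials) does not in general imply conjugacy: it is precisely the distinctness of the $2t$ roots of $x^{2t}-2^t$ that makes the implication valid here.
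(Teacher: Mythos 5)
Your proof is correct and follows essentially the same route as the paper's: both reduce conjugacy to a spectral comparison via diagonalizability, then establish $\mathrm{tr}\,R(t)^n=\mathrm{tr}\,(-R(t))^n$ for all $n$, with the even case being automatic and \Cref{le.oddn} handling the odd powers. The only cosmetic differences are that the paper justifies diagonalizability by (scaled) unitarity rather than by the distinctness of the roots of the minimal polynomial $x^{2t}-2^t$, and passes from equal power traces to equal characteristic polynomials via Newton-type identities rather than via \Cref{le.same}; both of your variants are valid, and your diagonalizability argument is in fact the more careful one, since $R(t)$ is unitary only after rescaling by $\sqrt 2$.
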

\begin{proof}
  Since the two matrices are unitary and hence diagonalizable over the complex numbers it is enough to argue that they have the same characteristic polynomial. The coefficients of the latter are algorithmically computable from the traces of the powers of the matrix, so it is enough to show that we have
  \begin{equation*}
    \mathrm{tr}\; R(t)^n = \mathrm{tr}\; (-R(t))^n = (-1)^n\mathrm{tr}\; R(t)^n,\ \forall n.  
  \end{equation*}
The two sides are obviously equal for even $n$, so the conclusion follows from \Cref{le.oddn}, which shows that everything in sight vanishes for odd $n$. 
\end{proof}

Recall the polynomials $\Theta_d$ discussed in \Cref{subse.rts}. 

\begin{lemma}\label{le.prod-theta}
  The characteristic polynomial of $R(t)$ is a product of factors $\Theta_d$ for divisors $d|t$. 
\end{lemma}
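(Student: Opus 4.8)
The plan is to combine three ingredients already available: the factorization $x^{2t}-2^t=\prod_{d\mid t}\Theta_d$ from \Cref{subse.rts}, the near-irreducibility of the $\Theta_d$ recorded in \Cref{pr.theta-irred}, and the self-symmetry $R(t)\sim -R(t)$ of \Cref{cor.rminr}. The strategy is to first pin down the possible $\mathbb{Q}$-irreducible factors of $c_t(x)$, and then use the symmetry to collapse the only factors that could a priori spoil the statement.

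First I would record the structural facts about $c_t$. Since $R(t)$ has integer entries, $c_t$ is monic with integer coefficients of degree $2^t$, and because the minimal polynomial $m_t(x)=x^{2t}-2^t$ of \Cref{eq:26} divides $c_t$ while having distinct roots, every one of the $2t$ roots $\delta_i$ of $m_t$ occurs in $c_t$ (with multiplicity $\ge 1$). Every $\mathbb{Q}$-irreducible factor of $c_t$ is then the minimal polynomial of some $\delta_i$, hence divides $m_t$, hence is one of the irreducible factors of $m_t$ described in \Cref{pr.theta-irred}: namely $\Theta_d$ for those $d\mid t$ with $\Theta_d$ irreducible, together with the two conjugate pieces $P_d(x)$ and $P_d(-x)$ for those $d\mid t$ with $4$ exactly dividing $d$ (where $\Theta_d=P_d(x)P_d(-x)$). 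Thus $c_t(x)=\prod_{\Theta_d\ \mathrm{irred}}\Theta_d(x)^{m_d}\cdot\prod_{4\,\|\,d}P_d(x)^{a_d}P_d(-x)^{b_d}$, with all exponents $\ge 1$, and it remains only to show $a_d=b_d$ in the split case.

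This is where the symmetry enters, and it is the one genuinely non-formal point. Each $\Theta_d$ is an \emph{even} polynomial: its root multiset $\sqrt2\,\Delta_d^{-2}$ is stable under negation because $(-\omega)^2=\omega^2$, and $\deg\Theta_d=2\varphi(d)$ is even, so $\Theta_d(-x)=\Theta_d(x)$; consequently the substitution $x\mapsto -x$ fixes each irreducible $\Theta_d$ and swaps $P_d(x)\leftrightarrow P_d(-x)$. On the other hand, writing $c_t(-x)=\det\!\big(-xI-R(t)\big)=(-1)^{2^t}\det\!\big(xI+R(t)\big)=\det\!\big(xI+R(t)\big)=c_{-R(t)}(x)$ (using that $2^t$ is even), and invoking \Cref{cor.rminr} to get $c_{R(t)}=c_{-R(t)}$, we conclude that $c_t$ is even, $c_t(-x)=c_t(x)$. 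Comparing the two factorizations of $c_t$ under $x\mapsto -x$ then forces $a_d=b_d$ for every such $d$, whence $P_d(x)^{a_d}P_d(-x)^{b_d}=\Theta_d(x)^{a_d}$ and $c_t=\prod_{d\mid t}\Theta_d^{m_d}$, a product of $\Theta_d$'s as asserted. The main obstacle is precisely this split case $4\,\|\,d$: a bare rationality argument leaves the multiplicities of the two Galois orbits $P_d(x)$ and $P_d(-x)$ unrelated, and it is exactly the conjugacy $R(t)\sim -R(t)$ of \Cref{cor.rminr} that equates them; everything else is bookkeeping.
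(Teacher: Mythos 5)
Your proposal is correct and takes essentially the same route as the paper's proof: both constrain the irreducible factors of $c_t(x)$ using the annihilating polynomial $x^{2t}-2^t$ together with \Cref{pr.theta-irred}, and then invoke the conjugacy $R(t)\sim -R(t)$ of \Cref{cor.rminr} to force equal multiplicities for the paired factors $P_d(x)$ and $P_d(-x)$ in the split case $4\,\|\,d$. You merely spell out the bookkeeping (evenness of $\Theta_d$, hence of $c_t$, and the comparison of factorizations under $x\mapsto -x$) that the paper's terser argument leaves implicit.
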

\begin{proof}
  We already know that $R(t)$ is annihilated by $x^{2t}-2^t$, so its characteristic polynomial will be a product of irreducible factors of the latter. By \Cref{pr.theta-irred}, the conclusion follows from the fact that any two irreducible factors $P$, $Q$ of the characteristic polynomial related by $Q(x)=P(-x)$ have equal exponents because $R(t)$ is conjugate to $-R(t)$ (\Cref{cor.rminr}).
\end{proof}

The roots of $\Theta_d$ are simply those of $\Phi_d(x^2)$ scaled by $\sqrt 2$. In turn, since $d$ divides $t$, the sum of $n^{th}$ powers of the roots of $\Phi_d(x^2)$ equals the sum of $gcd(n,2t)^{th}$ powers. In conclusion:

\begin{lemma}\label{le.div2t}
  It suffices to prove \Cref{eq:34} for $n|2t$. \qedhere
\end{lemma}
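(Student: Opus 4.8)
The plan is to exploit the fact that the eigenvalues of $R(t)$ are $2t$-th roots of unity scaled by $\sqrt 2$, so that $\mathrm{tr}\; R(t)^n$ factors as a power of $2$ times a power sum of roots of unity which is essentially periodic in $n$; the divisor $gcd(n,2t)$ of $2t$ then serves as a ``representative'' of $n$.

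First I would record the precise consequence of \Cref{le.prod-theta}. Writing the eigenvalues of $R(t)$ as $\delta_i=\sqrt 2\,\chi_i$ with $\chi_i$ ranging, with multiplicity, over the roots of the polynomials $\Phi_d(x^2)$ for $d\mid t$, the multiset $\{\chi_i\}$ is Galois-stable, being the root multiset of a product of the rational polynomials $\Phi_d(x^2)$. Consequently
\[
  \mathrm{tr}\; R(t)^n=\sum_i\delta_i^n=2^{n/2}S(n),\qquad S(n):=\sum_i\chi_i^n,
\]
and the observation recorded just before the lemma — that for each $d\mid t$ the sum of $n$-th powers of the roots of $\Phi_d(x^2)$ agrees with the sum of their $gcd(n,2t)$-th powers — yields $S(n)=S(g)$ with $g:=gcd(n,2t)$. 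At bottom this is the standard periodicity $c_k(n)=c_k(gcd(n,k))$ of Ramanujan sums applied to each Galois orbit of $\{\chi_i\}$, all of whose orders divide $2t$.

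Next I would convert this into a relation between the traces at $n$ and at the divisor $g\mid 2t$. Since $g\mid n$, the integers $n$ and $g$ have the same parity: if $n$ is odd then $g$ is odd, while if $n$ is even then $v_2(g)=\min(v_2(n),v_2(t)+1)\ge 1$ forces $g$ even (here $v_2$ denotes the $2$-adic valuation). Hence $(n-g)/2$ is a non-negative integer and
\[
  \mathrm{tr}\; R(t)^n=2^{n/2}S(g)=2^{(n-g)/2}\cdot 2^{g/2}S(g)=2^{(n-g)/2}\,\mathrm{tr}\; R(t)^g.
\]
Thus once \Cref{eq:34} is known for every divisor of $2t$ it will follow for all $n$, provided the two branches of \Cref{eq:34} at $n$ match those at $g$.

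The main work — and the only genuinely delicate point — is this $2$-adic bookkeeping. I would first note $gcd(g,t)=gcd(n,t)$ (both equal $\gcd$ of $n$ and $t$, since $t\mid 2t$). Then a short computation gives
\[
  v_2\!\left(\tfrac{n}{gcd(n,t)}\right)=v_2(n)-\min(v_2(n),v_2(t)),\qquad
  v_2\!\left(\tfrac{g}{gcd(g,t)}\right)=\min(v_2(n),v_2(t)+1)-\min(v_2(n),v_2(t)),
\]
and both are positive if and only if $v_2(n)>v_2(t)$; hence ``$n/gcd(n,t)$ even'' and ``$g/gcd(g,t)$ even'' are equivalent. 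In the even case \Cref{eq:34} for $g$ gives $\mathrm{tr}\; R(t)^g=2^{g/2+gcd(g,t)}$, whence the displayed relation yields $\mathrm{tr}\; R(t)^n=2^{n/2+gcd(g,t)}=2^{n/2+gcd(n,t)}$, exactly the first branch for $n$; in the odd case \Cref{eq:34} for $g$ gives $0$, hence $\mathrm{tr}\; R(t)^n=0$, the second branch. The purely odd values of $n$ are in any event already settled by \Cref{le.oddn}, so only the reduction for even $n$ is needed, and this completes the reduction to $n\mid 2t$.
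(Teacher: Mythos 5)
Your proof is correct and takes essentially the same route as the paper: factor out $2^{n/2}$ and invoke the periodicity of power sums over the Galois-stable multiset of $2t$-th roots of unity coming from \Cref{le.prod-theta} to replace $n$ by $\gcd(n,2t)$. The paper treats the lemma as an immediate consequence of the two observations preceding its statement; your explicit $2$-adic verification that the branches of \Cref{eq:34} at $n$ and at $\gcd(n,2t)$ agree, and that $\gcd(\gcd(n,2t),t)=\gcd(n,t)$, simply fills in bookkeeping the paper leaves to the reader.
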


There are thus two cases: $n$ divides $t$ or it doesn't, but $\frac n2$ does. The easy half is

\begin{proposition}\label{pr.ndivt}
If $n|t$ then $\mathrm{tr}\; R(t)^n=0$ and hence \Cref{eq:34} holds. 
\end{proposition}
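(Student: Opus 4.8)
The plan is to reduce $\mathrm{tr}\;R(t)^n$ to the trace of the Hadamard-type matrix $M(n)$ and then kill that trace using its tensor-product structure. The key observation is that the hypothesis $n\mid t$ makes the residue $a=t\ (\mathrm{mod}\ n)$ equal to zero, so that $b=n-a=n$ and the index $q$ in \Cref{le.how-tr-bis} ranges over the single value $q=0$ (since $0\le q\le 2^a-1=0$). The entries \Cref{eq:35} thus degenerate to $(r,r)$ for $0\le r\le 2^n-1$, and \Cref{le.how-tr-bis} gives $\mathrm{tr}\;R(t)^n=\sum_{r=0}^{2^n-1}M(n)_{r,r}=\mathrm{tr}\;M(n)$.

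Because \Cref{le.how-tr-bis} is stated only for $n<t$, I would treat the boundary case $n=t$ directly from \Cref{le.how-tr}: there $2^{t-n}=1$, so the prescription ``index $2^{t-n}k \bmod (2^n-1)$ in the $k$th column'' selects precisely the diagonal entry $M(t)_{k,k}$, and adding the lower-right corner recovers $\sum_{k=0}^{2^t-1}M(t)_{k,k}=\mathrm{tr}\;M(t)$ once more. In either case the computation collapses to $\mathrm{tr}\;M(n)$, which vanishes: writing $M(n)=H^{\otimes n}$ with $H=\left(\begin{smallmatrix}1&1\\1&-1\end{smallmatrix}\right)$, the trace of a tensor power is $(\mathrm{tr}\;H)^n=0^n=0$. (Equivalently, $M(n)_{k,k}=(-1)^{w(k)}$ with $w(k)$ the number of $1$'s in the binary expansion of $k$, so that $\sum_k(-1)^{w(k)}=(1-1)^n=0$.)

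Finally I would note that $n\mid t$ forces $\gcd(n,t)=n$, hence $n/\gcd(n,t)=1$ is odd, so the ``otherwise'' branch of \Cref{eq:34} reads exactly $\mathrm{tr}\;R(t)^n=0$; this is what we just proved, completing the verification of \Cref{eq:34} in this case. I do not anticipate a genuine obstacle: the argument is a one-step degeneration of the explicit trace formula, and the only point demanding care is the boundary value $n=t$, where \Cref{le.how-tr-bis} does not formally apply and one must fall back on \Cref{le.how-tr} instead.
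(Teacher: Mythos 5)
Your proposal is correct and is essentially the paper's own proof: both reduce $\mathrm{tr}\,R(t)^n$ to $\mathrm{tr}\,M(n)$ via the entry-selection lemma and then observe that this trace vanishes, being the $n^{th}$ power of the trace of $\left(\begin{smallmatrix}1&\phantom{-}1\\1&-1\end{smallmatrix}\right)$. The only (harmless) difference is that the paper invokes \Cref{le.how-tr} uniformly for all $1\le n\le t$, noting that $n\mid t$ gives $n\mid t-n$ and hence $2^n-1\mid 2^{t-n}-1$, so the selected entry in each column $k$ is the diagonal one; this makes your case split (\Cref{le.how-tr-bis} for $n<t$, falling back on \Cref{le.how-tr} for $n=t$) unnecessary.
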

\begin{proof}
  Indeed, in that case $n$ divides $t-n$ and hence
  \begin{equation*}
    2^n-1\ |\ 2^{t-n}-1. 
  \end{equation*}
  \Cref{le.how-tr} then says that $\mathrm{tr}\; R(t)^n$ is precisely the trace of $M(n)$, which is zero, being the $n^{th}$ power of the trace of
  $ \begin{pmatrix}
    1&\phantom{-}1\\
    1&-1
  \end{pmatrix}.
  $
\end{proof}

As for the case $n\not |\;\; t$, we then have $gcd(n,t)=\frac n2$ (since at any rate we are assuming that $n$ divides $2t$) and hence the desired conclusion \Cref{eq:34} reads
\begin{equation*}
  \mathrm{tr}\;R(t)^n=2^n. 
\end{equation*}
\Cref{le.how-tr} then equates this to proving

\begin{lemma}\label{le.all1}
Let $0\le n\le t$ be a divisor of $2t$ but not $t$. Then, for $0\le k\le 2^n-2$, the entry with index $2^{t-n}k$ modulo $2^n-1$ in the $k^{th}$ column of $M(n)$ is $1$.   
\end{lemma}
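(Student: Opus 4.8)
The plan is to reduce the claim to a parity statement about the binary expansions of the row and column indices. First I would record the standard description of the Hadamard matrix entries: writing $i=\sum_{s}i_s2^s$ and $j=\sum_s j_s2^s$ with bits indexed by $s\in\bZ/n$, one has
\[
  M(n)_{i,j}=(-1)^{\langle i,j\rangle},\qquad \langle i,j\rangle:=\sum_{s}i_sj_s\bmod 2,
\]
which follows at once by expanding the $n$-fold tensor power defining $M(n)$ factor by factor. With $r:=2^{t-n}k\bmod(2^n-1)$ the content of the lemma is thus exactly that $\langle r,k\rangle$ is even.

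The key point is to identify the relevant exponent. Since $n\mid 2t$ but $n\nmid t$, the quotient $2t/n$ is odd; this forces $n$ to be even and $t\equiv n/2\pmod n$. As observed after \Cref{le.how-tr}, modulo $2^n-1$ the exponent of $2$ may be reduced mod $n$, so that $2^{t-n}\equiv 2^{t\bmod n}=2^{n/2}\pmod{2^n-1}$.

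Next I would invoke the elementary fact that multiplication by $2$ on the residues $\{0,\dots,2^n-2\}$ of $\bZ/(2^n-1)$ is precisely the cyclic left shift of $n$-bit strings (the top bit re-enters at the bottom because $2^n\equiv 1$). Iterating $n/2$ times, the row index $r=2^{n/2}k\bmod(2^n-1)$ has bits $r_s=k_{(s-n/2)\bmod n}$, and therefore
\[
  \langle r,k\rangle=\sum_{s\in\bZ/n}k_{s-n/2}\,k_s=\sum_{u\in\bZ/n}k_u\,k_{u+n/2}\pmod 2,
\]
with all indices read mod $n$. The map $u\mapsto u+n/2$ is a fixed-point-free involution of $\bZ/n$, so each unordered pair $\{u,u+n/2\}$ contributes $k_uk_{u+n/2}$ twice; hence the whole sum is even and $M(n)_{r,k}=1$, as claimed.

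I expect the only real difficulty to be the bookkeeping in the middle two steps — pinning down $t\bmod n=n/2$ from the divisibility hypotheses and interpreting the row index as a cyclic shift. Once $a=n/2$ is in hand, the parity vanishes for a structural reason: a shift by half the word length matches each bit position with its antipode, so every product $k_uk_{u+n/2}$ is counted an even number of times. Equivalently, one may read the conclusion off \Cref{le.how-tr-bis} with $a=b=n/2$: the entry at $(q+r2^{n/2},\,q2^{n/2}+r)$ has exponent $\langle q,r\rangle+\langle r,q\rangle=2\langle q,r\rangle\equiv 0\pmod 2$.
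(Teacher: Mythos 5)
Your proof is correct and takes essentially the same route as the paper: both arguments first use the hypothesis ($2t/n$ odd, hence $n$ even and $t\equiv n/2 \pmod n$) to replace $2^{t-n}$ by $2^{n/2}$ modulo $2^n-1$, and both then exploit the fact that multiplication by $2^{n/2}$ modulo $2^n-1$ swaps the two halves of the $n$-bit index. Your parity computation $\langle r,k\rangle=\langle q,r\rangle+\langle r,q\rangle\equiv 0\pmod 2$ is precisely the coordinate form of the paper's block argument, which writes $M(n)=M\left(\frac n2\right)\otimes M\left(\frac n2\right)$ and notes that the entries in question are $(M_{ij})_{ji}=M\left(\frac n2\right)_{ij}M\left(\frac n2\right)_{ji}=1$ by symmetry of $M$.
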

\begin{proof}
  The hypothesis ensures that $t-n$ is of the form $(2s+1)\frac n2$ for some $s\ge 0$, and hence
  \begin{equation*}
    2^{t-n}k = 2^{sn}2^{\frac n2}k = 2^{\frac n2}k\quad\text{modulo}\quad 2^n-1
  \end{equation*}
  because $2^n-1$ divides $2^{sn}-1$. In short, it will be enough to assume that $t=\frac {3n}2$ thus substituting $2^{\frac n2}$ for $2^{t-n}$ in the statement. 

  We can now partition $M(n)$ into blocks $M_{ij}$ of size  $2^{\frac n2}\times 2^{\frac n2}$ for $0\le i,j\le 2^{\frac n2}-1$, each a copy of either $M(\frac n2)$ or $-M(\frac n2)$. The entries of interest in the first $2^{\frac n2}$ columns are
  \begin{itemize}
  \item the $0^{th}$ entry in the $0^{th}$ row of $M_{00}$;
  \item the $1^{st}$ entry in the $0^{th}$ row of $M_{10}$;
  \item $\cdots$;
  \item entry $(M_{2^{\frac n2}-1,0})_{0,2^{\frac n2}-1}$.
  \end{itemize}
  The pattern recurs: including the bottom right corner of $M(n)$, the entries we are after are precisely those of the form $(M_{ij})_{ji}$. That these are all $1$ follows from the recursive construction of $M(n)$ giving
  \begin{equation*}
    M(n)=M\left(\frac n2\right)\otimes M\left(\frac n2\right)
  \end{equation*}
  together with the fact that $M$ is symmetric. 
\end{proof}

\pf{th.char}
\begin{th.char}
  As discussed above, the result amounts to \Cref{eq:34}. In turn, the latter is taken care of by \Cref{le.div2t,pr.ndivt} and \Cref{le.all1}.
\end{th.char}

\Cref{th.char} states that the formula which gives the weights $w_n$ for the MRS function $(0,t)_n$ in terms of powers of the roots of the characteristic polynomial has simple coefficients which are all $\pm \frac{1}{2}.$  We say that the recursion for the weights of $(0,t)_n$ has {\it easy coefficients}.  We believe this remains true for {\it any} quadratic RS function and so state the following conjecture:

\begin{conjecture} [\bf Easy Coefficients Conjecture]
  The recursion for the weights of any rotation symmetric function has easy coefficients, attached to a multiset of algebraic integers.  At least in the quadratic case, these algebraic integers are the roots of the characteristic polynomial of a matrix computable by the method of \cite{csk-rec2}.
\end{conjecture}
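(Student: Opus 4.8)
The plan is to treat the concrete quadratic half of the conjecture first, reducing it to a single power-trace identity in the spirit of \Cref{eq:34}, and only then to comment on the general statement. Fix a quadratic collection $\cC$ and let $R(\cC)$ be the matrix furnished by the method of \cite{csk-rec2}, which specializes to the matrix $R(t)$ of \cite{us} when $\cC=\{(0,t)\}$; write $c_{\cC}$ for its characteristic polynomial. On one side, \Cref{th.quad} already gives a Galois-stable multiset $\{\alpha_i\}$ of algebraic integers of modulus $\sqrt2$ with $wt(f_{\cC,n}) = 2^{n-1}-\tfrac12\sum_i\alpha_i^n$, so the weight recursion \emph{does} have easy coefficients; what is at stake is the identification of the multiset $\{\alpha_i\}$ with the root multiset of $c_{\cC}$. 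The natural target identity is therefore $\mathrm{tr}\,R(\cC)^n=\sum_i\alpha_i^n=2^n-2\,wt(f_{\cC,n})$ for every $n$.

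First I would settle the distinct values. Both the Weil/shift expansion and the expansion coming from $R(\cC)$ carry all coefficients equal to $\pm\tfrac12$, so \Cref{le.same} forces the underlying \emph{sets} of distinct characteristic values, together with their attached signs, to coincide; this already shows that the distinct roots of the weight recursion are eigenvalues of $R(\cC)$. The remaining and harder step is to match \emph{multiplicities}, i.e. to prove that each value occurs in $c_{\cC}$ exactly as often as the Weil count \Cref{cor.g} demands. Here \Cref{le.same} is insufficient, and I would instead proceed as in \Cref{th.char}: compute $\mathrm{tr}\,R(\cC)^n$ directly and verify the displayed identity. Because a power-trace automatically sums over the whole eigenvalue multiset, such an identity pins down $c_{\cC}$ as a whole — both support and multiplicities — via Newton's identities, exactly as the reduction in \Cref{cor.rminr} exploited.

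The main obstacle is this trace computation. In the monomial case the argument rested on the rigidity of $R(t)$: its powers were controlled by the Hadamard matrices $M(n)$ through \Cref{le.desc-rt}, the odd-$n$ vanishing came from the central-symmetry property \Cref{le.centr}, and the even-$n$ evaluation reduced to the clean combinatorics of \Cref{le.all1}. For a sum of several monomials the matrix assembled by \cite{csk-rec2} no longer decomposes as a tensor power of the $2\times2$ block $\bigl(\begin{smallmatrix}1&1\\1&-1\end{smallmatrix}\bigr)$, so none of these three ingredients survives verbatim. I expect the crux to be the construction of a replacement for this Hadamard bookkeeping — for instance by realizing $R(\cC)$ as a structured modification of $\bigoplus_{(0,t)\in\cC}R(t)$ and controlling how the off-diagonal interaction contributes to $\mathrm{tr}\,R(\cC)^n$ — in such a way that the resulting power sums can still be compared, Galois orbit by Galois orbit, with the plateaued weight formula recorded after \Cref{th.rs-tr}.

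For the fully general, non-quadratic assertion I would argue more coarsely. \Cref{cor.sum-pows} already exhibits $wt(f_n)=2^n-\tfrac12\sum\alpha_i^n$ with the $\alpha_i$ eigenvalues of an integer matrix, so easy coefficients hold in the weak sense for \emph{every} RS function. Upgrading this to the clean form — main term $2^{n-1}$, characteristic values of a single modulus, and identification with a \cite{csk-rec2} matrix — requires, at a minimum, an analogue of the RS/trace weight coincidence \Cref{th.rs-tr} valid in degree larger than two, so that the clean Weil description \Cref{cor.no-beta} of the trace side can be transported to the RS side. The absence of such a coincidence in higher degree is, to my mind, the principal conceptual gap separating the established theorems from the conjecture, and I would regard producing it as the genuine content of any proof.
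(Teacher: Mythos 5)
This statement is a \emph{conjecture}, and the paper does not prove it in full; its own accounting, in the paragraph following the statement, is that the first sentence is proved by \Cref{cor.sum-pows}, that the second sentence is proved only for monomial quadratic functions $(0,t)_n$ by \Cref{th.char}, and that the general quadratic and higher-degree cases remain open. Your proposal reaches exactly the same verdict, and for the proven fragments it follows the same route the paper does: easy coefficients in the weak sense from \Cref{cor.sum-pows}, and identification of the characteristic values with the spectrum of the recursion matrix via a power-trace identity, which is precisely the reduction to \Cref{eq:34} underlying \Cref{th.char}. The obstruction you isolate for a general quadratic $\cC$ --- that $R(\cC)$ loses the Hadamard/tensor structure on which \Cref{le.desc-rt}, \Cref{le.centr} and \Cref{le.all1} rest, so that there is no analogue of the minimal-polynomial formula \Cref{eq:26} --- is the very obstruction the paper names as the reason the method of \Cref{th.char} does not extend. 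So your assessment of what is known and what is missing is accurate.

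Two caveats. First, your opening step is circular as written: you assert that ``the expansion coming from $R(\cC)$'' has coefficients $\pm\tfrac12$, but that is the content of the conjecture's second sentence, not a known input. What \cite{csk-rec2} supplies is only that the weights satisfy the recurrence attached to $R(\cC)$, i.e.\ an expansion $wt(f_{\cC,n})=\sum_j p_j(n)\delta_j^n$ over the \emph{distinct} eigenvalues with a priori unknown coefficients (polynomial in $n$ if eigenvalues repeat), valid for sufficiently large $n$; \Cref{le.same} as stated (constant coefficients, all $n\ge 0$) does not apply verbatim, though a confluent-Vandermonde variant would still place the Weil values $\alpha_i$ among the eigenvalues of $R(\cC)$. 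You effectively concede the point by locating all the real work in the trace identity, but the first paragraph should not read as if something were already established. Second, your claim that the higher-degree case ``requires, at a minimum'' an analogue of \Cref{th.rs-tr} is your own diagnosis, not the paper's: the paper offers only computational evidence there. It is a defensible reading --- transporting \Cref{cor.no-beta} from the trace side is indeed the paper's only mechanism for producing single-modulus characteristic values --- but it is a proposed route, not a gap either you or the paper fills.
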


The first sentence of the Easy Coefficients Conjecture is proved by \Cref{cor.sum-pows}.  The second sentence for MRS quadratic functions is proved by \Cref{th.char}.
Since there is no nice formula like  \Cref{eq:26} for the minimal polynomial for the square matrix $R$ (generalization of $R(t)$ for  $(0,t)_n$) corresponding to a general quadratic RS function, the method of proof of \Cref{th.char} does not apply. However, we are confident that the result is true for general quadratic RS functions. In fact, many computations suggest that the second sentence of the Easy Coefficients Conjecture is true for{ \it all} RS functions, of any degree.
 
Given a recursion relation of order $r$ for the weights $w_n$ of any RS Boolean function in $n$ variables, the standard way to compute the weights is to compute the needed initial $r$ weights and then use the recursion to find further desired weights.  It is well known that the runtime to find $w_n$ is $O(n2^n)$ (the extra $n$ comes from the operations needed to compute each entry in the truth table), so finding the initial conditions in this way is an exponential computation. Given a function for which the Easy Coefficients Conjecture is true, a much quicker way to find the initial weights is to compute the roots of the characteristic polynomial and then use the analog of \Cref{eq:27}.  The problem of computing all of the roots of a polynomial with integer coefficients has been studied for a long time.  The runtime for doing that is known to be $O(n(log^k~n))$ for some small integer $k,$ so an exponential computation in $n$ has been replaced by one that is nearly {\it linear} in $n.$




\bibliographystyle{plain}
\addcontentsline{toc}{section}{References}

\Addresses

\end{document}